\definecolor{Gray}{gray}{0.9}
\newcommand{\E}{\mathbb{E}}
\newcommand{\Var}{\mathrm{Var}}
\newcommand{\Cov}{\mathrm{Cov}}
\newcommand{\muError}{\hat{\mu}_{S_{-i}}(1,X_t) - \mu_0(1,X_t)}
\newcommand{\muErrorTilde}{\mu(1,X_t) - \mu_0(1,X_t)}
\newcommand{\invpError}{\frac{1}{\hat{e}_{S_{-i}}(X_t)} -\frac{1}{e_0(X_t)}}
\newcommand{\norm}[1]{\left\lVert#1\right\rVert}
\newtheorem{assumption}{Assumption}
\newtheorem{theorem}{Theorem}
\newtheorem{lemma}{Lemma}
\title{Semiparametric inference for impulse response functions using double/debiased machine learning}
\author{Daniele~Ballinari\thanks{The views, opinions, findings, and conclusions or recommendations expressed in this paper are strictly those of the authors. They do not necessarily reflect the views of the Swiss National Bank (SNB). The SNB takes no responsibility for any errors or omissions in, or for the correctness of, the information contained in this paper.} \thanks{We thank Nora Bearth, Jonathan Chassot and Victor Chernozhukov for helpful comments and suggestions. We are also grateful to Guido Kuersteiner for providing the data used in the empirical application.}\\
	Swiss National Bank\\
	\texttt{daniele.ballinari@snb.ch} \\
	\And
	Alexander~Wehrli\footnotemark[1]\footnotemark[2]\\
	Swiss National Bank\\
	\texttt{alexander.wehrli@snb.ch} \\
}
\date{\today\\[0.1cm]First version: November 18, 2024}
\begin{document}

\maketitle

\begin{abstract}
We introduce a double/debiased machine learning estimator for the impulse response function in settings where a time series of interest is subjected to multiple discrete treatments, assigned over time, which can have a causal effect on future outcomes. The proposed estimator can rely on fully nonparametric relations between treatment and outcome variables, opening up the possibility to use flexible machine learning approaches to estimate impulse response functions. To this end, we extend the theory of double machine learning from an \emph{i.i.d.} to a time series setting and show that the proposed estimator is consistent and asymptotically normally distributed at the parametric rate, allowing for semiparametric inference for dynamic effects in a time series setting. The properties of the estimator are validated numerically in finite samples by applying it to learn the impulse response function in the presence of serial dependence in both the confounder and observation innovation processes. We also illustrate the methodology empirically by applying it to the estimation of the effects of macroeconomic shocks.
\end{abstract}

\jelcodes{C14 \and C22 \and C51 \and C53 \and C55}
\keywords{Impulse response function \and Double machine learning \and Time series \and Average treatment effect}

\section{Introduction}
The estimation of the response of a time series to an external impulse is a common task in many scientific disciplines. For example, in economics, one might be interested in the reaction of the economy to a change in a central bank's monetary policy \citep{Angrist2018}. In the analysis of their trading costs, financial professionals are interested in the causal effect that their trades have on an asset's price \citep{Bouchaud2018}. In medicine, when administering a drug to a patient over time, one is interested in measuring its effect on the health of the patient \citep{Bica2020}. Readers are referred to the surveys in \cite{Runge2023} and \cite{Moraffah2021} for more examples.

The quantity of interest in these applications is commonly referred to as the \emph{impulse response function} (IRF). Ideally, the IRF measures the \emph{causal} effect that an action (or ``treatment'') has on the time series of interest. Recently, IRFs and ideas stemming from the causal inference framework have been related \citep{Jorda2023}. In particular, \cite{Rambachan2021} provided assumptions under which IRFs coincide with classical \emph{average treatment effects} (ATE) analyzed in the potential outcomes framework of causal inference \citep{Rubin1974, Robins1986}. Given this relation between ATE and IRFs, it seams natural to adopt estimation procedures from the causal inference literature for the problem of IRF estimation. Traditionally, IRFs have primarily been estimated by modelling the entire dynamic system under consideration, e.g. using vector autoregressive processes \citep{Sims1980}. The seminal work of \cite{Jorda2005} later showed how to directly estimate univariate conditional expectations using local projections \citep{LPNBER2024}. This approach compares the conditional expectation of an outcome variable, once conditional on a shock (treatment) and once conditional on no shock. As such, the local projection framework is directly related to the regression adjustment approach used for ATE estimation. The approach of \cite{Jorda2005} allows for some flexibility in the estimation of the impulse response function, as it can easily incorporate polynomial and interaction terms of the regressors, state dependence \citep{Goncalves2024}, or instrumental variables \citep{Stock2018}. More recently, \cite{Adamek2024} extended the local projection approach to high-dimensional settings using penalized local projections. \cite{JordaAusterity2016} and \cite{Angrist2018} are examples of applications that use propensity score weighting for the estimation of IRFs, another common estimation approach coming from the causal inference literature, additionally accommodating asymmetric and nonlinear responses.

While estimators for IRFs have become more flexible in recent years, they still require the definition of a functional form relating treatment and outcome variables. Here, we introduce an estimator for the IRF that can rely on fully nonparametric relations between treatment and outcome variables, opening up the possibility to use flexible machine learning approaches to estimate IRFs. We consider a setting where a single time series is subjected to a discrete treatment at multiple points in time and one is interested in the average (causal) effect that these treatments have at different prediction horizons. Inspired by the approach of \cite{Chernozhukov2018} for the \emph{i.i.d.} setting, the proposed estimator leverages the efficient influence function for the IRF in combination with cross-fitting, which makes the IRF estimation insensitive -- or, formally, (Neyman-)orthogonal -- to the biased estimation of the conditional expectation and treatment probability functions by machine learners. Moreover, the proposed estimator avoids over-fitting by using a cross-fitting procedure. These two ingredients, orthogonality and cross-fitting, eliminate regularization and over-fitting bias to which machine learning algorithms are prone to, an approach coined \emph{double/debiased machine learning} \citep{Chernozhukov2017AER}. Our theoretical results show that the proposed IRF estimator is consistent and asymptotically normally distributed at the parametric rate, building the basis for semiparametric inference for dynamic effects in time series settings. The problem studied in this paper relates to classical semiparametric estimation techniques for dependent data, e.g. using kernel \citep{Robinson1983Kernel,LiRacine2006}, series \citep{CHEN2015447,LeeRobinson2016} or general sieve estimators \citep{Chen1998,Chen2015}. Apart from such classical approaches, our contribution relates to a nascent but growing body of literature on the application of machine learning for semi- and nonparametric causal inference in time series problems. \cite{Lewis2021} e.g. provide a method of estimating and conducting inference for dynamic treatment effects, based on a sequential regression peeling process, focusing on a fixed-length time series panel setup. \cite{Bica2020} propose an estimator for time-varying treatment effects in the presence of hidden confounders, building on a recurrent neural network. Their setting also considers a fixed-length time series observed for multiple individuals. Using panel data, \cite{Paranhos2025} employs a generalized random forest to obtain locally linear impulse response functions. \cite{Hauzenberger2025} estimate impulse response functions using Bayesian neutral networks. \cite{Grecov2021} consider a multivariate time series setting where some units become and remain treated at a specific point in time. The counterfactual outcomes are obtained from a global forecasting model based on a recurrent neural network. Others have employed flexible machine learning approaches for causal discovery in time series; see, among others, \cite{Yin2022,Bussmann2021,Nauta2019}.

The rest of the paper is organized as follows. Section \ref{sec:identification} sets up the problem. Section \ref{sec:estimation} presents the double machine learning (DML) estimator and our main theoretical results on its asymptotic properties. Proofs are relegated to the \hyperref[app:proofs]{Appendix} for legibility. Section \ref{sec:practical_implementation} offers recommendations for the practical implementation of the estimator. Section \ref{sec:simulation} validates the developed theory in a simulation study and Section \ref{sec:lp} provides a comparison to local projections. Section \ref{sec:empirical} applies the proposed estimator to policy decisions in a macrodynamic setting and Section \ref{sec:conclusion} concludes.

\section{Notation and identification}\label{sec:identification}
\subsection{Notation}
Let $S^{(h)} = \{Z_t^{(h)} : t\in \mathcal{T}\}$ be stochastic processes generated from a distribution $\mathcal{P}$ with $Z_t^{(h)} = (Y_{t+h}, X_t, D_t)$, where $Y_{t+h}$ is a scalar real-valued random variable, $D_t$ a binary treatment variable, and $X_t \in \mathcal{X} \subseteq \mathbb{R}^n$ a random vector which may contain also lagged variables, including lagged values of $Y_t$ and $D_t$. If not specified otherwise, $\mathcal{T}$ is a collection of ordered time indices (also referred to as the index set) with cardinality $|\mathcal{T}|=T$. The quantity of interest is the impulse response function at horizon $h\in \mathbb{N}_0$ for a binary impulse variable $D_t$ on the outcome variable $Y_{t+h}$, defined as \citep{Rambachan2021}
\begin{equation}\label{eq:irf}
    \theta_0^{(h)}=\E\left[\E[Y_{t+h}|D_t = 1, X_t] - \E[Y_{t+h}|D_t = 0, X_t]\right].
\end{equation}
We focus on the case where $\theta_0^{(h)}$ and the conditional first moments of the outcome and treatment random variables are time-invariant.
\begin{assumption}\label{ass:time_invariance}
    For all $s,t \in \mathcal{T}$ and $h \in \mathbb{N}_0$, the following holds.
    \begin{enumerate}
        \item \label{ass:constant_ate} The impulse response function is time-invariant, i.e. $\theta_0^{(h)} = \E[\E[Y_{t+h}|D_t = 1, X_t] - \E[Y_{t+h}|D_t = 0, X_t]] = \E[\E[Y_{s+h}|D_s = 1, X_s] - \E[Y_{s+h}| D_s = 0, X_s]]$.
        \item \label{ass:constant_nuisance} The conditional first moments of $Y_{t+h}$ and $D_t$ are time-invariant, i.e. $\mu_0(d, x, h) = \E[Y_{t+h}|D_t=d, X_t=x] = \E[Y_{s+h}|D_s=d, X_s=x]$ and $e_0(x) =\Pr(D_t=1|X_t=x) = \Pr(D_s=1|X_s=x)$.
    \end{enumerate}
\end{assumption}
While we present results for a binary treatment $D_t$, this can be generalized to multivariate discrete treatments \citep{Angrist2018}, which we will also do in our empirical application. In fact, discrete treatments can be interpreted as pairwise binary treatment comparisons, and as such, the theoretical results presented in the sequel extend directly to multivariate treatments. Moreover, even in settings where the treatment variable is continuous, one can obtain estimation results by discretizing the treatment of interest \citep[see, e.g.][]{Knaus2021}. We refer to $\Gamma_0 = (\mu_0(d, x, h), e_0(x))$ as \emph{nuisance functions}.

Much of traditional estimation of IRFs relies on regression adjustment, i.e., the estimation of $\theta_0^{(h)}$ as the average difference between $\mu_0(1,X_t,h)$ and $\mu_0(0,X_t,h)$ \citep{Cochran1968,Robins1986,Pearl2009,Jorda2005}.
However, regression adjustment estimators typically tend to be rather sensitive to small amounts of misspecification in the conditional expectation models. Alternatively, approaches using inverse probability weighting \citep{Rosenbaum1983, Tsiatis2006,Angrist2018} have been devised, but are also sensitive to misspecification of the propensity score models.
For the estimator presented in Section \ref{sec:estimation}, as in e.g. \cite{Chernozhukov2018}, we instead rely on the efficient influence function \citep{Robins1995, Hahn1998} to estimate the IRF, namely $g\left(Z_t^{(h)}, h; \Gamma_0\right) - \theta_0^{(h)}$ where
\begin{equation}\label{eq:dr_score}
\begin{aligned}
g\left(Z_t^{(h)}, h; \Gamma_0\right) &= \mu_0(1,X_t,h) - \mu_0(0,X_t,h) + \frac{D_t}{e_0(X_t)}(Y_{t+h} - \mu_0(1,X_t,h))\\
&\quad - \frac{1-D_t}{1-e_0(X_t)}(Y_{t+h} - \mu_0(0,X_t,h))
\end{aligned}
\end{equation}
and it can be shown that
\begin{equation*}
    \theta_0^{(h)} = \E\left[g\left(Z_t^{(h)}, h; \Gamma_0\right)\right].
\end{equation*}
An influence function measures how a small perturbation of the data affects an estimator. The \emph{efficient} influence function is the particular influence function that (among all regular estimators) achieves the lowest possible asymptotic variance allowed by the semiparametric model. Readers are referred to \cite{Hines2022}, \cite{Kennedy2023Review}, \cite{Fisher2021} and \cite{Tsiatis2006} for a review of influence functions and semiparametric theory. In contrast to regression adjustment or inverse probability weighting, an estimator relying on the above influence function is \emph{Neyman orthogonal} \citep{Neyman1959,Neyman1979,Chernozhukov2017AER}. Technically, the efficient influence function of the IRF is Neyman orthogonal since the (Gateaux) derivative of its expected value with respect to either nuisance function equals zero \citep[for a detailed discussion, see][]{Chernozhukov2018}. This property ensures that small deviations from the true nuisance functions have no first-order effect on the estimation of $\theta_0^{(h)}$. Loosely speaking, if the estimated nuisance functions are ``close enough'' to their true values, estimation errors only have a vanishing impact on the IRF estimator. While the theory presented in Section~\ref{sec:estimation} leverages this Neyman orthogonality property, it is worth mentioning that an estimator based on Equation \eqref{eq:dr_score} is also \emph{doubly robust}, in the sense that it remains consistent if only one of the nuisance functions is correctly specified. In the sequel, we use standard notations $O(\cdot)$ and $o(\cdot)$ to indicate rates of convergence for sequences. In particular, if $\{x_t\}_1^\infty$ is any real sequence, $\{a_t\}_1^\infty$ a sequence of positive real numbers, and there exists a finite constant $B$ such that $|x_t|/a_t\leq B$ for all $t$, we write $x_t=O(a_t)$. If $x_t/a_t$ converges to zero, we write $o(a_t)$. We use $\norm{\cdot}_q$ to denote the $L_q$-norm; e.g. we write $\norm{f}_q = \norm{f(Z)}_q = \left(\int \vert f(z) \vert^q \mathrm{d}\mathcal{P}(z)\right)^{1/q}$.

\subsection{Identification}
While the focus of this paper in on the estimation of the quantity introduced in Equation~\eqref{eq:irf}, we here briefly present assumptions under which $\theta_0^{(h)}$ conveys the interpretation of the average \emph{causal} effect that the binary impulse variable $D_t$ has on the outcome variable $Y_{t+h}$. Identification is formulated within the potential outcomes framework of causal inference \citep{Rubin1974, Robins1986}. Let $Y_{t+h}(d)$ be the potential outcome, i.e., the random variable one would observe at time $t+h$ if the treatment at time $t$ would have been $D_t=d$. The following assumption ensures identification of the causal effect.
\begin{assumption}[\cite{Angrist2018,Rambachan2021}]\label{ass:identification}
    For all $t \in \mathcal{T}$ and  $h \in \mathbb{N}_0$ the following holds.
    \begin{enumerate}
        \item\label{ass:identification_cia} The potential outcomes are conditionally independent of the treatment, i.e. \\$Y_{t+h}(1), Y_{t+h}(0) \perp D_t \vert X_t$.
        \item\label{ass:identification_sutva} The observed outcome is $Y_{t+h} = D_t Y_{t+h}(1) + (1-D_t) Y_{t+h}(0)$.
        \item\label{ass:identification_overlap} For all $x\in \mathcal{X}$ it holds that $\eta < e_0(x) < 1-\eta$, for some $0 < \eta < 1$.
    \end{enumerate}
\end{assumption}
Assumption \ref{ass:identification}.\ref{ass:identification_cia} requires conditional independence between the treatment at time $t$ and the potential outcomes. Notably, it is not necessary for $D_t$ to be conditionally independent from future treatment assignments. However, in the case where $D_t \not\perp D_{t+1}, \dots, D_{t+h} \vert X_t$, the identified effect corresponds to the effect of a treatment including potential future treatments caused by $D_t$ \citep{Jorda2023}. Assumption \ref{ass:identification}.\ref{ass:identification_overlap} imposes that at each point in time, the treatment assignment is not deterministic. In other words, there are no situations in which either $D_t=1$ or $D_t=0$ with (conditional) probability of one. In essence, these assumptions require the treatment variable of interest (or a sufficiently informative proxy) to be observed, and a set of control variables to be available such that the treatment assignment is (conditionally) as good as random. In macroeconomics, this requirement corresponds to identification strategies that rely on constructed shocks, such as narrative monetary policy shocks \citep[e.g.,][]{Romer2004, Ramey2016}, or ``direct causal inference'' approaches based on externally constructed measures of structural shocks \citep{Nakamura2018}. For approaches that use continuous shock measures, the shock can also be incorporated into our framework by discretizing them. This is however only required to define the treatment variable within our setup, not by the identification strategy itself.
For a more rigorous discussion of the identification of treatment effects with time-dependent data -- and in particular for the connection between Assumption \ref{ass:identification} and classical macroeconomic shocks -- we refer to \cite{Rambachan2021}. The following theorem finally establishes identification of the average treatment effect.
\begin{theorem}\label{thm:identification}
    Under Assumptions \ref{ass:time_invariance} and \ref{ass:identification} the ATE is identified as
    \begin{equation*}
        \theta_0^{(h)} = \E\left[Y_{t+h}(1)-Y_{t+h}(0)\right].
    \end{equation*}
\end{theorem}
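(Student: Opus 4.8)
The plan is to establish, for each fixed $d\in\{0,1\}$ and each $t\in\mathcal{T}$, the pointwise identity
$$\E[Y_{t+h}\mid D_t=d,X_t] = \E[Y_{t+h}(d)\mid X_t] \quad\text{(a.s.)},$$
then average over $X_t$ via the law of iterated expectations, and finally subtract the $d=1$ and $d=0$ versions and use linearity of expectation. This is the classical Rosenbaum--Rubin / Robins regression-adjustment identification argument, transplanted to the time-series indexing.

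First I would note that Assumption~\ref{ass:identification}-\ref{ass:identification_overlap} guarantees $\eta < e_0(x) < 1-\eta$ for all $x\in\mathcal{X}$, so that the conditioning events underlying $\E[Y_{t+h}\mid D_t=d,X_t]$ occur with positive conditional probability and the conditional expectations appearing in the definition of $\theta_0^{(h)}$ in Equation~\eqref{eq:irf} are well-defined; Assumption~\ref{ass:time_invariance} is what makes $\theta_0^{(h)}$, $\mu_0$, and $e_0$ free of the time index, so the statement may be phrased for a generic $t$. Next, on the event $\{D_t=d\}$, Assumption~\ref{ass:identification}-\ref{ass:identification_sutva} yields $Y_{t+h}=Y_{t+h}(d)$, hence $\E[Y_{t+h}\mid D_t=d,X_t]=\E[Y_{t+h}(d)\mid D_t=d,X_t]$; and Assumption~\ref{ass:identification}-\ref{ass:identification_cia}, the conditional independence $Y_{t+h}(1),Y_{t+h}(0)\perp D_t\mid X_t$, allows dropping the conditioning on $D_t$, giving $\E[Y_{t+h}(d)\mid D_t=d,X_t]=\E[Y_{t+h}(d)\mid X_t]$. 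Taking expectations over $X_t$ and applying the tower property gives $\E\big[\E[Y_{t+h}\mid D_t=d,X_t]\big]=\E[Y_{t+h}(d)]$ for $d\in\{0,1\}$, and substituting into \eqref{eq:irf} together with linearity of expectation yields $\theta_0^{(h)}=\E[Y_{t+h}(1)]-\E[Y_{t+h}(0)]=\E[Y_{t+h}(1)-Y_{t+h}(0)]$.

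I do not anticipate a genuine obstacle: the content is entirely the standard unconfoundedness-plus-overlap identification chain, and the serial dependence in $S$ plays no role here since the argument is purely cross-sectional at a single time $t$. The only care needed is bookkeeping---verifying that overlap makes each conditional expectation well-defined, and that time-invariance is precisely what renders the statement $t$-free. If one wished to be fully rigorous about the step ``$Y_{t+h}=Y_{t+h}(d)$ on $\{D_t=d\}$,'' one would write it through indicator functions inside the conditional expectation, i.e. $\E[\ind\{D_t=d\}Y_{t+h}\mid X_t]=\E[\ind\{D_t=d\}Y_{t+h}(d)\mid X_t]$ and then divide by $\Pr(D_t=d\mid X_t)$, but this is routine and adds nothing conceptual.
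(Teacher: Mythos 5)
Your proposal is correct and follows essentially the same argument as the paper: the chain of iterated expectations, consistency/SUTVA to replace $Y_{t+h}$ by $Y_{t+h}(d)$ on $\{D_t=d\}$, and conditional independence to drop the conditioning on $D_t$, with time-invariance making the statement $t$-free (the paper simply runs the chain starting from $\E[Y_{t+h}(d)]$ rather than from the conditional mean). Your extra remark on overlap guaranteeing that the conditional expectations are well-defined is a harmless refinement the paper leaves implicit.
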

As pointed out by \cite{Chernozhukov2018}, the DML estimator will yield unbiased results also in a setting where the causal identification assumptions presented in this section fail to hold. In this case however, the estimated IRF has to be interpreted as a prediction difference rather than a causal effect.

\section{Estimation}\label{sec:estimation}

This section outlines the estimator for $\theta_0^{(h)}$ and its asymptotic properties when the nuisance functions are estimated with flexible, nonparametric machine learning algorithms. The estimator is developed in three steps. First, we provide results for the (hypothetical) case where the nuisance functions are known. In a second step, nuisance functions are estimated, but multiple independent stochastic processes generated from the same distribution are available. Lastly, we provide results for the case where nuisance functions have to be estimated and only a single stochastic process is available.

\subsection{An oracle estimator}\label{subsec:oracle}
In case the nuisance functions are known, we can estimate the effect of interest by simply averaging the stochastic processes $\mathcal{G}^{(h)} = \left\{ g\left(Z_t^{(h)}, h; \Gamma_0\right) : t \in \mathcal{T}\right\}$ over the index set. We refer to this estimator as the \emph{oracle estimator} for the IRF. The following assumption and theorem provide conditions under which the oracle estimator is asymptotically normally distributed.
\begin{assumption}\label{ass:oracle}
    For some $\beta>2$ and all $h \in \mathbb{N}_0$, the following conditions hold.
    \begin{enumerate}
        \item\label{ass:oracle_stationarity} The stochastic processes $\mathcal{G}^{(h)}$ are weakly stationary.
        \item The variance satisfies that $0 < V^{(h)}_0 = \lim_{T\to\infty} \Var\left[\frac{1}{\sqrt{T}} \sum_{t \in \mathcal{T}}  g\left(Z_t^{(h)}, h; \Gamma_0\right)\right]$.
        \item\label{ass:oracle_moments} $\mathcal{G}^{(h)}$ is uniformly $L_\beta$-bounded, i.e. $\sup_{t\in\mathcal{T}} \E\left[\left\vert g\left(Z_t^{(h)}, h; \Gamma_0\right)\right\vert^\beta\right] < \infty$.
        \item\label{ass:oracle_dependence} $\mathcal{G}^{(h)}$ is $\alpha$-mixing, with coefficients $\alpha(s)$, $s \in \mathbb{N}$, satisfying $\sum_{s=1}^\infty \alpha(s)^{(\beta-2)/\beta} < \infty$.
    \end{enumerate}
\end{assumption}
\begin{theorem}\label{thm:oracle}
    Let the oracle IRF estimator be given by
    \begin{equation*}
        \tilde\theta^{(h)} = \frac{1}{T} \sum_{t\in\mathcal{T}} g\left(Z_t^{(h)}, h; \Gamma_0\right).
    \end{equation*}
    Under Assumptions \ref{ass:time_invariance} and \ref{ass:oracle}, we have that
    \begin{equation*}
        \sqrt{T} (\tilde\theta^{(h)} - \theta_0^{(h)}) \overset{\mathrm{d}}{\to} \mathcal{N}\left(0, V^{(h)}_0 \right),
    \end{equation*}
    with $V^{(h)}_0 = \sum_{s=-\infty}^{+\infty} \Cov\left[g\left(Z_t^{(h)}, h; \Gamma_0\right), g\left(Z_{t-s}^{(h)}, h; \Gamma_0\right)\right]$.
\end{theorem}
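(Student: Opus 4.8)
The plan is to read Theorem~\ref{thm:oracle} as a central limit theorem for a weakly dependent, weakly stationary sequence and to obtain it from a standard $\alpha$-mixing CLT after recentering the summands. First I would recenter: because $\theta_0^{(h)} = \E[g(Z_t^{(h)},h;\Gamma_0)]$ and, by Assumption~\ref{ass:time_invariance}, this mean does not depend on $t$, the sequence $\xi_t := g(Z_t^{(h)},h;\Gamma_0) - \theta_0^{(h)}$ is mean zero and inherits weak stationarity, the uniform $L_\beta$ bound, and the $\alpha$-mixing coefficients of $\mathcal{G}$ (subtracting a constant and applying a measurable map cannot enlarge mixing coefficients). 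Since $\sqrt{T}(\tilde\theta^{(h)} - \theta_0^{(h)}) = T^{-1/2}\sum_{t\in\mathcal{T}}\xi_t$, it suffices to prove a CLT for this normalized partial sum with the claimed variance.

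Next I would pin down the variance. By Davydov's covariance inequality for mixing sequences, $\lvert\Cov(\xi_t,\xi_{t-s})\rvert \le C\,\alpha(s)^{(\beta-2)/\beta}\,\bigl(\sup_t\norm{\xi_t}_\beta\bigr)^2$ for a universal constant $C$ and $\beta>2$; the mixing condition in Assumption~\ref{ass:oracle} then gives $\sum_{s\ge1}\lvert\Cov(\xi_0,\xi_s)\rvert<\infty$, so the autocovariances are absolutely summable (and all second moments exist since $\beta>2$). Writing $\Var\bigl(T^{-1/2}\sum_t\xi_t\bigr) = \sum_{\lvert s\rvert<T}(1-\lvert s\rvert/T)\,\Cov(\xi_0,\xi_s)$ and applying dominated convergence with the summable majorant $\lvert\Cov(\xi_0,\xi_s)\rvert$, this variance converges to $\sum_{s=-\infty}^{+\infty}\Cov(\xi_0,\xi_s) = \sum_{s=-\infty}^{+\infty}\Cov(g(Z_t^{(h)},h;\Gamma_0),g(Z_{t-s}^{(h)},h;\Gamma_0))$. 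By the variance condition in Assumption~\ref{ass:oracle} this limit equals $V_0^{(h)}$, now seen to be finite and strictly positive; this is the asymptotic variance in the statement.

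Finally I would invoke a central limit theorem for $\alpha$-mixing sequences. For strictly stationary summands this is Ibragimov's classical theorem; under the hypotheses here — weak stationarity plus a uniform $L_\beta$ bound for some $\beta>2$, together with $\sum_s\alpha(s)^{(\beta-2)/\beta}<\infty$ and a non-degenerate limiting variance — one appeals to the heterogeneous-array versions available in, e.g., White's or Doukhan's treatments of limit theory under mixing. These hypotheses are precisely what Assumption~\ref{ass:oracle} and the two previous steps supply, giving $T^{-1/2}\sum_t\xi_t \overset{d}{\to}\mathcal{N}(0,V_0^{(h)})$, which is the claim. I do not expect a deep obstacle: the substance lies in matching Assumption~\ref{ass:oracle} to the hypotheses of an off-the-shelf mixing CLT and in verifying the long-run-variance identity. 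The one subtlety is that only weak stationarity and a uniform (not identical) moment bound are assumed, so the strictly stationary Ibragimov statement does not apply verbatim and a heterogeneous-array version is needed; this is also why positivity of the long-run variance is imposed directly in Assumption~\ref{ass:oracle} rather than derived from the other conditions.
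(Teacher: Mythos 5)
Your proposal is correct and follows essentially the same route as the paper, whose proof is a one-line appeal to the CLT for $\alpha$-mixing, not-necessarily-stationary sequences of \cite{herrndorf1984}, whose hypotheses (uniform $L_\beta$ bound, $\sum_s \alpha(s)^{(\beta-2)/\beta}<\infty$, non-degenerate limiting variance) are exactly those of Assumption~\ref{ass:oracle}. Your additional steps -- Davydov's inequality plus dominated convergence to identify the limit of $\Var\bigl(T^{-1/2}\sum_t \xi_t\bigr)$ with the summed autocovariances -- simply make explicit the long-run-variance identity that the paper leaves implicit in the citation.
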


Given that $g(z, h; \Gamma)$ is a measurable function as long as the nuisance functions are measurable,  Assumption~\ref{ass:oracle}.\ref{ass:oracle_dependence} is satisfied if the stochastic processes $S^{(h)}$ are $\alpha$-mixing for some $\beta > 2$ \citep{Davidson2021}. This is however not necessary, and $S^{(h)}$ can exhibit less favorable dependence structures as long as $\mathcal{G}^{(h)}$ adheres to Assumptions \ref{ass:time_invariance} and \ref{ass:oracle}. Moreover, while the above assumptions are standard in the application of functional central limit theory, variations are possible that still lead to the desired asymptotics. For example, weak stationarity in Assumption \ref{ass:oracle}.\ref{ass:oracle_stationarity} can be relaxed to a constant mean, permitted that additionally $V^{(h)}_0 < \infty$ (see e.g. the discussion in \cite{Philipps1987}). Weak stationarity is however required in our setting in order to obtain a tractable estimator for $V^{(h)}_0$.

It is important to remark that the two assumptions \ref{ass:oracle}.\ref{ass:oracle_moments} and \ref{ass:oracle}.\ref{ass:oracle_dependence} represent an inherent trade-off. The more absolute moments of $\mathcal{G}^{(h)}$ are required to exist, the more dependence is acceptable in the stochastic processes to still reach asymptotic normality. For the sum in Assumption \ref{ass:oracle}.\ref{ass:oracle_dependence} to converge, we need $\alpha(s) = O(s^{-\phi})$ for some $\phi > \phi_0 = \beta/(\beta-2)$, i.e. the process needs to be $\alpha$-mixing of size $-\phi_0$. As $\beta\to\infty$ so that all moments are finite, the required mixing size $\phi_0\to 1$. Because the mixing coefficients also determine bounds for the (absolute) autocovariance function of the process (see \cite{Davidson2021}, Corollary 15.3), this directly implies that with all moments existing, $\Cov\left[g\left(Z_t^{(h)}, h; \Gamma_0\right), g\left(Z_{t-s}^{(h)}, h; \Gamma_0\right)\right] = O(s^{-1})$ for the assumptions to be satisfied.

\subsection{The double machine learning estimator with multiple independent stochastic processes}\label{subsec:estimator}
We now provide asymptotic results for the case where the nuisance functions are estimated and $K\geq 2$ independent stochastic processes generated from the same distribution $\mathcal{P}$ are available. Denote the individual stochastic processes as $S_i^{(h)} = \{Z^{(h)}_t : t \in \mathcal{T}_i\}$, where, without loss of generality, we assume $\vert \mathcal{T}_i \vert = T/K$ for all $i=1, ..., K$. The estimation procedure is outlined in Procedure \ref{proc:dml}.
\begin{procedure}
    \caption{DML estimator for the IRF with cross-fitting on multiple independent stochastic processes}\label{proc:dml}
    For each forecast horizon $h$, follow the subsequent procedure.
    \begin{enumerate}
    \item For each $i=1, ..., K$
    \begin{enumerate}
      \item Fit appropriate machine learners $\hat\Gamma_{S_{-i}^{(h)}} = (\hat{\mu}_{S_{-i}^{(h)}}(d, x,h)$, $\hat{e}_{S_{-i}^{(h)}}(x))$ on the sample $S_{-i}^{(h)} = \bigcup_{j = 1, j\neq i}^K S_j^{(h)}$.
      \item Compute the average of $g\left(z, h; \hat\Gamma_{S_{-i}^{(h)}}\right)$ on $S_i^{(h)}$ as
      \begin{equation*}
        \hat{\theta}^{(h)}_{S_i^{(h)}} = \frac{1}{|\mathcal{T}_i|} \sum_{t\in\mathcal{T}_i} g\left(Z_t^{(h)}, h; \hat\Gamma_{S_{-i}^{(h)}}\right).
      \end{equation*}
    \end{enumerate}
    \item Compute the IRF estimator at horizon $h$ as
    \begin{equation*}
        \hat{\theta}^{(h)} = \sum_{i=1}^{K} \frac{|\mathcal{T}_i|}{T} \hat{\theta}^{(h)}_{S_i^{(h)}}.
    \end{equation*}
    \end{enumerate}
\end{procedure}
Asymptotics for the DML estimator $\hat{\theta}^{(h)}$ from Procedure \ref{proc:dml} are obtained by introducing assumptions under which $\hat{\theta}^{(h)}$ has the same asymptotic distribution as the oracle estimator $\tilde\theta^{(h)}$ in Theorem~\ref{thm:oracle}. To this end, we impose standard assumptions on the convergence rates of the learners used in step 1. of Procedure~\ref{proc:dml}. In particular, we assume that the machine learners are consistent and that the product of the estimation errors decays fast enough.
\begin{assumption}\label{ass:risk_decay}
    Let the realization set be $\Xi_{T}^{(h)}$, which is a shrinking neighborhood of the true nuisance functions $\Gamma_0 = (\mu_0(d,x,h), e_0(x))$. Let $\{\Delta_T\}_{T\geq1}$ and $\{\delta_T\}_{T\geq1}$ be sequences of positive constants converging to zero.

    Define the statistical rates $r_{\mu,T} = \sup_{t\in\mathcal{T}}\sup_{\mu \in \Xi_{T}^{(h)}} \norm{\mu(D_t,X_t, h) - \mu_0(D_t,X_t, h)}_2$ and $r_{e,T} = \sup_{t\in\mathcal{T}}\sup_{e \in \Xi_{T}^{(h)}} \norm{e(X_t) - e_0(X_t)}_2$. Let $C$ be a fixed strictly positive constant. For all $i=1, ..., K$ and $h\in\mathbb{N}_0$, the following conditions hold.
    \begin{enumerate}
        \item The nuisance function estimators $\hat\Gamma_{S_{-i}^{(h)}}$ belong to $\Xi_{T}^{(h)}$ with probability at least $1-\Delta_T$.
        \item \label{ass:risk_decay_bounded} For $q>2$, we have $\sup_{t\in\mathcal{T}}\sup_{\mu \in \Xi_{T}^{(h)}} \norm{\mu(D_t,X_t, h) - \mu_0(D_t,X_t, h)}_q \leq C < \infty$ and $\sup_{t\in\mathcal{T}}\sup_{e \in \Xi_{T}^{(h)}} \norm{e(X_t) - e_0(X_t)}_q \leq C < \infty$.
        \item\label{ass:risk_decay_rate} $r_{\mu,T} \leq \delta_T$, $r_{e,T} \leq \delta_T$ and $r_{\mu,T} \cdot r_{e,T} \leq T^{-1/2} \delta_T$.
        \item\label{ass:risk_decay_prob}$\sup_{t\in\mathcal{T}}\sup_{e \in \Xi_{T}^{(h)}}\norm{e(X_t) -1/2}_\infty \leq 1/2 - \eta $~~for $ 0<\eta<1$.
        \item\label{ass:risk_decay_variance} $\sup_{t\in\mathcal{T}}\E\left[\left(Y_{t+h} - \mu_0(d,X_t,h)\right)^2 \vert X_t, D_t=d\right] \leq \epsilon^2_d < \infty$
    \end{enumerate}
\end{assumption}
The statistical rates in Assumption~\ref{ass:risk_decay} are defined in terms of uniform $L_2$-norms. Under the additional assumptions of strict stationarity of $S^{(h)}$ and measurability of the nuisance functions, these uniform norms would reduce to simple $L_2$-norms. Assumption~\ref{ass:risk_decay}.\ref{ass:risk_decay_bounded} requires that, with probability approaching one, the estimation errors are uniformly $L_q$-bounded for $q>2$. Assumption~\ref{ass:risk_decay}.\ref{ass:risk_decay_rate} requires the estimation errors to converge uniformly in $L_2$-norm to zero and their products to converge at least at the rate $\sqrt{T}$ with probability approaching one. Our assumptions on convergence rates are set up to accommodate the application of a broad range of machine learning estimators for the nuisance functions. There is a rich literature deriving convergence rates of machine learners under more strict conditions than used here. \cite{Wong2020}, for example, provide Lasso convergence rates for stochastic processes under the assumption of exact sparsity. For $\alpha$-mixing Gaussian processes, they find the $L_2$ convergence rate to be of order $O\left(A(T) \sqrt{\log{\mathrm{dim}(\mathcal{X})}/T} \right)$, where $A(T)=\sum_{s=0}^T\alpha(s)$. Assumption~\ref{ass:risk_decay}.\ref{ass:risk_decay_rate} is satisfied for $A(T) = o(T^{1/4})$, imposing a restriction on how fast the dependence in the data has to decay. We provide more references in Section \ref{subsubsec:nuisance_estimators}. Next, Assumption \ref{ass:risk_decay}.\ref{ass:risk_decay_prob} implies that the estimated propensity scores remain uniformly bounded away from zero and one with probability approaching one. Finally, Assumption \ref{ass:risk_decay}.\ref{ass:risk_decay_variance} requires that the conditional variance of the outcome variable is a bounded random variable.

We furthermore impose the following sequential conditional exogeneity condition.
\begin{assumption}\label{ass:X_richness}
    For all $t\in\mathcal{T}$ and $h\in\mathbb{N}_0$ we have that $\E[Y_{t+h} \vert X_t, D_t = d, \{Z^{(0)}_u : u \in \mathcal{T}, u < t\}] = \E[Y_{t+h} \vert X_t, D_t = d]$ and $\E[D_t \vert X_t, \{Z^{(0)}_u : u \in \mathcal{T}, u < t\}] = \E[D_t \vert X_t]$.
\end{assumption}
Assumption \ref{ass:X_richness} implies that the residuals $D_t - e_0(X_t)$ and $Y_{t+h} - D_t\mu_0(1,X_t, h) - (1-D_t)\mu_0(0,X_t, h)$ are mean independent of past information on $(Y_t, X_t, D_t)$. In line with standard assumptions in the literature \citep[e.g.][]{Semenova2023,Olea2024}, this in practice requires a rich enough set of control variables, which can also contain past values of $Y_t$ and $D_t$.

The following theorem finally establishes that the DML estimator for the IRF is asymptotically unbiased and normally distributed. In particular, the estimator retains the parametric $\sqrt{T}$ convergence rate. The proof is relegated to the \hyperref[app:proofs]{Appendix}. The main idea of the proof is to show that the IRF estimator using estimated nuisance functions converges to the oracle IRF estimator $\tilde\theta^{(h)}$, which itself is asymptotically normally distributed as shown in Theorem \ref{thm:oracle}, at rate $\sqrt{T}$.
\begin{theorem}\label{thm:ate}
    Let $S_i^{(h)} = \left\{Z_t^{(h)} : t \in \mathcal{T}_i\right\}$ for $i=1, ..., K\geq 2$ and $h\in\mathbb{N}_0$ be independent stochastic processes generated from the same distribution $\mathcal{P}$ such that $|\mathcal{T}_i|=T/K$ for all $i$. Define the estimator as
    \begin{align*}
        \hat{\theta}^{(h)} = \sum_{i=1}^{K}\frac{|\mathcal{T}_i|}{T} \hat{\theta}^{(h)}_{S_i^{(h)}} \quad \text{with}\\
        \hat{\theta}^{(h)}_{S_i^{(h)}} = \frac{1}{|\mathcal{T}_i|} \sum_{t \in \mathcal{T}_i} g\left(Z_t^{(h)}, h; \hat\Gamma_{S_{-i}^{(h)}}\right),
    \end{align*}
    where the nuisance functions $\hat\Gamma_{S_{-i}^{(h)}}=(\hat\mu_{S_{-i}^{(h)}}, \hat{e}_{S_{-i}^{(h)}})$ are estimated on $S_{-i}^{(h)} = \bigcup_{j = 1, j\neq i}^K S_j^{(h)}$. Then under Assumptions \ref{ass:time_invariance} and \ref{ass:oracle} - \ref{ass:X_richness} it holds that
    \begin{equation*}
        \sqrt{T} (\hat\theta^{(h)} - \theta_0^{(h)}) \overset{\mathrm{d}}{\to} \mathcal{N}\left(0, V^{(h)}_0 \right),
    \end{equation*}
    with $V^{(h)}_0$ as in Theorem \ref{thm:oracle}.
\end{theorem}

\subsection{The double machine learning estimator with one stochastic process}\label{subsec:dr_single}
In practice, multiple independent stochastic processes generated from the same distribution are often not available. Instead, a sample from a single stochastic process $S^{(h)}$ is observed. In the same spirit as approaches used for cross-validating models with dependent data \citep{Bergmeir2012, Racine2000} and the cross-fitting approach proposed by \cite{Semenova2023} for panel data, we split the single stochastic process into sub-sequences, removing a block of $k_T$ coordinates where the process is split. The resulting sub-sequences replace the independent stochastic processes $S_i^{(h)}$ from Procedure \ref{proc:dml}. To this end, let $\{\mathcal{T}_i : i=1, \dots, K\}$ be a partition of the index set $\mathcal{T}$ such that the order of the time indices within each $\mathcal{T}_i$ and across all subsets follows the original order in $\mathcal{T}$. Without loss of generality, we continue to assume that $\vert \mathcal{T}_i \vert = T / K$ for all $i$. The estimation procedure is described in Procedure \ref{proc:dr_single}.
\begin{procedure}[H]
    \caption{DML estimator for the IRF with cross-fitting on a single stochastic processes}\label{proc:dr_single}
    For each forecast horizon $h$, follow the subsequent procedure.
    \begin{enumerate}
      \item For each $i=1, ..., K$
      \begin{enumerate}
        \item Define $\mathcal{T}_{-i} = \{t : t \in \mathcal{T} \wedge \left(t < \inf(\mathcal{T}_i) - k_T \vee t > \sup(\mathcal{T}_i) + k_T\right)\}$
        \item Fit appropriate machine learners $\hat\Gamma_{S_{-i}^{(h)}} = (\hat\mu_{S_{-i}^{(h)}}(d,x,h), \hat{e}_{S_{-i}^{(h)}}(x))$ on the sample $S_{-i}^{(h)} = \{Z^{(h)}_t : t \in \mathcal{T}_{-i}\}$.
        \item Compute the average of $g\left(z, h; \hat\Gamma_{S_{-i}^{(h)}}\right)$ on $S_i^{(h)}$ as
        \begin{equation*}
            \hat{\theta}^{(h)}_{S_i^{(h)}} = \frac{1}{|\mathcal{T}_i|} \sum_{t\in\mathcal{T}_i} g\left(Z_t^{(h)}, h; \hat\Gamma_{S_{-i}^{(h)}}\right).
        \end{equation*}
      \end{enumerate}
    \item Compute the IRF estimator at horizon $h$ as
    \begin{equation*}
        \hat{\theta}^{(h)} = \sum_{i=1}^{K} \frac{|\mathcal{T}_i|}{T} \hat{\theta}^{(h)}_{S_i^{(h)}}.
    \end{equation*}

    \end{enumerate}
\end{procedure}
An illustration of the cross-fitting approach for $K=4$ is given in Figure~\ref{fig:cross_fitting}. The time series is divided into four sub-sequences. The nuisance functions are estimated on the union of sub-sequences $S_1$, $S_2$ and $S_3$, where $k_T$ coordinates are removed at the boundaries of $S_3$. Sub-sequence $S_3$ is used to compute $\hat\theta^{(h)}_{S_3}$. This procedure is repeated such that each sub-sequence is used once to conduct inference.
\begin{figure}[htbp]
    \centering
    \caption{Illustration of the cross-fitting procedure}
    \begin{tikzpicture}[scale=1.5]
      % Blocks
      \draw[fill=gray!30] (0,0) rectangle (1.5,1);
      %\draw[fill=gray!10] (1.5,0) rectangle (2,1);
      \draw[fill=gray!30] (1.5,0) rectangle (3.0,1);
      \draw[fill=blue!30] (3.0,0) rectangle (4.5,1);
      \draw[fill=gray!30] (4.5,0) rectangle (6.0,1);
      \fill[pattern=north east lines] (2.5,0) rectangle (3.0,1);
      \fill[pattern=north east lines] (4.5,0) rectangle (5,1);

      % Labels
      \node at (0.75,0.5) {$S_1^{(h)}$};
      \node at (2.25,0.5) {$S_2^{(h)}$};
      \node at (2.75,-0.25) {$k_T$};
      \node at (3.75,0.5) {$S_3^{(h)}$};
      \node at (4.75,-0.25) {$k_T$};
      \node at (5.25,0.5) {$S_4^{(h)}$};

      \draw[<->,thick] (2.5,-0.1) -- (3.0,-0.1);
      \draw[<->,thick] (4.5,-0.1) -- (5.0,-0.1);
      \draw[->,thick] (0, 1.2) -- (6.1, 1.2);
      \draw (0,1.15) -- (0, 1.25);
      \node[above] at (0,1.25) {$Z^{(h)}_1$};
      \draw (6,1.15) -- (6, 1.25);
      \node[above] at (6,1.25) {$Z^{(h)}_T$};
      \node[above] at (3,1.25) {\dots};
    \end{tikzpicture}
    \label{fig:cross_fitting}
    \begin{minipage}{\textwidth}
        \footnotesize\textsc{Note}: The figure illustrates the cross-fitting procedure for $K=4$. The nuisance functions are estimated using appropriate machine learners on the union of sub-sequences $S_1^{(h)}$, $S_2^{(h)}$ and $S_4^{(h)}$ after dropping $k_T$ observation at the boundaries to $S_3^{(h)}$. Sub-sequence $S_3^{(h)}$ is used to compute $\hat{\theta}^{(h)}_{S_3^{(h)}}$. This procedure is repeated such that each sub-sequence is used once to conduct inference.
      \end{minipage}
\end{figure}

The asymptotic results for the estimator described in Procedure \ref{proc:dr_single} require some additional assumptions.
\begin{assumption}\label{ass:sample_dependence}
    For $d\in\{0,1\}$, $h\in\mathbb{N}_0$ and some scalar constant $p\geq1$ the following conditions hold.
    \begin{enumerate}
        \item\label{ass:sample_dependence_k} $k_T = O(T)$.
        \item\label{ass:sample_dependence_continuous} The nuisance functions $\mu_0(d, x,h)$, $e_0(x)$ and the functions $\mu(d,x,h), e(x)\in\Xi_{T}^{(h)}$ are measurable.
        \item\label{ass:sample_dependence_bounded_y_error} For for $r>p$ and $1/r = 1/r' + 1/r''$, we have $\sup_{t\in\mathcal{T}}\, \lVert\sup_{\mu\in\Xi_{T}^{(h)}} (\mu(d,X_t,h) - \mu_0(d,X_t,h))\lVert_{2r'} < \infty$ and $\sup_{t\in\mathcal{T}} \norm{e_0(X_t)-D_t}_{2r''} < \infty$.
        \item\label{ass:sample_dependence_bounded_d_error} For $q>p$ and $1/q = 1/q' + 1/q''$, we have $\sup_{t\in\mathcal{T}} \, \lVert\sup_{e\in\Xi_{T}^{(h)}} (e(X_t) - e_0(X_t))\lVert_{2q'} < \infty$ and $\sup_{t\in\mathcal{T}} \norm{Y_t - \mu_0(D_t,X_t, h)}_{2q''} < \infty$.
        \item\label{ass:sample_dependence_mixing} The stochastic processes $S^{(h)} = \{Z_t^{(h)}:t\in\mathcal{T}\}$ are $\alpha$-mixing with coefficients $\alpha(s)$, satisfying for $T \to \infty$ that $\alpha\left(k_T\right)^{\psi} = o(T^{-1})$, where $\psi = 1/p - 1/\min(r,q)$.
    \end{enumerate}
\end{assumption}
Assumption \ref{ass:sample_dependence} imposes restrictions on the dependence structure in the stochastic processes $S^{(h)}$. As we can no longer rely on the independence between $S_i^{(h)}$ and $S_{-i}^{(h)}$, we require the dependence in the stochastic process to decay fast enough. Intuitively, after removing $k_T$ coordinates at the boundaries of $S_i^{(h)}$ it should (asymptotically) become independent of $S_{-i}^{(h)}$. The choice of $k_T$ hereby represents a trade-off. The larger $k_T$, the smaller is the effective estimation sample size. Indeed, Assumption \ref{ass:sample_dependence}.\ref{ass:sample_dependence_k} requires $k_T$ to not increase more rapidly than the sample size. However, $k_T$ needs to be large enough to satisfy Assumption \ref{ass:sample_dependence}.\ref{ass:sample_dependence_mixing}. The stronger the dependence in the stochastic processes $S^{(h)}$, the larger $k_T$ needs to be, in turn reducing the effective estimation sample size. For the sake of exposition, assume that $k_T = O(T^\vartheta)$ for $0<\vartheta\leq1$ and let us look at different exemplary assumptions on the dependence structure of $S^{(h)}$.
\begin{enumerate}[(i)]
\item[(i)] $\alpha$-mixing process: If the stochastic processes $S^{(h)}$ are $\alpha$-mixing of size $-\phi_0$, i.e. $\alpha(s) = O(s^{-\phi})$ for some $\phi>\phi_0$, then Assumption \ref{ass:sample_dependence}.\ref{ass:sample_dependence_mixing} is satisfied for $\vartheta>(\phi\psi)^{-1}$ and $0<\psi<1$. Put differently, the slower the decay in the dependence (smaller $\phi$), the larger $\vartheta$ and thus $k_T$ has to be.
\item[(ii)] Persistent process: For stochastic processes $S^{(h)}$ with very slowly decaying dependence of the form $\alpha(s) = O(s^{2d-1})$ with $0<d<1/2$, we have additionally that $0<\phi<1$ and as a consequence Assumption \ref{ass:sample_dependence}.\ref{ass:sample_dependence_mixing} is indeed never satisfied.
\item[(iii)] Weakly dependent processes: If the stochastic processes $S^{(h)}$ have mixing coefficients $\alpha(s)=O(\rho^s)$ for $0<\rho<1$, then Assumption \ref{ass:sample_dependence}.\ref{ass:sample_dependence_mixing} is already satisfied with $\vartheta > 0$.%\footnote{Technically, this corresponds to a geometrically strong mixing process, implying geometrically decreasing and thus summable autocovariance functions.}
\item[(iv)] Independent process: In this case, i.e. when it is assumed that the stochastic processes $S^{(h)}$ are a collection of independent random variables, then $\alpha(s)=0$ for all $s>0$ and Assumption \ref{ass:sample_dependence}.\ref{ass:sample_dependence_mixing} is already satisfied for $k_T=0$.
\end{enumerate}
These examples intuit two things that are required for our theory to hold in practice. First, if only one stochastic process $S^{(h)}$ is available for estimation, enough coordinates must be removed when splitting, so that no influence of past coordinates exists in coordinates of a subsequent split. Second, the stochastic processes $S^{(h)}$ itself have to exhibit fast enough decaying temporal dependence. If this is not the case, suitable transformations of the original process need to be found to achieve this. Lastly, the boundedness conditions in the above assumptions also represent a trade-off. The more moments of the residuals and estimation errors are finite (larger $r$ and $q$), the less stringent are the conditions on the dependence decay, and vice versa. In general, with only one stochastic process, the conditions on the estimation errors are more restrictive than those imposed in Assumption \ref{ass:risk_decay}.\ref{ass:risk_decay_bounded}. The proposed cross-fitting approach is closely related to the methodology introduced by \cite{Semenova2023} for panel data, wherein the data is partitioned into folds along the temporal dimension. The primary distinction lies in the selection of observations omitted between the estimation and inference samples. While \cite{Semenova2023} advocate for a $K$-fold partitioning of the sample, removing an entire fold between estimation and inference samples, i.e. $k_T=\lfloor T/K \rfloor$ (see Section \ref{sec:cross-fitting} for further discussion), the present study adopts a more flexible strategy. Specifically, as outlined above, the choice of $k_T$ can be informed by the underlying dependence structure of the data. The following theorem finally establishes asymptotic properties for the case when nuisance functions are estimated based on a single stochastic process.
\begin{theorem}\label{thm:ate_single}
    Given the stochastic processes $S^{(h)} = \{Z_t^{(h)}:t\in\mathcal{T}\}$ for $h\in\mathbb{N}_0$, define $K\geq 2$ sub-sequences $S_i^{(h)} = \left\{Z_t^{(h)} : t \in \mathcal{T}_i\right\}$ such that $\{\mathcal{T}_i : i=1, \dots, K\}$ is a partition of the index set $\mathcal{T}$ where the order of the time indices within each $\mathcal{T}_i$ and across all sub-sequences with length $|\mathcal{T}_i| = T/K$ follows the original order in $\mathcal{T}$. Define the estimator as
    \begin{align*}
        \hat{\theta}^{(h)} = \sum_{i=1}^{K}\frac{|\mathcal{T}_i|}{T} \hat{\theta}^{(h)}_{S_i^{(h)}} \quad\text{with}\\
        \hat{\theta}^{(h)}_{S_i^{(h)}} = \frac{1}{|\mathcal{T}_i|} \sum_{t \in \mathcal{T}_i} g\left(Z_t^{(h)}, h; \hat\Gamma_{S_{-i}^{(h)}}\right),
    \end{align*}
    where the nuisance functions $\hat\Gamma_{S_{-i}^{(h)}} = (\hat\mu_{S_{-i}^{(h)}}, \hat{e}_{S_{-i}^{(h)}})$ are estimated using the sub-sequences $S_{-i}^{(h)} = \{Z_t^{(h)} : t \in \mathcal{T}_{-i}\}$ for $\mathcal{T}_{-i} = \{t : t \in \mathcal{T} \wedge \left(t < \inf(\mathcal{T}_i) - k_T \vee t > \sup(\mathcal{T}_i) + k_T\right)\}$ respectively. Then under Assumptions \ref{ass:time_invariance} and \ref{ass:oracle} - \ref{ass:sample_dependence} it holds that
    \begin{equation*}
        \sqrt{T} (\hat\theta^{(h)} - \theta_0^{(h)}) \overset{\mathrm{d}}{\to} \mathcal{N}\left(0, V^{(h)}_0 \right),
    \end{equation*}
    with $V^{(h)}_0$ as in Theorem \ref{thm:oracle}.
\end{theorem}

\subsection{Variance estimation and inference}
The variances $V^{(h)}_0$ can be estimated using standard long-run variance estimators for time series, such as the one proposed by \cite{Newey1987}.
\begin{assumption}\label{ass:variance_estimator}
    The following conditions hold.
    \begin{enumerate}
        \item \label{ass:variance_estimator-i} There are some fixed finite constants $C$ and $r> 4$ such that \\$\sup_{t\in\mathcal{T}} \E\left[\left\vert g\left(Z_t^{(h)}, h; \Gamma_0\right)\right\vert^{r}\right] < C$.
        \item \label{ass:variance_estimator-ii} There exists a measurable function $m(z)$ such that $\sup_{\Gamma \in \Xi_T} |g(z, h; \Gamma)| < m(z)$, where for some finite constant $D$, $\sup_{t\in\mathcal{T}}\E[m(Z_t)^2] < D$.
        \item \label{ass:variance_estimator-iv} For $q>2$ and some fixed strictly positive and finite constant $C$ we have $\sup_{t\in\mathcal{T}}\norm{Y_t}_q \leq C$.
        \item For some scalar $0< b_m < b_r \leq 1/2$ it holds that:
        \begin{enumerate}
            \item \label{ass:variance_estimator-iii} The bandwidth $m_T$ is a function of the sample size such that $\lim_{T\to\infty} m_T = \infty$ and for $T\to\infty$ it holds that $T^{-b_m} m_T = o(1)$.
            \item \label{ass:variance_estimator-v} $r_{\mu,T}\leq \delta_TT^{-b_r}$ and $r_{e,T} \leq \delta_TT^{-b_r}$.
        \end{enumerate}
    \end{enumerate}
\end{assumption}
Assumptions \ref{ass:variance_estimator}.\ref{ass:variance_estimator-iii} and \ref{ass:variance_estimator}.\ref{ass:variance_estimator-v} represent an inherent trade-off. The slower the convergence rate of the machine learner, the slower the bandwidth is allowed to grow. If the learner converges at the parametric rate, Assumption \ref{ass:variance_estimator}.\ref{ass:variance_estimator-iii} reduces to the usual assumption $m_T=o(T^{1/2})$ \citep[see, e.g.][]{Kool1988}. Note that by Assumption \ref{ass:risk_decay}.\ref{ass:risk_decay_rate} we have that $b_r \geq 1/4$.

The following theorem establishes consistency of a variance estimator resulting from averaging \cite{Newey1987} type estimators on each $S_i^{(h)}$.
\begin{theorem}\label{thm:variance}
    Given the stochastic processes $S^{(h)} = \{Z_t^{(h)}:t\in\mathcal{T}\}$, define the sub-sequences $S_i^{(h)}$ for $i = 1, ..., K\geq 2$ and $h\in\mathbb{N}_0$ as in Theorem \ref{thm:ate_single}. Furthermore, define $v^{(h)}_t = g\left(Z^{(h)}_t, h; \Gamma_0\right) - \theta^{(h)}_0$ and the corresponding estimated quantities as $\hat{v}^{(h)}_{S_i^{(h)},t} = g\left(Z^{(h)}_t, h; \hat\Gamma_{S_{-i}}\right) - \hat\theta^{(h)}$. $\hat\theta^{(h)}$ and $S_{-i}^{(h)}$ are defined as in Theorem \ref{thm:ate_single} and the nuisance functions $\hat\Gamma_{S_{-i}^{(h)}} = (\hat\mu_{S_{-i}^{(h)}}, \hat{e}_{S_{-i}^{(h)}})$ are estimated on $S_{-i}^{(h)}$. Let $w(s,m_T) = 1-s/(m_T+1)$, where $m_T$ is a bandwidth parameter and define the additional index sets $\mathcal{T}_{i,s} = \{t \in \mathcal{T}_i \, : \, t-s \geq \inf(\mathcal{T}_i) \}$. Moreover, define the following \cite{Newey1987} type variance estimators as
    \begin{align*}
      \widehat{V}^{(h)}_{S_i^{(h)}} &= \frac{1}{\vert\mathcal{T}_i\vert}\left( \sum_{t\in\mathcal{T}_i} (\hat{v}^{(h)}_{S_{-i}^{(h)},t})^2 + 2 \sum_{s=1}^{m_T} w(s,m_T) \sum_{t\in\mathcal{T}_{i,s}} \hat{v}^{(h)}_{S_{-i}^{(h)},t} \hat{v}^{(h)}_{S_{-i}^{(h)},t-s}  \right).
    \end{align*}
    The variance estimator is finally defined as
    \begin{equation*}
        \widehat{V}^{(h)} = \sum_{i=1}^{K}\frac{|\mathcal{T}_i|}{T} \widehat{V}^{(h)}_{S_i^{(h)}}
    \end{equation*}
    Then for $V^{(h)}_0$ as in Theorem \ref{thm:oracle} and under Assumptions \ref{ass:time_invariance} and \ref{ass:oracle} - \ref{ass:variance_estimator} as $T\to\infty$ it holds that
    \begin{equation*}
      \left\vert\widehat{V}^{(h)} - V^{(h)}_0\right\vert \overset{p}{\to} 0
    \end{equation*}
    with measure $\mathcal{P}$.
\end{theorem}
Note that while Theorem \ref{thm:variance} is formulated in terms of a specific weight function $w(s,m)$, as in \cite{Newey1987}, the variance estimator is consistent for any weight function additionally satisfying for each $s$ that $\lim_{m(T)\to\infty} w(s,m_T)=1$ and $|w(s,m_T)| < \infty$. Using the estimators in Theorems \ref{thm:ate_single} and \ref{thm:variance}, inference can be conducted by constructing level-$\alpha$ confidence bounds for $\theta_0^{(h)}$ as
\begin{equation}\label{eq:ci}
  \theta^{(h)}_0 \in \left(\hat\theta^{(h)} \pm \frac{1}{\sqrt{T}} \Phi^{-1} \left(1-\frac{\alpha}{2}\right) \sqrt{\widehat{V}^{(h)}} \right),
\end{equation}
where $\Phi^{-1}(\cdot)$ represents the inverse cumulative distribution function of the standard normal distribution.

\section{Considerations on the practical implementation of the estimator}\label{sec:practical_implementation}
Here, we gather practical recommendations for the time series DML estimator.

\subsection{Cross-fitting and small samples}\label{sec:cross-fitting}
For small samples, we recommend setting $K$ to a rather large value (e.g. $K=10$ or $K=20$). This increases the number of observations available to estimate the nuisance functions. Regarding the choice of $k_T$, it is important to note that this is directly affected by the definition of the outcome variable. In cases where the effect on the outcome variable $h$ periods after the treatment is of interest, $k_T$ has to be chosen such that samples are not overlapping, i.e. $k_T \geq h$. Similarly, $k_T$ has to take into account possible lagged values in $X_t$. As a guideline, we recommend that practitioners set $K$ to either 10 or 20, and, following \cite{Semenova2023}, use $k_T = \lfloor T / K \rfloor$ as an initial choice. This satisfies Assumptions \ref{ass:sample_dependence}.\ref{ass:sample_dependence_k} and \ref{ass:sample_dependence}.\ref{ass:sample_dependence_mixing} across a wide range of dependence structures. Sensitivity analysis for variations in $k_T$ can then be done to ensure robustness of results.

\subsection{Estimators for the nuisance functions}\label{subsubsec:nuisance_estimators}
As in the \emph{i.i.d.} setting \citep{Chernozhukov2018}, our theory requires the estimators for the nuisance functions to be consistent with fast enough convergence rates. Following Theorems \ref{thm:ate} and \ref{thm:ate_single}, this needs to extend to estimation on time-dependent observations. Consistency and convergence rates on $\alpha$-mixing sequences are derived for Lasso in \cite{Wong2020}, for random forests in \cite{Goehry2020} and \cite{Davis2020}, for boosting algorithms in \cite{Lozano2014}, for support vector machines in \cite{Steinwart2009}, for kernel and nearest-neighbour regressions in \cite{Irle1997} and for spline and wavelet series regression estimators in \cite{CHEN2015447}. Consistency of deep feed-forward neural networks with ReLU activation functions on exponentially $\alpha$-mixing processes was recently shown in \cite{Ma2022}.

The selection of an appropriate machine learning algorithm ultimately has to consider the specific problem and the characteristics of the data. For example, the random forest algorithm has been shown to perform effectively in macroeconomic contexts, even with relatively small sample sizes \citep{Medeiros2021, Coulombe2022, Coulombe2024, Beck2025}. Our numerical experiments, as well as the empirical application, also find random forests to be effective on representative sample sizes and data generating processes. In contexts with typically larger sample sizes, recurrent neural networks have e.g. demonstrated success in modelling high-frequency market data \citep{Zhang2019,Lucchese2024}. For an overview of applying machine learning algorithms to time series, we also refer to the recent survey by \cite{Masini2023}.

Our numerical experiments suggest that, as in the \emph{i.i.d.} case \citep{Bach2024}, properly tuning hyperparameters of the chosen estimators plays an important role in the application of double machine learning also for time series data. In summary, we recommend to estimate and tune multiple different estimators and select the best in terms of the relevant loss function for the problem at hand (e.g. predictive mean squared error).

\subsection{Modelling multiple forecast horizons}
Our theory is agnostic to how the nuisance functions for different forecast horizons $h$ are modelled. In analogy to classical impulse response function estimation using local projections, each forecast horizon (and nuisance function) can be estimated separately, and potentially with different learning algorithms. Depending on the application, it is however also possible to estimate $\mu_0(d,x,h)$ with one model for all $h$, e.g. using sequence-to-sequence approaches \citep{Zelda2019}, provided they exhibit appropriate convergence rates. Applying approaches from multitask learning \citep{Caruana1997}, it is finally also possible to estimate $\mu_0(d,x,h)$ and $e_0(x)$ in one model if both problems can be learned using a shared representation. This approach has e.g. been explored by \cite{Shi2019} in the context of \emph{i.i.d.} data.

\subsection{Inference in finite samples}\label{subsubsec:inference_finite}
Theorem \ref{thm:variance} requires the choice of a kernel bandwidth $m_T$ that fulfils Assumption \ref{ass:variance_estimator}. A valid choice would be, for example,  $m_T=\gamma T^{1/3}$ where $\gamma$ is determined by the procedure proposed by \cite{NeweyWest1994}. The variance estimator, when combined with standard normal critical values, is asymptotically valid under general forms of heteroskedasticity and autocorrelation. However, extensive research has shown that this approach can perform poorly in finite samples \citep{Kiefer2005, Sun2014}. In response, the literature has proposed fixed-bandwidth asymptotics, where the ratio of bandwidth and sample size $m_T/T$ is held fixed as the sample size grows, rather than shrinking to zero (as in traditional small-bandwidth asymptotics). This framework yields non-standard limiting distributions and requires the use of fixed-bandwidth critical values \citep[see][]{Kiefer2005}, which better approximate the finite-sample behavior of test statistics. Our numerical experiments confirm this finding also for the estimator proposed in this manuscript. For practical applications, we thus advise to use fixed-bandwidth critical values when performing inference.

\subsection{Extreme propensity scores}

Assumption \ref{ass:identification} and \ref{ass:risk_decay} require the true propensity scores $e_0(x)$ to be bounded away from zero and one. In applications, however, certain treatments may have essentially zero probability for particular regions of the covariate space. For example policy interventions that are infeasible under extreme macroeconomic conditions. When such limited support is a concern, we recommend employing a propensity-score-based trimming procedure, following \cite{Crump2009}. This approach systematically excludes observations with extreme propensity scores to improve overlap between treated and control units and to ensure that the resulting estimand pertains to a subpopulation with adequate support. A related but different issue that may arise in finite samples are numerical instabilities of the reciprocal of the propensity score. In other words, while it may hold that at the population level the propensity score is bounded away from zero and one, in finite samples, the estimated propensity scores can still be close to zero or one. A simple approach to address this instability is to winsorize the estimated propensity scores to a small number, e.g. 0.01 and 0.99 \citep{Brian2011}. Alternatively, one can calibrate the estimated propensity score \citep{Ballinari2025, Klaassen2025}.

\section{Simulation experiments}\label{sec:simulation}
To validate our theoretical results from the previous sections in finite samples, we conduct simulation experiments. We compare the DML estimator from Procedures \ref{proc:dml} and \ref{proc:dr_single} to a regression adjustment estimator (RA) that estimates the nuisance functions $\mu_0(1,x, h)$ and $\mu_0(0,x, h)$ separately on the full sample and takes their difference (a T-learner in the terminology of \cite{Kuenzel2019}). To disentangle the effect of cross-fitting and usage of a Neyman orthogonal estimator, we also compute the IRF using the doubly robust influence function \eqref{eq:dr_score} without relying on cross-fitting (DR). In addition, we estimate the impulse response functions using standard local projections (LP) \citep{Jorda2005}. Results are presented throughout using random forests \citep{Breiman2001} as machine learning estimators for the nuisance functions. In the Online Appendix, Table~\ref{tab:results_boosting}, we also include results using a gradient boosting algorithm estimator \citep{Chen2016}, supporting the validity of our asymptotic theory for alternative machine learning estimators.\footnote{
The gradient boosting algorithm is found to require slightly larger sample sizes than random forests to reach a consistent estimate, highlighting that estimators can exhibit varying sample size requirements with respect to our asymptotic theory. For practical applications we thus suggest, cf. Section \ref{subsubsec:nuisance_estimators}, to select the estimator yielding the best predictive performance on the available sample size.} More details on the hyperparameter tuning schemes for both considered estimators are also given in the Online Appendix~\ref{app:tuning}. In all simulations, we generate data according to the following data generating process (DGP), which is a modification of the setup in \cite{NieWager2020}. For some noise level $\sigma_\epsilon$, a propensity score $e_0(X_t)$, a baseline effect $b(X_t)$ and a (conditional) treatment effect function $\tau(X_t)$, the outcome process is defined as
\begin{equation*}
    Y_t = b(X_t) + \left(D_t -0.5\right)\tau(X_t) + \gamma Y_{t-1} + \epsilon_t,
\end{equation*}
where the innovations $\epsilon_t$ are generated from a GARCH(1,1) process $\epsilon_t = \sigma_t \zeta_t$, with $\zeta_t \sim \mathcal{N}(0, 1)$ and $\sigma_t^2 = \omega + \beta_1 \zeta_{t-1}^2 + \beta_2 \sigma_{t-1}^2$. Following \cite{Jorda2005} we set $\beta_1 = 0.3$ and $\beta_2 = 0.5$. $\omega$ is set to ensure that the $\E[\epsilon_t^2] = \sigma^2_\epsilon$, that is  $\omega = (1-\beta_1 -\beta_2) \sigma^2_\epsilon$. We set $D_t \vert X_t \sim \text{Ber}\left(e_0(X_t)\right)$ with
\begin{align*}
  e_0(X_t) &= \left(1+ e^{-X_{1,t}} + e^{-X_{2,t}}\right)^{-1} \\
  b(X_t) &= 0.5\left((X_{1,t} + X_{2,t} + X_{3,t})^+ + (X_{4,t} + X_{5,t})^+\right) \\
  \tau(X_t) &= (X_{1,t} + X_{2,t} + X_{3,t})^+ - (X_{4,t} + X_{5,t})^+
\end{align*}
where $(x)^+ = \max(0, x)$. The confounder process is modelled as a $n$-dimensional, zero mean VARMA($p$,$q$) process
\begin{equation*}
  X_t = \sum_{i=1}^{p}A_iX_{t-i} + \sum_{j=1}^{q} M_j u_{t-j} + u_t,
\end{equation*}
where $u_t$ is a zero mean white noise random variable with nonsingular covariance matrix, parameterized as $\Sigma_u = \sigma^2_uI_n$ using some scalar $\sigma_u$ and the $n$-dimensional identity matrix $I_n$. In the spirit of \cite{Adamek2024}, the coefficient matrices are defined as $A_i = \alpha_A^{i-1} \Gamma^A$ and $M_j = \alpha_M^{j-1} \Gamma^M$, where $\alpha_A, \alpha_M$ are some scalars, $\Gamma^A$ (and $\Gamma^M$ correspondingly) is a tapered Toeplitz matrix with $\Gamma^A_{i,j} = \rho_A^{|i-j|+1}$ and $\Gamma^A_{i,j} = 0$ for $|i-j|\geq n/2$. We finally scale the process $X_t$ so that the confounders have unit variance. The baseline parametrization for our simulations is $\gamma=0.6$, $\sigma_\epsilon=\sigma_u=1$, $n=12$, $\alpha_A = \alpha_B = 0.3$, $p=2$, $q=1$, $\rho_A=0.35$ and $\rho_M=0.7$. The simulation procedure is described in Procedure~\ref{proc:sim}.
\begin{procedure}
\caption{Setup of the simulation study}\label{proc:sim}
\begin{enumerate}
  \item Draw a realization from the DGP with $T$ observations.
  \item For each evaluated forecast horizon $h = 0, 1, ..., H$:
  \begin{enumerate}[a)]
    \item Construct $\{S^{(h)}_i : i=1,\dots,K\}$ from the realization.
    \item Find optimal hyperparameters for the estimators for $\mu_{0}(0, X, h)$, $\mu_{0}(1, X, h)$, and $e_0(X)$ by cross-validation using $\{S^{(h)}_i : i=1,\dots,K\}$ as folds and removing $k_T$ observations at the boundary between estimation and inference sample.
    \item Train each of the four learners (DML, RA, DR, LP); and for each learner
    \begin{enumerate}
        \item compute the IRF estimator $\hat\theta^{(h)}$ according to Procedure~\ref{proc:dml} or Procedure~\ref{proc:dr_single}
        \item compute the variance estimator $\widehat{V}^{(h)}$ from Theorem \ref{thm:variance}. Following the arguments outlined in Section \ref{subsubsec:inference_finite}, we use the approach in \cite{NeweyWest1994} to determine the bandwidth $m_T$.
    \end{enumerate}
  \end{enumerate}
  \item Repeat steps 1. and 2. $N$ times.
\end{enumerate}
\end{procedure}
We perform the numerical experiments for two settings. A first one, where in step 2.a) the realizations for $S_i^{(h)}$ are in fact drawn separately by simulating $K$ independent realizations from the DGP in step 1., each with $T/K$ observations. In a second setting, the sub-samples are constructed from the one single realization drawn in step 1. In this setting, we remove $k_T=T/K$ coordinates at the boundary of the estimation and inference samples. Following our practical recommendation, we set $K=10$. Results for the baseline parametrization of the DGP for the DML, RA, DR and LP estimators for the setting with one stochastic process and for sample sizes $T\in\{125, 250, 500, 1'000, 8'000\}$ are shown in Table~\ref{tab:result_baseline}. Results for the setting with independent stochastic processes and for some parameter variations (different number of confounders, higher noise in the outcome process, empirically calibrated parameters) are deferred to Tables \ref{tab:result_baseline_two_proc}-\ref{tab:empircal_dgp} in the Online Appendix.

Overall, our simulations support the validity of our theory. Compared to the RA and DR estimators, the DML estimator exhibits the smallest average bias in all considered settings, converging at the expected $\sqrt{T}$-rate. Importantly, this holds true for both Procedure~\ref{proc:dml} relying on independent realizations and Procedure~\ref{proc:dr_single} using a single realization. Being linear estimators, local projections do not estimate the true nonlinear average treatment effect, but a weighted average of marginal effects \citep{kolesár2024dynamiccausaleffectsnonlinear}. This is highlighted by the observation that the bias of the LP estimator does not decrease and its coverage deteriorates with increasing sample sizes. On these points also refer to Section \ref{sec:lp}. Finally, the DML estimator produces valid confidence intervals, while the regression adjustment, doubly robust and local projection estimators fail to allow valid inference. As expected, when $K$ truly independent realizations are available (cf. Table \ref{tab:result_baseline_two_proc} in the Online Appendix), the bias of the DML estimator is lower than in the one-realization setting.

In Table \ref{tab:result_baseline}, for the DML estimator, we report both the coverage using asymptotic and fixed-bandwidth critical values (cf. Section \ref{subsubsec:inference_finite}). In small samples, coverage is better when using fixed-bandwidth critical values. As sample sizes increase, the improvement over asymptotic critical values becomes negligible.\footnote{Simulation results are qualitatively unchanged when the bandwidth is determined alternatively using the rule of thumb by \cite{Wooldridge2016} or \cite{Lazarus2018}. Results are available from the authors upon request.} In most settings, the coverage of the DML estimator is marginally too low, which likely reflects a finite-sample bias in estimating the variance of $\hat\theta^{(h)}$. Unreported results indeed show that, on average, our variance estimator is slightly smaller than the empirical variance of the IRF estimates across realizations. This downward bias is expected because, while the variance estimator is derived under the assumption of known nuisance functions, it is in practice constructed using estimates thereof. This introduces sampling variability that is not fully accounted for in finite samples.

\begin{table}[htbp]
    \caption{Simulation results for a baseline nonlinear DGP with $n=12$, $\sigma_\epsilon=1.0$ and random forest nuisance function estimates}
    \label{tab:result_baseline}
    \resizebox{\textwidth}{!}{\begin{threeparttable}\begin{tabular}{rrrrrrrrrrrrrrrrrr}
    \toprule
    \multicolumn{18}{c}{$h=0$, $\theta_0^{(h)}=0.3321$} \\
     & \multicolumn{5}{c}{DML} & \multicolumn{4}{c}{RA} & \multicolumn{4}{c}{DR} & \multicolumn{4}{c}{LP} \\
    T & Bias & std$(\hat{\theta}_h)$ & RMSE & $C_{b}$(95\%) & $C_{a}$(95\%) & Bias & std$(\hat{\theta}_h)$ & RMSE & $C_{b}$(95\%) & Bias & std$(\hat{\theta}_h)$ & RMSE & $C_{b}$(95\%) & Bias & std$(\hat{\theta}_h)$ & RMSE & $C_{b}$(95\%) \\
    \cmidrule(lr){1-1}\cmidrule(lr){2-6}\cmidrule(lr){7-10}\cmidrule(lr){11-14}\cmidrule(lr){15-18}
    125 & 0.053 & 0.797 & 0.799 & 0.945 & 0.944 & 0.360 & 0.455 & 0.580 & 0.511 & 0.263 & 0.413 & 0.490 & 0.707 & 0.090 & 0.357 & 0.368 & 0.871 \\
    250 & 0.060 & 0.398 & 0.402 & 0.949 & 0.941 & 0.285 & 0.300 & 0.413 & 0.514 & 0.208 & 0.278 & 0.347 & 0.732 & 0.128 & 0.251 & 0.282 & 0.847 \\
    500 & 0.030 & 0.220 & 0.222 & 0.954 & 0.947 & 0.219 & 0.196 & 0.294 & 0.464 & 0.142 & 0.183 & 0.232 & 0.776 & 0.135 & 0.173 & 0.220 & 0.829 \\
    1'000 & 0.029 & 0.144 & 0.147 & 0.931 & 0.928 & 0.182 & 0.139 & 0.229 & 0.422 & 0.103 & 0.131 & 0.167 & 0.784 & 0.138 & 0.126 & 0.187 & 0.730 \\
    8'000 & 0.006 & 0.042 & 0.042 & 0.966 & 0.965 & 0.176 & 0.048 & 0.182 & 0.016 & 0.040 & 0.042 & 0.058 & 0.853 & 0.131 & 0.046 & 0.139 & 0.141 \\
    \midrule
    \multicolumn{18}{c}{$h=1$, $\theta_0^{(h)}=0.1992$} \\
     & \multicolumn{5}{c}{DML} & \multicolumn{4}{c}{RA} & \multicolumn{4}{c}{DR} & \multicolumn{4}{c}{LP} \\
    T & Bias & std$(\hat{\theta}_h)$ & RMSE & $C_{b}$(95\%) & $C_{a}$(95\%) & Bias & std$(\hat{\theta}_h)$ & RMSE & $C_{b}$(95\%) & Bias & std$(\hat{\theta}_h)$ & RMSE & $C_{b}$(95\%) & Bias & std$(\hat{\theta}_h)$ & RMSE & $C_{b}$(95\%) \\
    \cmidrule(lr){1-1}\cmidrule(lr){2-6}\cmidrule(lr){7-10}\cmidrule(lr){11-14}\cmidrule(lr){15-18}
    125 & 0.088 & 0.857 & 0.861 & 0.930 & 0.920 & 0.379 & 0.446 & 0.586 & 0.480 & 0.275 & 0.407 & 0.491 & 0.708 & 0.060 & 0.362 & 0.367 & 0.923 \\
    250 & 0.067 & 0.496 & 0.500 & 0.947 & 0.940 & 0.312 & 0.301 & 0.434 & 0.394 & 0.222 & 0.277 & 0.355 & 0.692 & 0.111 & 0.265 & 0.287 & 0.914 \\
    500 & 0.053 & 0.236 & 0.242 & 0.955 & 0.952 & 0.250 & 0.196 & 0.317 & 0.321 & 0.161 & 0.183 & 0.243 & 0.731 & 0.123 & 0.193 & 0.229 & 0.887 \\
    1'000 & 0.053 & 0.144 & 0.153 & 0.942 & 0.941 & 0.218 & 0.135 & 0.257 & 0.226 & 0.126 & 0.127 & 0.179 & 0.708 & 0.139 & 0.133 & 0.193 & 0.792 \\
    8'000 & 0.012 & 0.044 & 0.045 & 0.955 & 0.954 & 0.220 & 0.049 & 0.226 & 0.001 & 0.051 & 0.042 & 0.066 & 0.759 & 0.144 & 0.053 & 0.153 & 0.183 \\
    \midrule
    \multicolumn{18}{c}{$h=2$, $\theta_0^{(h)}=0.1195$} \\
     & \multicolumn{5}{c}{DML} & \multicolumn{4}{c}{RA} & \multicolumn{4}{c}{DR} & \multicolumn{4}{c}{LP} \\
    T & Bias & std$(\hat{\theta}_h)$ & RMSE & $C_{b}$(95\%) & $C_{a}$(95\%) & Bias & std$(\hat{\theta}_h)$ & RMSE & $C_{b}$(95\%) & Bias & std$(\hat{\theta}_h)$ & RMSE & $C_{b}$(95\%) & Bias & std$(\hat{\theta}_h)$ & RMSE & $C_{b}$(95\%) \\
    \cmidrule(lr){1-1}\cmidrule(lr){2-6}\cmidrule(lr){7-10}\cmidrule(lr){11-14}\cmidrule(lr){15-18}
    125 & 0.107 & 0.815 & 0.822 & 0.925 & 0.916 & 0.353 & 0.455 & 0.576 & 0.478 & 0.256 & 0.424 & 0.495 & 0.748 & 0.031 & 0.410 & 0.411 & 0.931 \\
    250 & 0.065 & 0.517 & 0.521 & 0.932 & 0.932 & 0.278 & 0.313 & 0.419 & 0.425 & 0.197 & 0.296 & 0.356 & 0.731 & 0.078 & 0.306 & 0.316 & 0.927 \\
    500 & 0.060 & 0.259 & 0.266 & 0.946 & 0.942 & 0.232 & 0.212 & 0.314 & 0.346 & 0.151 & 0.201 & 0.252 & 0.737 & 0.104 & 0.221 & 0.244 & 0.911 \\
    1'000 & 0.060 & 0.167 & 0.178 & 0.918 & 0.917 & 0.210 & 0.153 & 0.260 & 0.220 & 0.125 & 0.146 & 0.193 & 0.717 & 0.128 & 0.160 & 0.204 & 0.854 \\
    8'000 & 0.011 & 0.050 & 0.051 & 0.951 & 0.951 & 0.214 & 0.052 & 0.220 & 0.001 & 0.049 & 0.048 & 0.069 & 0.793 & 0.128 & 0.055 & 0.140 & 0.352 \\
    \midrule
    \multicolumn{18}{c}{$h=3$, $\theta_0^{(h)}=0.0717$} \\
     & \multicolumn{5}{c}{DML} & \multicolumn{4}{c}{RA} & \multicolumn{4}{c}{DR} & \multicolumn{4}{c}{LP} \\
    T & Bias & std$(\hat{\theta}_h)$ & RMSE & $C_{b}$(95\%) & $C_{a}$(95\%) & Bias & std$(\hat{\theta}_h)$ & RMSE & $C_{b}$(95\%) & Bias & std$(\hat{\theta}_h)$ & RMSE & $C_{b}$(95\%) & Bias & std$(\hat{\theta}_h)$ & RMSE & $C_{b}$(95\%) \\
    \cmidrule(lr){1-1}\cmidrule(lr){2-6}\cmidrule(lr){7-10}\cmidrule(lr){11-14}\cmidrule(lr){15-18}
    125 & 0.105 & 0.899 & 0.905 & 0.934 & 0.926 & 0.305 & 0.482 & 0.571 & 0.483 & 0.225 & 0.452 & 0.505 & 0.767 & 0.012 & 0.448 & 0.448 & 0.937 \\
    250 & 0.069 & 0.527 & 0.532 & 0.920 & 0.916 & 0.248 & 0.339 & 0.420 & 0.467 & 0.175 & 0.322 & 0.366 & 0.766 & 0.058 & 0.340 & 0.345 & 0.934 \\
    500 & 0.060 & 0.287 & 0.294 & 0.935 & 0.931 & 0.208 & 0.226 & 0.307 & 0.362 & 0.140 & 0.220 & 0.260 & 0.774 & 0.087 & 0.249 & 0.264 & 0.915 \\
    1'000 & 0.061 & 0.185 & 0.195 & 0.921 & 0.919 & 0.193 & 0.164 & 0.253 & 0.242 & 0.118 & 0.160 & 0.199 & 0.765 & 0.112 & 0.178 & 0.210 & 0.902 \\
    8'000 & 0.011 & 0.057 & 0.058 & 0.940 & 0.940 & 0.194 & 0.058 & 0.203 & 0.003 & 0.045 & 0.055 & 0.071 & 0.831 & 0.117 & 0.061 & 0.131 & 0.479 \\
    \midrule
    \multicolumn{18}{c}{$h=4$, $\theta_0^{(h)}=0.0430$} \\
     & \multicolumn{5}{c}{DML} & \multicolumn{4}{c}{RA} & \multicolumn{4}{c}{DR} & \multicolumn{4}{c}{LP} \\
    T & Bias & std$(\hat{\theta}_h)$ & RMSE & $C_{b}$(95\%) & $C_{a}$(95\%) & Bias & std$(\hat{\theta}_h)$ & RMSE & $C_{b}$(95\%) & Bias & std$(\hat{\theta}_h)$ & RMSE & $C_{b}$(95\%) & Bias & std$(\hat{\theta}_h)$ & RMSE & $C_{b}$(95\%) \\
    \cmidrule(lr){1-1}\cmidrule(lr){2-6}\cmidrule(lr){7-10}\cmidrule(lr){11-14}\cmidrule(lr){15-18}
    125 & 0.138 & 0.949 & 0.959 & 0.929 & 0.908 & 0.270 & 0.500 & 0.569 & 0.538 & 0.205 & 0.475 & 0.517 & 0.773 & 0.021 & 0.496 & 0.497 & 0.935 \\
    250 & 0.064 & 0.671 & 0.674 & 0.910 & 0.907 & 0.217 & 0.369 & 0.428 & 0.462 & 0.159 & 0.352 & 0.387 & 0.764 & 0.049 & 0.377 & 0.380 & 0.922 \\
    500 & 0.051 & 0.322 & 0.327 & 0.940 & 0.926 & 0.177 & 0.246 & 0.304 & 0.390 & 0.119 & 0.239 & 0.267 & 0.806 & 0.062 & 0.270 & 0.277 & 0.929 \\
    1'000 & 0.056 & 0.201 & 0.209 & 0.938 & 0.936 & 0.170 & 0.174 & 0.244 & 0.280 & 0.106 & 0.170 & 0.201 & 0.803 & 0.092 & 0.192 & 0.213 & 0.921 \\
    8'000 & 0.011 & 0.063 & 0.064 & 0.946 & 0.946 & 0.174 & 0.061 & 0.184 & 0.007 & 0.041 & 0.061 & 0.074 & 0.874 & 0.100 & 0.067 & 0.121 & 0.662 \\
    \midrule
    \multicolumn{18}{c}{$h=5$, $\theta_0^{(h)}=0.0258$} \\
     & \multicolumn{5}{c}{DML} & \multicolumn{4}{c}{RA} & \multicolumn{4}{c}{DR} & \multicolumn{4}{c}{LP} \\
    T & Bias & std$(\hat{\theta}_h)$ & RMSE & $C_{b}$(95\%) & $C_{a}$(95\%) & Bias & std$(\hat{\theta}_h)$ & RMSE & $C_{b}$(95\%) & Bias & std$(\hat{\theta}_h)$ & RMSE & $C_{b}$(95\%) & Bias & std$(\hat{\theta}_h)$ & RMSE & $C_{b}$(95\%) \\
    \cmidrule(lr){1-1}\cmidrule(lr){2-6}\cmidrule(lr){7-10}\cmidrule(lr){11-14}\cmidrule(lr){15-18}
    125 & 0.147 & 1.015 & 1.025 & 0.919 & 0.900 & 0.238 & 0.539 & 0.589 & 0.534 & 0.190 & 0.515 & 0.549 & 0.782 & 0.032 & 0.551 & 0.552 & 0.921 \\
    250 & 0.069 & 0.647 & 0.651 & 0.913 & 0.910 & 0.185 & 0.381 & 0.423 & 0.485 & 0.135 & 0.367 & 0.392 & 0.755 & 0.036 & 0.394 & 0.395 & 0.918 \\
    500 & 0.045 & 0.340 & 0.343 & 0.936 & 0.934 & 0.150 & 0.256 & 0.297 & 0.454 & 0.103 & 0.251 & 0.271 & 0.805 & 0.047 & 0.283 & 0.287 & 0.943 \\
    1'000 & 0.049 & 0.213 & 0.219 & 0.935 & 0.933 & 0.147 & 0.183 & 0.234 & 0.337 & 0.092 & 0.179 & 0.202 & 0.845 & 0.071 & 0.202 & 0.214 & 0.933 \\
    8'000 & 0.007 & 0.070 & 0.070 & 0.948 & 0.947 & 0.150 & 0.067 & 0.165 & 0.017 & 0.034 & 0.067 & 0.075 & 0.899 & 0.082 & 0.078 & 0.113 & 0.789 \\
    \bottomrule\end{tabular}
    \begin{tablenotes}[flushleft]
    \item \textsc{Note}: The table depicts simulation results across $N=1'000$ draws obtained for the setting with
            one stochastic process. Except for the LP estimator, nuisance functions are estimated with random forests.
            For the DML estimator, we use 10-fold cross-fitting and set $k_T=T/10$. For sample size $T=125$, probabilities are winsorized at 1\%.
            The parameters of the data generating process are $n=12$,
            $\sigma_\epsilon=1.0$, $\gamma=0.6$, $p=2$,
            $q=1$, $\sigma_u=1.0$, $\alpha_A=0.3$,
            $\alpha_M=0.3$, $\rho_A=0.35$, $\rho_M=0.7$,
            $\beta_1=0.3$, $\beta_2=0.5$.
            $C_a$($\cdot$) and $C_b$($\cdot$) in the tables denote the coverage at the given confidence level using asymptotic and fixed-bandwidth critical values respectively.
    \end{tablenotes}
    \end{threeparttable}}
    \end{table}

    %Comparing the RA and DR estimators, we note that the use of a doubly robust influence function reduces already the bias and improves the coverage; but only when combining double robustness with cross-fitting the bias reduces considerably.

Figure~\ref{fig:IRF} provides a visual summary of our results by depicting the true and estimated IRF. The regression adjustment estimator is biased, overestimating on average the true impact of the treatment $D_t$, in particular for longer horizons. When estimating the IRF with the doubly robust influence function (DR), the bias is reduced. Only when using an estimator that is Neyman orthogonal and uses cross-fitting (DML), the distribution of the estimated IRFs across simulation replications is centered around the true IRF. In the inset of the top panel of Figure~\ref{fig:IRF}, for the DML estimator, normalized biases as a function of $T$ are plotted for all forecast horizons $h$ and contrasted to the $\sqrt{T}$ scaling implied by Theorem~\ref{thm:ate}, showing that the bias of the DML estimator follows the $\sqrt{T}$ scaling implied by Theorem~\ref{thm:ate} quite well. The LP estimator finally exhibits a bias of similar magnitude as the RA estimator, as the linear estimator fails to capture the highly nonlinear and heterogeneous relation between $D_t$, $X_t$ and $Y_{t+h}$ in the DGP.

%In a setting with a linear DGP, while the bias of the DML estimator still converges to zero at the expected rate, the LP estimator achieves the lowest bias and RMSE (cf. Table \ref{tab:results_linear_constant} in the Online Appendix). However, once even a simple interaction between $D_t$ and $X_t$ is introduced, the LP estimator becomes biased while the DML estimator's bias continues to converge to zero at the $\sqrt{T}$-rate (cf. Table \ref{tab:results_linear_interaction} in the Online Appendix). In summary, compared to DML, the LP estimator has lower variance in small samples, but at the cost of bias in case of model misspecification. In larger samples, this variance advantage seems to disappear.

\begin{figure}[htbp]
\centering
\caption{Distribution of impulse response function estimates for a baseline nonlinear DGP with $n=12$, $\sigma_\epsilon=1.0$ and random forest nuisance function estimates}
  \includegraphics[trim={0 4cm 0 5cm},clip,width=\textwidth]{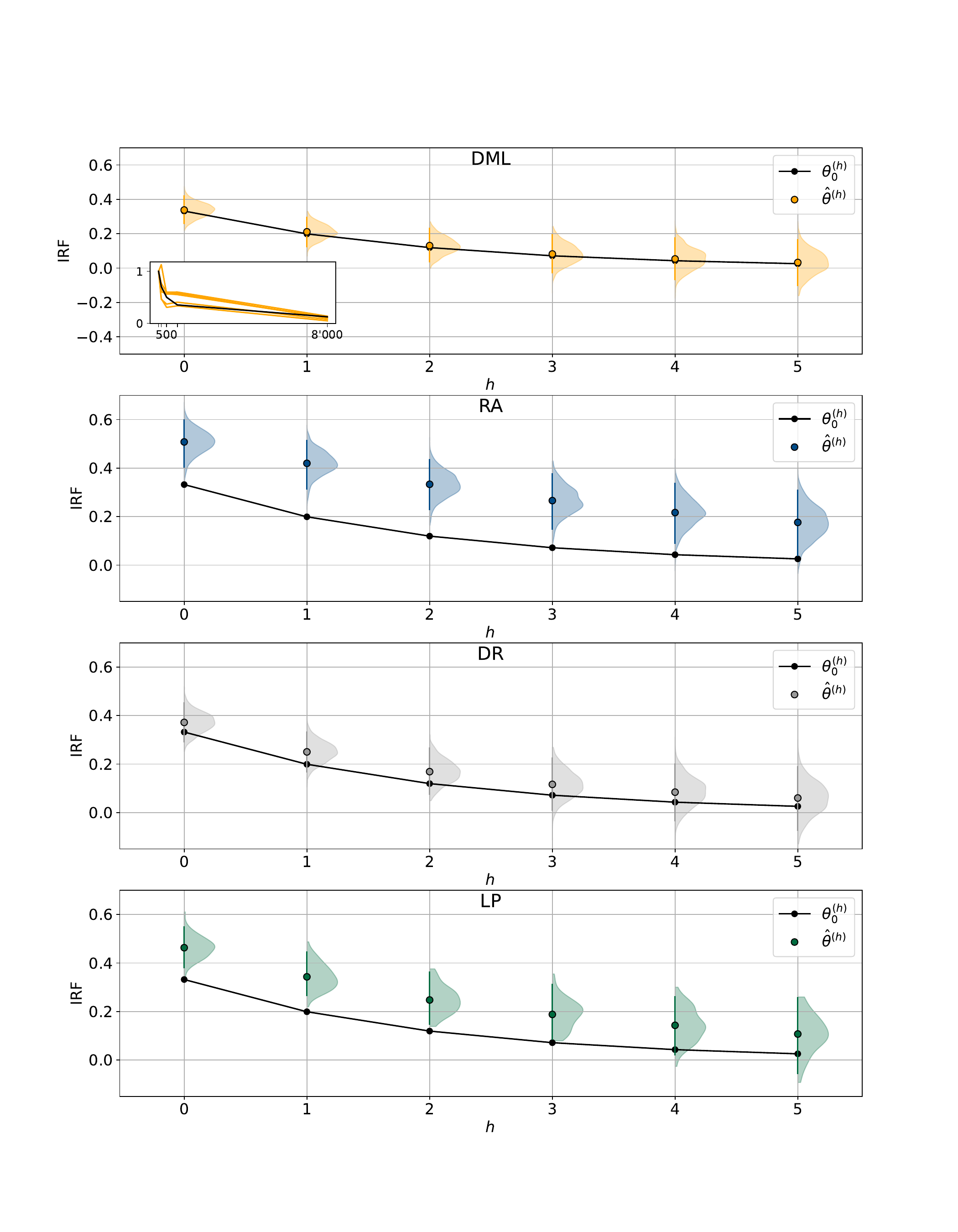}
  \label{fig:IRF}
  \begin{minipage}{\textwidth}
    \footnotesize\textsc{Note}: Comparison of the true $\theta_0^{(h)}$ with estimates $\hat\theta^{(h)}$ of the IRF obtained for the setting with one stochastic process of length $T=8'000$ from Table \ref{tab:result_baseline}.
    Except for the LP estimator, nuisance functions are estimated with random forests.
    For the DML estimator, we use 10-fold cross-fitting and set $k_T=T/10$.
    The parameters of the data generating process are $n=12$,
    $\sigma_\epsilon=1.0$, $\gamma=0.6$, $p=2$,
    $q=1$, $\sigma_u=1.0$, $\alpha_A=0.3$,
    $\alpha_M=0.3$, $\rho_A=0.35$, $\rho_M=0.7$,
    $\beta_1=0.3$, $\beta_2=0.5$.
    For individual $h$, we show kernel density estimates of the distribution of $\hat\theta^{(h)}$ across $N=1'000$ realizations. The dots indicate the average, and the vertical lines the (2.5\%, 97.5\%)-quantile range of the distribution. In the inset of the top panel, for the DML estimator, normalized biases as a function of $T$ are plotted for all forecast horizons $h$ and contrasted to the $\sqrt{T}$ scaling implied by Theorem~\ref{thm:ate} in black.
  \end{minipage}
\end{figure}

%Finally, Figure~\ref{fig:assumption_violation} in the Online Appendix further documents results for the DML estimator on a DGP with $b=0$ and $d_y>0$. Also in a setup where Assumption \ref{ass:oracle}.\ref{ass:oracle_dependence} is violated, and the data used for estimation is not coming from an $\alpha$-mixing process, the DML estimator still delivers unbiased estimates of the IRF with valid confidence intervals, providing some limited evidence that Assumption \ref{ass:oracle}.\ref{ass:oracle_dependence} might not be too strict in practice.

\section{Comparison to local projections}\label{sec:lp}
The simulation study in the previous section provided evidence that, in the presence of nonlinearities, local projections are asymptotically biased, while the proposed DML estimator consistently estimates the true impulse response function. Here, we contrast these two estimators and their underlying assumptions. For a comparison of local projections and VARs, see \cite{Plagborg2021}.

\subsection{An illustrative example}
Consider the following stochastic processes
\begin{equation}\label{eq:ex_dgp}
    \begin{aligned}
        Y_{t+h} &= f^{(h)}(D_t, X_t) + \epsilon_{t+h}\\
        D_t &= q(X_t, \eta_t),
    \end{aligned}
\end{equation}
where $\epsilon_{t+h}$ and $\eta_t$ are \emph{i.i.d.} noise terms with zero mean and finite variance. The functions $f^{(h)}(\cdot)$ and $q(\cdot)$ are measurable and possibly nonlinear. Let $X_t = (V_t, V_{t-1})'$ be a two-dimensional vector with $V_t = \phi V_{t-1} + u_t$, where $\vert\phi\vert < 1$ and $u_t$ is an \emph{i.i.d.}, mean zero random variable with finite variance. The quantity of interest is the impulse response function at horizon $h$, i.e. $\theta^{(h)}_0 = \E[\tau^{(h)}(X_t)]$ with $\tau^{(h)}(X_t) = \E[f^{(h)}(1, X_t)\vert X_t] - \E[f^{(h)}(0, X_t)\vert X_t]$. In the following, we illustrate how assumptions of local projection and DML estimators are satisfied in the example process \eqref{eq:ex_dgp}.

\subsubsection*{Local projection estimator}
Local projections estimate the impulse response function via the coefficient $\hat\beta^{(h)}$ in the linear regression model
\begin{equation*}
    Y_{t+h} = \hat\beta^{(h)} D_t + \hat\alpha' X_t + \hat{r}_{t+h}.
\end{equation*}
By the projection theorem, the population coefficient $\beta_0^{(h)}$ is given by
\begin{equation*}
    \beta_0^{(h)} = \frac{\E[f^{(h)}(D_t, X_t) (D_t - \lambda_{0}' X_t])}{\E[(D_t - \lambda_{0}' X_t)^2]},
\end{equation*}
where $\lambda_{0} = \arg\min\E[(D_t - \lambda' X_t)^2]$. Asymptotically, under certain regularity conditions on the time dependence and moments of the stochastic processes $S^{(h)}$, we have $\sqrt{T}(\hat\beta^{(h)} - \beta_0^{(h)}) \overset{d}{\rightarrow}\mathcal{N}(0, \Var(\beta_0^{(h)}))$. To ensure this result, $D_t$ and $X_t$ must be assumed to be stationary and ergodic for second moments, imposing restrictions on their time dependence, similar to -- although generally weaker then -- those in Assumption \ref{ass:oracle} for the DML estimator. For the process in \eqref{eq:ex_dgp}, this assumption is indeed fulfilled, as $D_t$ and $X_t$ are stationary and geometrically strong mixing.\footnote{Since $V_t$ is a linear process, it can be shown to be geometrically strong mixing with coefficients $\alpha(s) = O(|\phi|^{s})$ \citep[see Theorem 15.9 in][]{Davidson2021}.} Furthermore, $D_t$ and $X_t$ are assumed to be uncorrelated with the innovation $\epsilon_{t+h}$, which corresponds to Assumption \ref{ass:X_richness} for the DML estimator. Lastly, similar to Assumptions \ref{ass:oracle} and \ref{ass:sample_dependence}, the first four moments of $D_t$, $X_t$, and $\epsilon_{t+h}$ need to be finite. Note that since the linear regression estimator converges at the $\sqrt{T}$-rate, Assumption \ref{ass:risk_decay} is satisfied, even though this is not required for $\hat\beta^{(h)}$ to converge to $\beta_0^{(h)}$.

\subsubsection*{DML estimator}
Under Theorem \ref{thm:ate}, the DML estimator consistently recovers $\theta_0^{(h)}$ provided that Assumptions \ref{ass:constant_ate} and \ref{ass:oracle}-\ref{ass:sample_dependence} hold. Assumption \ref{ass:constant_ate} is met by construction of the example process \eqref{eq:ex_dgp}. Since $V_t$ is geometrically strong mixing and all involved functions are measurable, it follows that $S^{(h)}$ and $g(Z^{(h)}_t, h; \Gamma_0)$ are also geometrically strong mixing \citep[see Theorem 15.1 in][]{Davidson2021} and thus Assumption \ref{ass:oracle} is satisfied. Assumption \ref{ass:risk_decay} requires that the $L_2$-norm of the estimation error of the nuisance function estimators converges to zero at least at rate $T^{1/4}$. For the example process \eqref{eq:ex_dgp} this requirement is met for example by random forests, which converge at least at rate $T^{1/3}$ \citep{Davis2020}. Neural networks would also satisfy this condition, provided that the functions $f^{(h)}(\cdot)$ and $q(\cdot)$ are sufficiently smooth \citep[for more details, see][]{Ma2022}. Lasso can achieve an even faster rate of $T^{1/2}$ \citep{Wong2020}, provided that the nuisance functions can be well approximated by polynomials of the conditioning variables. Assumption \ref{ass:X_richness} requires the set of covariates to be sufficiently rich, so that past information on $Z_t^{(h)}$ cannot predict $Y_{t+h}$ and $D_t$. For the example process \eqref{eq:ex_dgp}, this assumption holds directly if the covariates $X_t$ include both $V_t$ and $V_{t-1}$. This would still hold if additional covariates or lagged values of $Z_t^{(h)}$ were included in the estimation. Assumption \ref{ass:sample_dependence} finally holds for any value of $\psi$ in the setting of \eqref{eq:ex_dgp}, since $S$ is geometrically strong mixing. Therefore, it is also sufficient that the residuals and estimation errors possess finite $4 + \nu$ moments, for some $\nu > 0$.

To conduct inference, Assumption \ref{ass:variance_estimator} additionally requires choosing a bandwidth $m_T$ depending on the convergence rate of the nuisance function estimators. For the example process \eqref{eq:ex_dgp}, random forests estimators would permit $m_T$ to be $o(T^{1/3})$ \citep[e.g.][]{NeweyWest1994}. Using Lasso estimators, $m_T$ can be $o(T^{1/2})$ \citep[e.g.][]{Lazarus2018}.

\subsection{Linear and nonlinear processes}
In case the function $f^{(h)}(D_t,X_t)$ is linear, then it can be shown that $\beta_0^{(h)} = \theta^{(h)}_0$ and thus linear projections recover the true impulse response function. However, if $f^{(h)}(D_t,X_t)$ is a nonlinear function, then the coefficients $\beta_0^{(h)}$ will no longer recover the true impulse response function, but an expected, weighted average of marginal effects $\tau^{(h)}(x)$. For an extensive discussion and derivation of these results, refer to \cite{kolesár2024dynamiccausaleffectsnonlinear}.

These effects are illustrated in Figure \ref{fig:lp_dml_comparison}, which compares the distribution of impulse response function estimates obtained with DML and local projection estimators for three different DGPs: a nonlinear process, a linear process with interactions, and a purely linear process. Detailed results are deferred to Tables \ref{tab:results_linear_constant} and \ref{tab:results_linear_interaction} in the Online Appendix. As already seen in Section \ref{sec:simulation} and again shown in Figure \ref{fig:non_linear_dgp}, estimation using local projections is biased on a nonlinear process, whereas the DML estimator recovers $\theta_0^{(h)}$. This also holds in a setting where the outcome variable is generated from a linear function, but there is interaction between $X_t$ and $D_t$ (see Figure \ref{fig:linear_dgp_interactions}). Local projections only recover $\theta_0^{(h)}$ for a purely linear process where its functional form is correctly specified (see Figure \ref{fig:linear_dgp_constant}). In this case, local projections have lower finite sample bias and variance than the DML estimator.

\begin{figure}[htbp]
    \caption{Comparison of the distribution of DML and LP impulse response function estimates for nonlinear and linear DGPs}
    \centering
    \begin{subfigure}[b]{0.333\textwidth}
        \includegraphics[width=\textwidth]{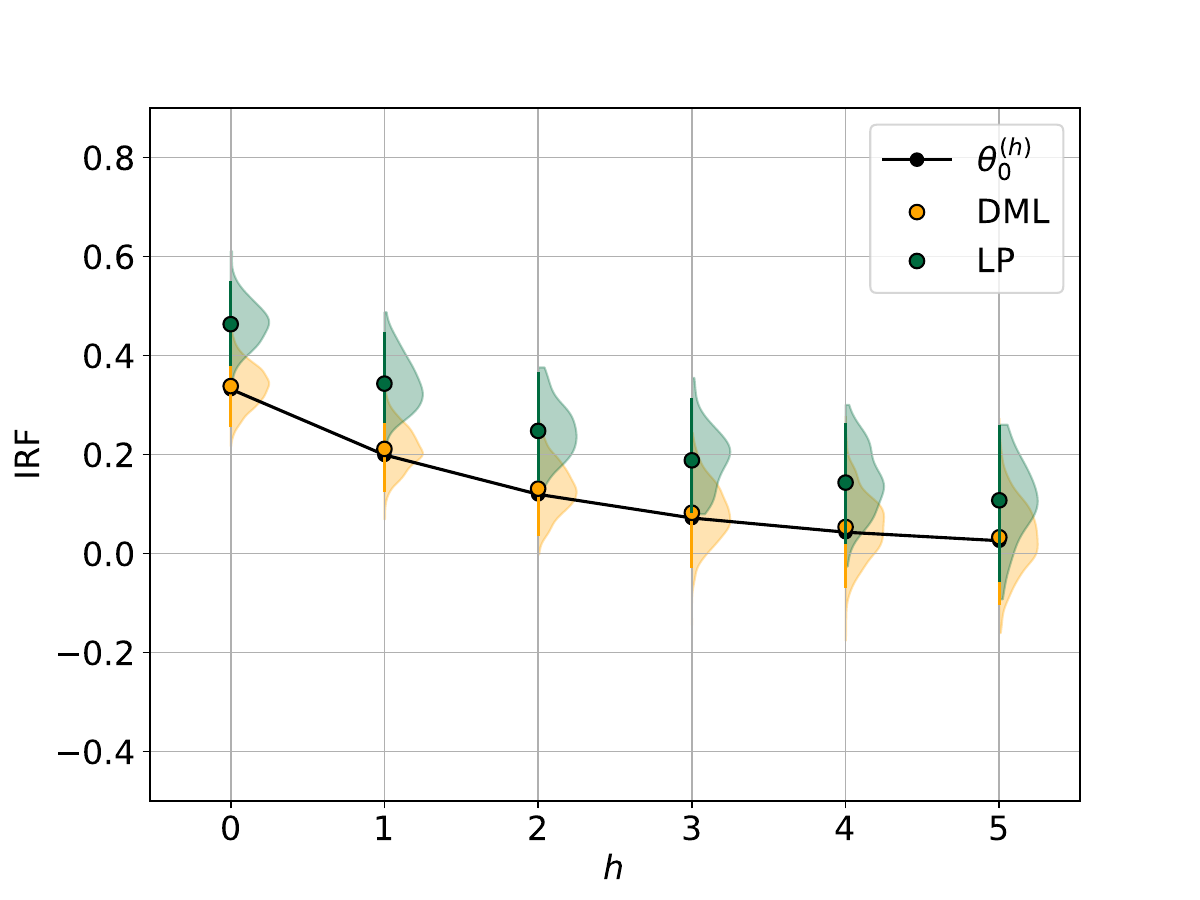}
        \caption{Nonlinear DGP}
        \label{fig:non_linear_dgp}
    \end{subfigure}\hfill
    \begin{subfigure}[b]{0.333\textwidth}
        \includegraphics[width=\textwidth]{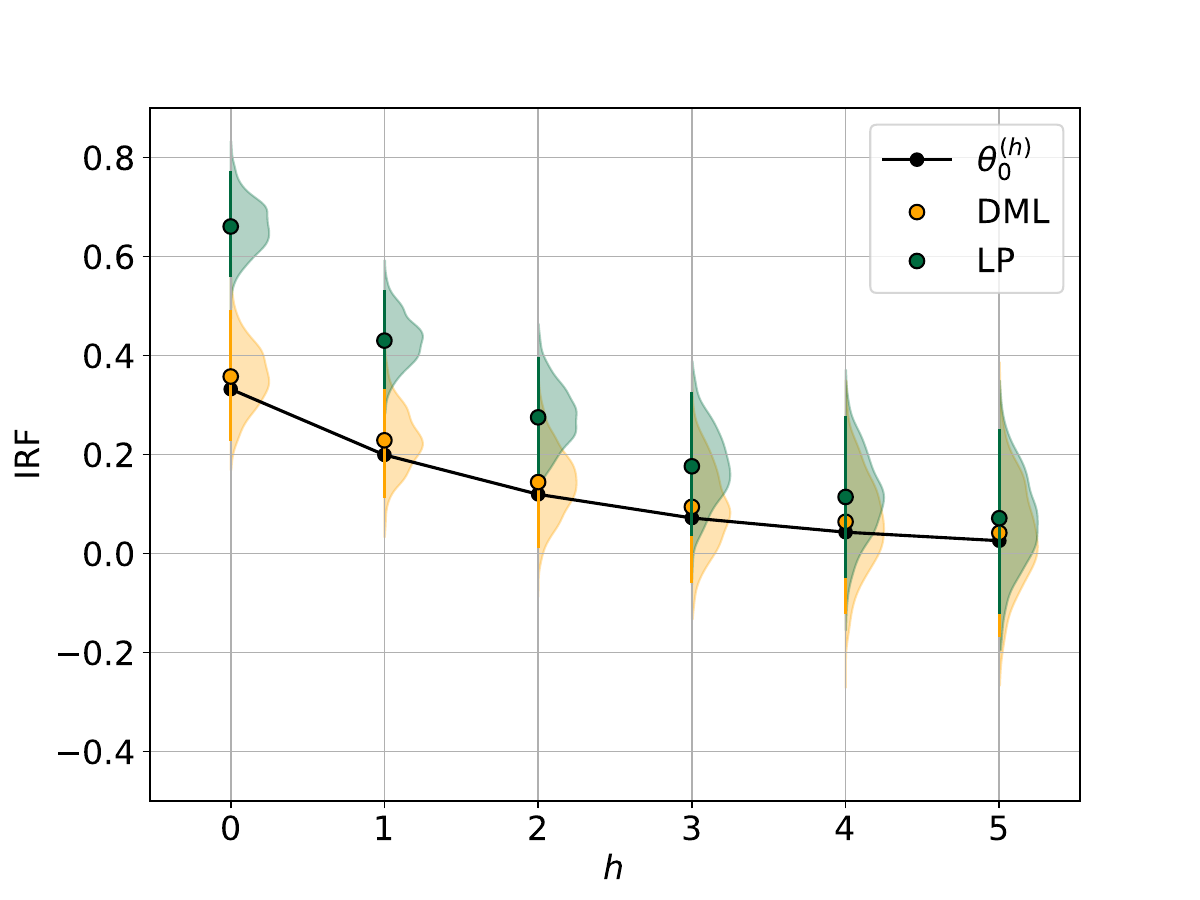}
        \caption{Linear DGP with interactions}
        \label{fig:linear_dgp_interactions}
    \end{subfigure}\hfill
    \begin{subfigure}[b]{0.333\textwidth}
        \includegraphics[width=\textwidth]{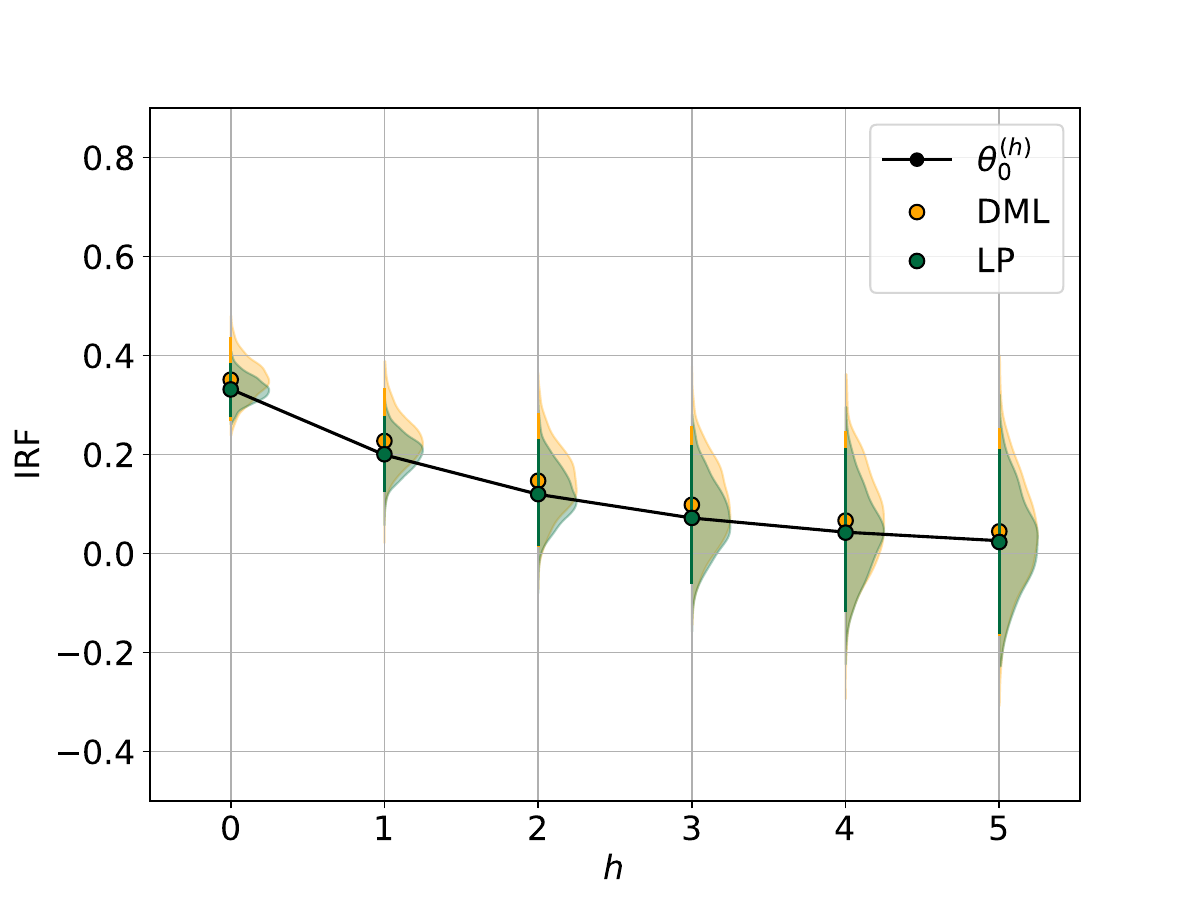}
        \caption{Linear DGP}
        \label{fig:linear_dgp_constant}
    \end{subfigure}
    \label{fig:lp_dml_comparison}
    \begin{minipage}{\textwidth}
        \footnotesize\textsc{Note}: The figure compares the true $\theta_0^{(h)}$ with estimates $\hat\theta^{(h)}$ of the IRF obtained for the setting with one stochastic process of length $T=8'000$ generated from three different DGPs: the nonlinear DGP studied in Section \ref{sec:simulation} (Figure \ref{fig:non_linear_dgp}), a linear DGP with interaction terms where $b(X_t) = 0.5 \sum_{i=1}^5X_{i,t}$ and $\tau(X_t) = \theta_0^{(0)} + \sum_{i=1}^3X_{i,t} - \sum_{i=4}^5X_{i,t}$ (Figure \ref{fig:linear_dgp_interactions}), and a linear DGP with $b(X_t) = 0.5 \sum_{i=1}^5X_{i,t}$ and $\tau(X_t) = \theta_0^{(0)}$ (Figure \ref{fig:linear_dgp_constant}).
        For the DML estimator, nuisance functions are estimated with random forests, we use 10-fold cross-fitting and set $k_T=T/10$.
        The parameters of the data generating processes are $n=12$,
        $\sigma_\epsilon=1.0$, $\gamma=0.6$, $p=2$,
        $q=1$, $\sigma_u=1.0$, $\alpha_A=0.3$,
        $\alpha_M=0.3$, $\rho_A=0.35$, $\rho_M=0.7$,
        $\beta_1=0.3$, $\beta_2=0.5$.
        For individual $h$, we show kernel density estimates of the distribution of $\hat\theta^{(h)}$ across $N=1'000$ realizations. The dots indicate the average, and the vertical lines the (2.5\%, 97.5\%)-quantile range of the distribution.
      \end{minipage}
\end{figure}

\section{Empirical Application}\label{sec:empirical}
As an illustration, we apply the proposed methodology to the empirical study conducted in \cite{Angrist2018}. Based on the same data set, we revisit the estimation of the effect of U.S. monetary policy decisions on macroeconomic aggregates using modern machine learning estimators. Our monthly observations cover the period from July 1989 to December 2008 and we estimate the effect of federal funds target rate changes on a set of macroeconomic outcome variables. As predictors, we consider the same futures-based expectation measure for the federal funds rate as in \cite{Angrist2018}, as well as the level of the target rate at the end of the prior month and its change, a scale factor that accounts for when within the month the Federal Reserve's Open Market Committee (FOMC) meeting was scheduled, dummies for months with a scheduled FOMC meeting, as well as measures for inflation and unemployment (including lagged values).\footnote{This corresponds to the model specification labelled OP$_{F2}$ in \cite{Angrist2018}.} Compared to the setup in \cite{Angrist2018}, we make the following modifications to accommodate for the change in estimation technique from linear models to flexible nonparametric machine learners. First, we exclude dummies for monthly fixed effects and special events like Y2K and the September 11, 2001 attacks, because IRF estimates are based on out-of-sample predictions in our approach. Second, we drop manually constructed interaction variables, as machine learning estimators are able to infer these effects from the data, if they are present. Third, we include up to four lags of inflation, unemployment rate and of the target variable. Our treatment variable $D_t$ can assume one of five discrete values $d\in\{-0.5\%, -0.25\%, \allowbreak 0.0\%, 0.25\%, 0.5\%\}$. The propensity score model for $e_0(d,x) = \Pr(D_t=d|X_t=x)$ is implemented as an ordinal classification \citep{OrdinalClassification2001}. To estimate the conditional mean nuisance functions $\mu_0(d, x, h)$, we include the treatments as dummy variables in $X_t$ and estimate/tune one joint model for all types of treatments (an S-learner in the terminology of \cite{Kuenzel2019}). As in our simulation experiments in Section \ref{sec:simulation}, we explore random forests and a gradient boosted trees algorithm as estimators for the nuisance functions. Details on tuning of the machine learners are provided in the Online Appendix~\ref{app:tuning_empirical}. We apply our cross-fitting approach with $K=10$. In line with \cite{Angrist2018}, we estimate impulse responses up to $H=24$ months and given the limited sample size set $k_T = 24$ to retain as much data as possible for estimation. As advocated in Section \ref{sec:cross-fitting}, we remove the $k_T$ observations from the data used to estimate the machine learners.

We present estimated IRFs for the federal funds rate and the unemployment rate in Figure~\ref{fig:IRF_FFED_UNRATE}. Additionally, estimates for the effects on the bond yield curve are provided in Figure~\ref{fig:IRF_YieldCurve} in the Online Appendix. Predictive performance of both types of machine learners explored are comparable, but random forests appear to produce slightly smoother impulse responses, which, similar to standard local projection estimation \citep{Barnichon2019}, is advantageous. We thus focus on results using random forests and for 25 basis point changes of the target rate.\footnote{Additional results are available from the authors on request.} Overall, we identify similar dynamics in the outcome variables as in \cite{Angrist2018}. However, for the federal funds rate, we find a larger absolute effect for target rate decreases of around 50 basis points that remains significant at the 5\% level until around one year after the target rate decrease. In comparison, the peak effect of a target rate increase is around one percentage point, occurring one year after the increase, though it is accompanied by greater uncertainty. Furthermore, we do not find that either expansionary or tightening monetary policy has a significant effect on the unemployment rate. Looking at the effects on the yield curve, in line with \cite{Angrist2018}, changes in the federal funds target rate have a higher initial impact on short-term yields than on long-term yields, as expected. Moreover, significant effects are observed only for shorter tenures. In contrast to \cite{Angrist2018}, however, our estimates do not suggest that term rates are less sensitive to policy accommodation than to tightening. Finally, the empirical application also provides evidence that even in settings with limited sample sizes commonly encountered in macroeconomic studies, sufficiently accurate estimates of the nonparametric nuisance functions can be obtained in order to produce IRF estimates comparable to ones obtained with conventional techniques, without having to construct all of the (interaction) variables and nonlinearities manually.
%In fact, the point estimate for the 3-month T-bill right after a target rate increase is around +0.1 percentage points, whereas for a target rate decrease it is around -0.3 percentage points, albeit with higher uncertainty and the 95\% confidence interval ranging from -0.48 to -0.07 percentage points.
\begin{figure}[htbp]
    \caption{Estimated cumulative effects of target rate changes on the federal funds rate and the unemployment rate}
    \centering
    \begin{subfigure}[b]{\textwidth}
        \caption{Federal Funds Rate}
        \includegraphics[scale=0.30]{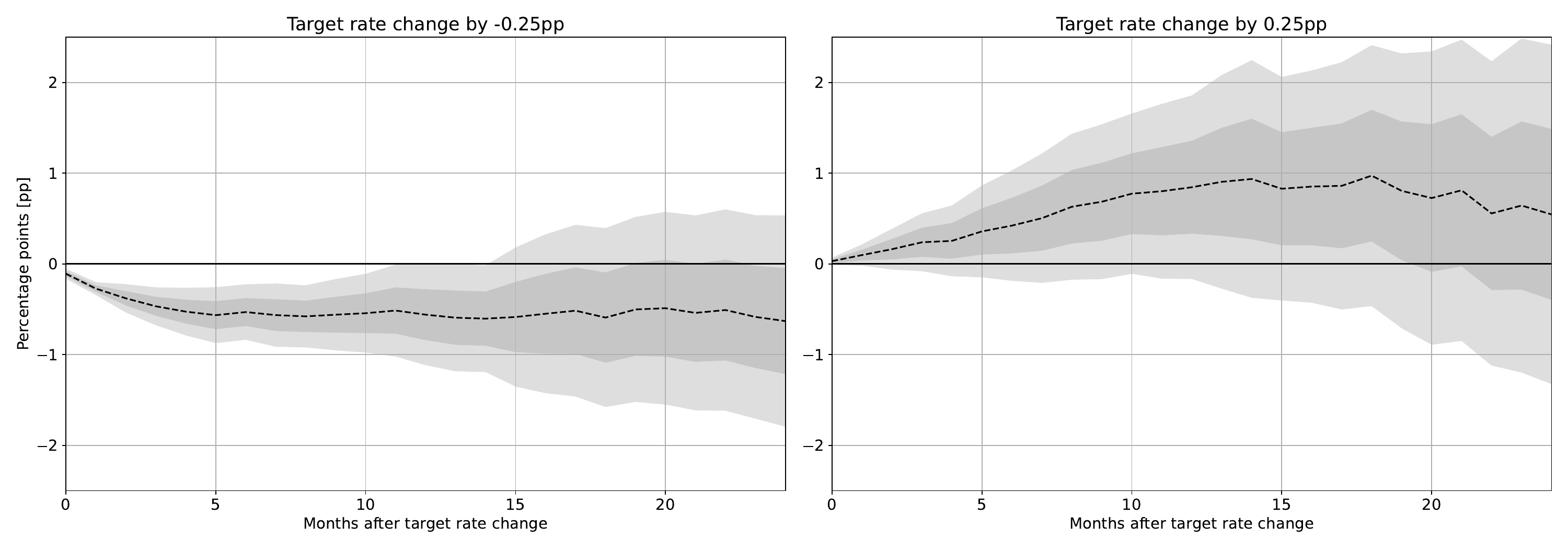}
        \label{fig:IRF_FFED}
    \end{subfigure}
    \begin{subfigure}[b]{\textwidth}
        \caption{Unemployment Rate}
        \includegraphics[scale=0.30]{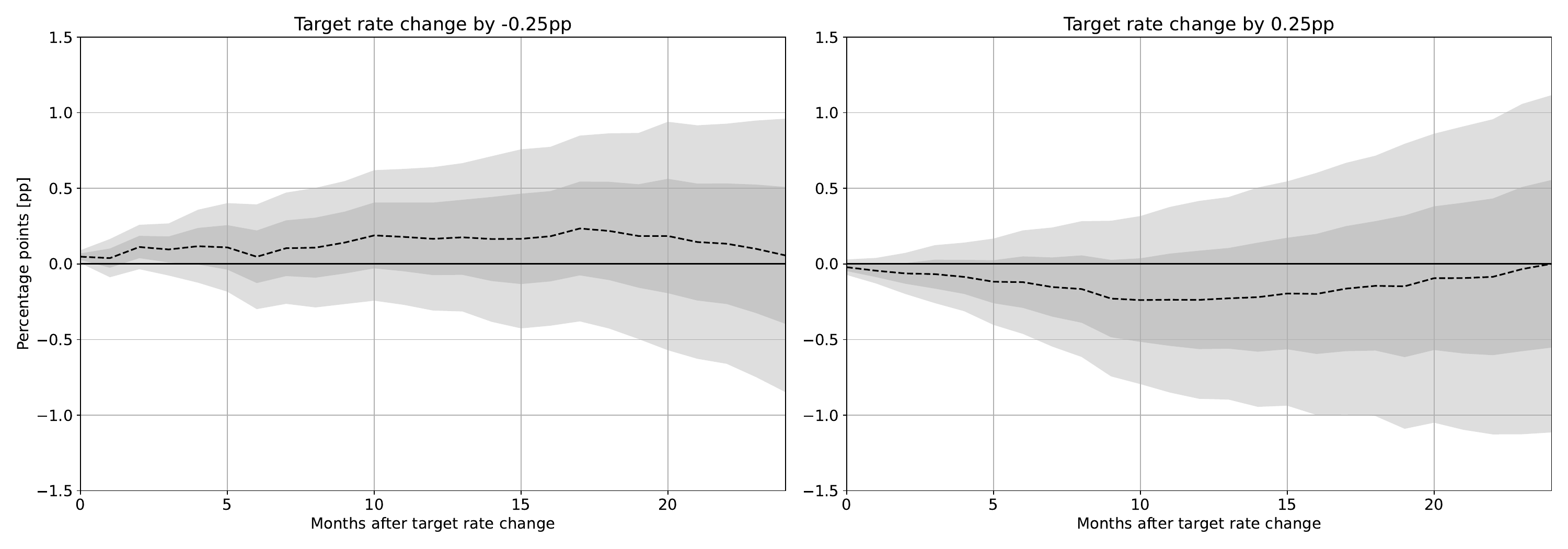}
        \label{fig:IRF_UNRATE}
    \end{subfigure}
    \label{fig:IRF_FFED_UNRATE}
    \begin{minipage}{\textwidth}
        \footnotesize\textsc{Note}: The figure shows the estimated cumulative effects of target rate changes on (a) the federal funds rate and (b) the unemployment rate for the time period July 1989 to December 2008. The left (right) column shows the effect of decreasing (increasing) the target rate by 25 basis points. The nuisance functions are estimated by random forests using 10-fold cross-fitting removing $k_T=24$ observations from the estimation sample at the boundary to the inference sample. The shaded areas represent 68\% and 95\% confidence intervals  with fixed-bandwidth critical values \citep{Kiefer2005}. The variances are estimated using bandwidths determined by the procedure of \cite{NeweyWest1994}.
      \end{minipage}
\end{figure}

\section{Conclusion}\label{sec:conclusion}
We have shown how to adopt recent ideas from the causal inference framework to flexibly estimate IRFs. This presents a novel estimator that can rely on fully nonparametric relations between treatment and outcome variables, opening up the possibility to use flexible machine learning approaches to estimate IRFs. Our theoretical results outline conditions for this estimator to be consistent and asymptotically normally distributed at the parametric rate. Simulations where a highly nonlinear time series is treated over time corroborate these results. Alternative estimators often used in practice estimate the IRF with a larger bias and fail to allow valid inference. Finally, we have illustrated the proposed methodology empirically by applying it to the estimation of the effects of macroeconomic shocks, allowing us to estimate IRFs of U.S. monetary policy decisions on macroeconomic aggregates using modern machine learning estimators. For future work, several semiparametric techniques available in the \emph{i.i.d.} setting could be extended to time series settings in order to develop our approach further. This includes approaches for continuous treatments \citep{Colangelo2023}, instrumental variables \citep{Chernozhukov2018}, the estimation of other moments \citep{Chernozhukov2022,Chernozhukov2022b}, or conditional treatment effects \citep{Zimmert2019,Fan2022,Semenova2020,KennedyTowards2023} to estimate generalized IRFs. In general, future research could extend the theoretical results developed in this paper to a broader class of estimands relying on linear scores \citep{Chernozhukov2018}.

\clearpage

\begin{appendix}
\renewcommand{\theequation}{A\arabic{equation}}
\setcounter{equation}{0}

\section*{Proofs}\label{app:proofs}
\begin{proof}[Proof of Theorem \ref{thm:identification}]
    We have that
    \begin{align*}
        \E\left[Y_{t+h}(d)\right] &= \E\left[\E\left[Y_{t+h}(d)\vert D_t=d, X_t\right]\right]= \E\left[\E\left[Y_{t+h}\vert D_t=d, X_t\right]\right] =\E\left[\mu_0(d,X_t,h)\right],
    \end{align*}
    where the second equality follows from Assumption \ref{ass:identification}.\ref{ass:identification_cia} and the last equality from Assumption \ref{ass:identification}.\ref{ass:identification_sutva}. It thus follows that $\E\left[Y_{t+h}(1)-Y_{t+h}(0)\right] = \E\left[\mu_0(1,X_t,h)-\mu_0(0,X_t,h)\right]$.
\end{proof}

\begin{proof}[Proof of Theorem \ref{thm:oracle}]
    Under Assumption \ref{ass:oracle} the stochastic process $\mathcal{G}^{(h)}$ satisfies the assumptions for the Central Limit Theorem for $\alpha$-mixing processes \citep{herrndorf1984}.
\end{proof}

\begin{proof}[Proof of Theorem \ref{thm:ate}]
    The proof follows a similar strategy to the one in \cite{Wager2022} and \cite{Chernozhukov2018} for the \emph{i.i.d.} case. For the sake of legibility, we ease the notation and drop the reference to the forecast horizon $h$. Let $\tilde\theta$ be the oracle IRF estimator as defined in Theorem \ref{thm:oracle}. Denote by $\mathcal{E}_T$ the event that $(\hat{\mu}_{S_{-i}}(d,x,h), \hat{e}_{S_{-i}}(x)) \in \Xi_{T}$ for all $i=1, ..., K$. We have that
    \begin{align*}
        \sqrt{T} (\hat\theta - \theta_0) &= \sqrt{T} (\hat\theta - \tilde\theta) + \sqrt{T}(\tilde\theta - \theta_0)\\
        &= \sqrt{T} \left( \sum_{i=1}^{K}\frac{|\mathcal{T}_i|}{T} \hat{\theta}_{S_i} - \tilde\theta\right) + \sqrt{T}(\tilde\theta - \theta_0)\\
        &=\sum_{i=1}^{K}\sqrt{T} \left(\frac{|\mathcal{T}_i|}{T} \hat{\theta}_{S_i} - \frac{|\mathcal{T}_i|}{T} \tilde\theta_{S_i}\right) + \sqrt{T}(\tilde\theta - \theta_0),
    \end{align*}
    where $\tilde\theta_{S_i} = \vert\mathcal{T}_i\vert^{-1} \sum_{t\in \mathcal{T}_i}g(Z_t; \Gamma_0)$. We have to show that the summation converges to zero in probability. Note that since $K$ is a finite integer, it suffices to show convergence for one summand. We begin by expanding a summand for some arbitrary $i$ as
    \begin{align*}
        \frac{|\mathcal{T}_i|}{T} \hat{\theta}_{S_i} - \frac{|\mathcal{T}_i|}{T}  \tilde\theta_{S_i} &= \frac{1}{|\mathcal{T}_i|} \sum_{t \in \mathcal{T}_i} g(Z_t; \hat\Gamma_{S_{-i}}) - g(Z_t;\Gamma_0) \\
        &= \frac{1}{|\mathcal{T}_i|} \sum_{t \in \mathcal{T}_i} \bigg(\hat{\mu}_{S_{-i}}(1,X_t) - \mu_0(1,X_t) + \frac{D_t}{\hat{e}_{S_{-i}}(X_t)}(Y_t - \hat{\mu}_{S_{-i}}(1,X_t)) \\
        &\qquad\qquad\qquad\qquad - \frac{D_t}{{e}_0(X_t)}(Y_t - {\mu}_0(1,X_t))\bigg)\\
        &\quad -\frac{1}{|\mathcal{T}_i|} \sum_{t \in \mathcal{T}_i} \bigg(\hat{\mu}_{S_{-i}}(0,X_t) - \mu_0(0,X_t) + \frac{1-D_t}{1-\hat{e}_{S_{-i}}(X_t)}(Y_t - \hat{\mu}_{S_{-i}}(0,X_t)) \\
        &\qquad\qquad\qquad\qquad - \frac{1-D_t}{1-{e}_0(X_t)}(Y_t - {\mu}_0(0,X_t))\bigg).
    \end{align*}
    We will prove convergence for the first summation, the second summation can be treated analogously. The first summation can be decomposed as
    \begin{align*}
        \frac{1}{|\mathcal{T}_i|} \sum_{t \in \mathcal{T}_i} & \left(\muError + \frac{D_t}{\hat{e}_{S_{-i}}(X_t)}(Y_t - \hat{\mu}_{S_{-i}}(1,X_t)) - \frac{D_t}{{e}_0(X_t)}(Y_t - {\mu}_0(1,X_t))\right)\\
        =&\underbrace{\frac{1}{|\mathcal{T}_i|} \sum_{t \in \mathcal{T}_i} \left( \muError \right) \left(1-\frac{D_t}{e_0(X_t)} \right)}_{= P_1}\\
        &\quad + \underbrace{\frac{1}{|\mathcal{T}_i|} \sum_{t \in \mathcal{T}_i} D_t \left(Y_t - \mu_0(1,X_t)\right) \left(\invpError \right)}_{= P_2}\\
        &\quad + \underbrace{\frac{1}{|\mathcal{T}_i|} \sum_{t \in \mathcal{T}_i} D_t \left(\muError\right) \left(\invpError\right)}_{= P_3}.
    \end{align*}
    We will show that $P_k = o_p(T^{-1/2})$ for $k\in\{1,2,3\}$.

    Term $P_1$: From the squared $L_2$-norm of $P_1$ we have that
    \begin{align*}
        \E\Bigg[ &\Bigg(\frac{1}{|\mathcal{T}_i|} \sum_{t \in \mathcal{T}_i} \left( \muError \right) \left(1-\frac{D_t}{e_0(X_t)} \right) \Bigg)^2 \Bigg]\\
        &= \frac{1}{|\mathcal{T}_i|^2}\E\left[\E\left[\left( \sum_{t \in \mathcal{T}_i} \left( \muError \right) \left(1-\frac{D_t}{e_0(X_t)} \right) \right)^2 \Bigg| S_{-i} \right] \right]\\
        %&= \frac{1}{|\mathcal{T}_i|^2}\E\left[\Var\left[ \sum_{t \in \mathcal{T}_i} \left( \muError \right) \left(1-\frac{D_t}{e_0(X_t)} \right) \Bigg| S_{-i} \right] \right]\\
        &= \frac{1}{|\mathcal{T}_i|^2}\E\Bigg[ \sum_{t \in \mathcal{T}_i} \sum_{s \in \mathcal{T}_i} \E \Bigg[\left( \muError \right) \left(1-\frac{D_t}{e_0(X_t)} \right) \times\\
        &\phantom{= \frac{1}{|\mathcal{T}_i|^2}\E\Bigg[ \sum_{t \in \mathcal{T}_i} \sum_{s \in \mathcal{T}_i} \E \Bigg[} \left( \hat{\mu}_{S_{-i}}(1,X_s) - \mu_0(1,X_s) \right) \left(1-\frac{D_s}{e_0(X_s)} \right) \Bigg| S_{-i} \Bigg] \Bigg]\\
        %&=\frac{1}{|\mathcal{T}_i|^2}\E\left[ \sum_{t \in \mathcal{T}_i} \E \left[\left( \muError \right)^2 \left(1-\frac{D_t}{e_0(X_t)} \right)^2 \Bigg| S_{-i} \right] \right]\\
        &=\frac{1}{|\mathcal{T}_i|^2} \sum_{t \in \mathcal{T}_i} \E\left[\left( \muError \right)^2 \left(\frac{1}{e_0(X_t)} -1\right) \right]\\
        &\leq \frac{1}{\eta |\mathcal{T}_i|^2 } \sum_{t \in \mathcal{T}_i} \E\left[\left( \muError \right)^2 \right] = \frac{o_p(1)}{|\mathcal{T}_i|} = o_p(T^{-1}).
    \end{align*}
    The third equality follows from the fact that the sum has mean zero and the inequality at the end follows from Assumption \ref{ass:risk_decay}.\ref{ass:risk_decay_prob}. The step from the second to the third equality follows from Assumption \ref{ass:X_richness}, which gives for $t<s$ that
    \begin{align*}
        \E \Bigg[&\left( \muError \right) \left(1-\frac{D_t}{e_0(X_t)} \right) \left( \hat{\mu}_{S_{-i}}(1,X_s) - \mu_0(1,X_s) \right) \left(1-\frac{D_s}{e_0(X_s)} \right) \Bigg| S_{-i} \Bigg]\\
        &= \E \Bigg[\left( \muError \right) \left(1-\frac{D_t}{e_0(X_t)} \right) \left( \hat{\mu}_{S_{-i}}(1,X_s) - \mu_0(1,X_s) \right)\times\\
        &\phantom{= \E \Bigg[}\;\underbrace{\E\left[1-\frac{D_s}{e_0(X_s)} \Bigg| X_s, \{(X_u, Y_u, D_u) : u\in\mathcal{T}_i, u < s)\}, S_{-i} \right]}_{=0} \Bigg| S_{-i} \Bigg]= 0.
    \end{align*}
    The same argument can be made for $t>s$. Convergence in the last step finally follows from the fact that $K$ is a fixed and finite integer and thus $\lim_{T\to\infty}|\mathcal{T}_i|=\lim_{T\to\infty} T/K=\infty$ for all $i$ and by noting that conditional on $S_{-i}$, the nuisance function estimator is non-stochastic, and thus conditional on the event $\mathcal{E}_T$, we have that
    \begin{align*}
        \sup_{t\in\mathcal{T}_i} \E\left[\left( \muError \right)^2 \Bigg\vert S_{-i} \right] & \leq\sup_{t\in\mathcal{T}_i}  \sup_{\mu \in \Xi_{T}} \E\left[\left(\mu(1,X_t)-\mu_0(1,X_t)\right)^2  \Bigg\vert S_{-i}\right]\\
        & \leq \sup_{t\in\mathcal{T}_i} \sup_{\mu \in \Xi_{T}} \norm{\mu(1,X_t)-\mu_0(1,X_t)}_2^2 = o_p(1)
    \end{align*}
    since by Assumption \ref{ass:risk_decay} $\sup_{t\in\mathcal{T}_i} \sup_{\mu \in \Xi_{T}} \norm{\mu(D_t,X_t)-\mu_0(D_t,X_t)}_2^2 = (r_{\mu,T})^2 = o_p(1)$. By Lemma 6.1 in \cite{Chernozhukov2018} it follows that $ \sup_{t\in\mathcal{T}_i} \E[( \hat{\mu}_{S_{-i}}(1,X_t) \allowbreak - \mu_0(1,X_t) )^2] = o_p(1)$, and we thus conclude that $P_1$ is $o_p(T^{-1/2})$.

    Term $P_2$: Similarly, from the squared $L_2$-norm of $P_2$ we have that
    \begin{align*}
        \E\Bigg[&\Bigg(\frac{1}{|\mathcal{T}_i|} \sum_{t \in \mathcal{T}_i} D_t \left(Y_t - \mu_0(1,X_t)\right) \left(\invpError \right)   \Bigg)^2\Bigg]\\
        &= \frac{1}{|\mathcal{T}_i|^2} \E\Bigg[\E\Bigg[\Bigg( \sum_{t \in \mathcal{T}_i} D_t \left(Y_t - \mu_0(1,X_t)\right) \left(\invpError \right)   \Bigg)^2 \Bigg| S_{-i}\Bigg]\Bigg]\\
        %&= \frac{1}{|\mathcal{T}_i|^2} \E\Bigg[\Var\Bigg[\sum_{t \in \mathcal{T}_i} D_t \left(Y_t - \mu_0(1,X_t)\right) \left(\invpError \right) \Bigg| S_{-i}\Bigg]\Bigg]\\
        &= \frac{1}{|\mathcal{T}_i|^2} \E\Bigg[ \sum_{t \in \mathcal{T}_i} \sum_{s \in \mathcal{T}_i} \E\Bigg[ D_t \left(Y_t - \mu_0(1,X_t)\right) \left(\invpError \right) \times\\
        &\phantom{= \frac{1}{|\mathcal{T}_i|^2} \E\Bigg[ \sum_{t \in \mathcal{T}_i} \sum_{s \in \mathcal{T}_i} \E\Bigg[ } D_s \left(Y_s - \mu_0(1,X_s)\right) \left(\frac{1}{\hat{e}_{S_{-i}}(X_s)} -\frac{1}{e_0(X_s)}\right) \Bigg| S_{-i}\Bigg]\Bigg]\\
        %&= \frac{1}{|\mathcal{T}_i|^2} \E\Bigg[ \sum_{t \in \mathcal{T}_i} \E\Bigg[ D_t \left(Y_t - \mu_0(1,X_t)\right)^2 \left(\invpError \right)^2 \Bigg| S_{-i}\Bigg]\Bigg]\\
        &= \frac{1}{|\mathcal{T}_i|^2} \sum_{t \in \mathcal{T}_i} \E\Bigg[  D_t \left(Y_t - \mu_0(1,X_t)\right)^2 \left(\invpError \right)^2 \Bigg]\\
        %&\leq \frac{1}{|\mathcal{T}_i|^2} \sum_{t \in \mathcal{T}_i} \E\Bigg[ \left(Y_t - \mu_0(1,X_t)\right)^2 \left(\invpError \right)^2 \Bigg]\\
        %&\leq \frac{1}{|\mathcal{T}_i|^2}\epsilon^2_1  \sum_{t \in \mathcal{T}_i} \E\Bigg[\left(\invpError \right)^2 \Bigg] \\
        &\leq \frac{1}{|\mathcal{T}_i|^2} \frac{\epsilon^2_1}{\eta^2} \sum_{t \in \mathcal{T}_i} \E\left[\left( \hat{e}_{S_{-i}}(X_t) - e_0(X_t) \right)^2 \right] = \frac{o_p(1)}{|\mathcal{T}_i|} = o_p(T^{-1}).\\
    \end{align*}
    The third equality follows from the fact that the sum has mean zero, and the inequality follows from Assumptions \ref{ass:risk_decay}.\ref{ass:risk_decay_prob} and \ref{ass:risk_decay}.\ref{ass:risk_decay_variance}. The crucial step is again to establish that the variance of the sum equals the sum of the variances (from the second to the third equality). This follows from Assumption \ref{ass:X_richness}, which gives that whenever $t<s$, it holds that
    \begin{align*}
        \E&\Bigg[ D_t \left(Y_t - \mu_0(1,X_t)\right) \left(\invpError \right) \times \\
        &\phantom{\E\Bigg[ D_t }\;D_s \left(Y_s - \mu_0(1,X_s)\right) \left(\frac{1}{\hat{e}_{S_{-i}}(X_s)} -\frac{1}{e_0(X_s)}\right) \Bigg| S_{-i}\Bigg]\\
        &= \E\Bigg[ D_t \left(Y_t - \mu_0(1,X_t)\right) \left(\invpError \right) \times \\
        &\phantom{\E\Bigg[ D_t }\;\underbrace{\E\left[D_s(Y_s - \mu_0(1,X_s))\Bigg|X_s, \{(X_u, Y_u, D_u): u \in \mathcal{T}_i, u < s \}, S_{-i} \right]}_{=0}  \left(\frac{1}{\hat{e}_{S_{-i}}(X_s)} -\frac{1}{e_0(X_s)}\right)  \Bigg| S_{-i}\Bigg]=0,
    \end{align*}
    and the same argument can be made for $t>s$. Convergence in the last step finally follows from the fact that $K$ is a fixed and finite integer and thus $\lim_{T\to\infty}|\mathcal{T}_i|=\lim_{T\to\infty} T/K=\infty$ for all $i$ and by noting that conditional on $S_{-i}$, the nuisance function estimator is non-stochastic, and thus conditional on the event $\mathcal{E}_T$, we have that
    \begin{align*}
        \sup_{t\in\mathcal{T}_i} \E\left[\left( \hat{e}_{S_{-i}}(X_t) - e_0(X_t) \right)^2 \Bigg\vert S_{-i} \right] & \leq\sup_{t\in\mathcal{T}_i}  \sup_{e \in \Xi_{T}}  \E\left[\left(e(X_t) - e_0(X_t)\right)^2  \Bigg\vert S_{-i}\right]\\
        & \leq \sup_{t\in\mathcal{T}_i} \sup_{e \in \Xi_{T}} \norm{e(X_t) - e_0(X_t)}_2^2 = (r_{e,T})^2
    \end{align*}
    by definition of the rate $r_{e,T}$ in Assumption \ref{ass:risk_decay}. By Lemma 6.1 in \cite{Chernozhukov2018} and Assumption \ref{ass:risk_decay} it follows that $ \sup_{t\in\mathcal{T}_i} \E\left[\left( \hat{e}_{S_{-i}}(X_t) - e_0(X_t)  \right)^2 \right] = o_p(1)$, and we thus conclude that $P_2$ is $o_p(T^{-1/2})$.

    Term $P_3$: Finally, from the $L_1$-norm of $P_3$ we get that
    \begin{align*}
        \E\Bigg[\Bigg| & \frac{1}{|\mathcal{T}_i|} \sum_{t \in \mathcal{T}_i} D_t \left(\muError\right) \left(\invpError\right) \Bigg|\Bigg]\\
        &\leq \E\Bigg[ \frac{1}{|\mathcal{T}_i|} \sum_{t \in \mathcal{T}_i} D_t \left|\muError\right| \left|\invpError\right| \Bigg]\\
        %&= \frac{1}{|\mathcal{T}_i|} \sum_{t \in \mathcal{T}_i} \E\Bigg[  D_t \left|\muError\right| \left|\invpError\right| \Bigg]\\
        &= \frac{1}{|\mathcal{T}_i|} \sum_{t \in \mathcal{T}_i} \E\Bigg[  D_t \left|\hat\mu_{S_{-i}}(D_t,X_t) - \mu_0(D_t,X_t)\right| \left|\invpError\right| \Bigg]\\
        %&\leq \frac{1}{|\mathcal{T}_i|} \sum_{t \in \mathcal{T}_i} \E\Bigg[ \left|\hat\mu_{S_{-i}}(D_t,X_t) - \mu_0(D_t,X_t)\right| \left|\invpError\right| \Bigg]\\
        &\leq \frac{1}{\eta}\frac{1}{|\mathcal{T}_i|} \sum_{t \in \mathcal{T}_i} \E\Bigg[ \left|\hat\mu_{S_{-i}}(D_t,X_t) - \mu_0(D_t,X_t)\right| \left|\hat{e}_{S_{-i}}(X_t) - e_0(X_t)\right| \Bigg] = \frac{o_p(1)}{T^{1/2}},
    \end{align*}
    where the last inequality follows from Assumption \ref{ass:risk_decay}.\ref{ass:risk_decay_prob}.  Convergence in the last equality finally follows from the fact that $K$ is a fixed and finite integer and thus $\lim_{T\to\infty}|\mathcal{T}_i|=\lim_{T\to\infty} T/K = \infty$ for all $i$ and by noting that conditional on $S_{-i}$, the nuisance function estimators are non-stochastic, and thus conditional on the event $\mathcal{E}_T$, we have that
    \begin{align*}
        &\sup_{t\in\mathcal{T}_i} \E\Bigg[ \left|\hat\mu_{S_{-i}}(D_t,X_t) - \mu_0(D_t,X_t)\right| \left|\hat{e}_{S_{-i}}(X_t) - e_0(X_t)\right| \Bigg\vert S_{-i} \Bigg]\\
        &\;\leq\sup_{t\in\mathcal{T}_i} \E\Bigg[ \left|\hat\mu_{S_{-i}}(D_t,X_t) - \mu_0(D_t,X_t)\right|^2\Bigg\vert S_{-i} \Bigg]^{1/2} \sup_{t\in\mathcal{T}_i} \E\Bigg[\left|\hat{e}_{S_{-i}}(X_t) - e_0(X_t)\right|^2 \Bigg\vert S_{-i} \Bigg]^{1/2}\\
        &\; \leq\sup_{t\in\mathcal{T}_i} \sup_{\mu \in \Xi_{T}}\E\Bigg[ \left|\mu(D_t,X_t)-\mu_0(D_t,X_t)\right|^2\Bigg\vert S_{-i} \Bigg]^{1/2} \sup_{t\in\mathcal{T}_i} \sup_{e \in \Xi_{T}}\E\Bigg[\left|e(X_t) - e_0(X_t)\right|^2 \Bigg\vert S_{-i} \Bigg]^{1/2}\\
        &\;\leq \sup_{t\in\mathcal{T}_i} \sup_{\mu \in \Xi_{T}} \norm{\mu(D_t,X_t)-\mu_0(D_t,X_t)}_2 \sup_{t\in\mathcal{T}_i} \sup_{e \in \Xi_{T}} \norm{e(X_t) - e_0(X_t)}_2 = r_{\mu,T} \cdot r_{e,T}
    \end{align*}
    by Cauchy-Schwarz and the definition of the rates $r_{\mu,T}$ and $r_{e,T}$ in Assumption \ref{ass:risk_decay}. By Lemma 6.1 in \cite{Chernozhukov2018} and Assumption \ref{ass:risk_decay}.\ref{ass:risk_decay_rate} it follows that $\sup_{t\in\mathcal{T}_i} \E\left[ \left|\muError\right| \left|\hat{e}_{S_{-i}}(X_t) - e_0(X_t)\right| \right] = o_p(T^{-1/2})$.

    We have shown that $P_1$, $P_2$ and $P_3$ are $o_p(T^{-1/2})$. It follows that $\frac{|\mathcal{T}_i|}{T} \hat{\theta}_{S_i} - \frac{|\mathcal{T}_i|}{T}  \tilde\theta_{S_i} = o_p(T^{-1/2})$ for all $i=1, ..., K$ and we can thus conclude that
    \begin{equation*}
        \sqrt{T} (\hat\theta - \theta) = \sum_{i=1}^{K}\underbrace{\sqrt{T} \left(\frac{|\mathcal{T}_i|}{T} \hat{\theta}_{S_i} - \frac{|\mathcal{T}_i|}{T} \tilde\theta_{S_i}\right)}_{=o_p(1)} + \underbrace{\sqrt{T}(\tilde\theta - \theta_0)}_{\overset{\mathrm{d}}{\to}\mathcal{N}(0, V_0)}.
    \end{equation*}
\end{proof}

\begin{lemma}\label{lemma:asym_independence}
Let $\Xi$ be a convex subset of some normed vector space, $g: \mathbb{R}^n \times \Xi \to \mathbb{R}$ be a measurable function, $\{Z_t : t \in \mathcal{T}\}$ an $\alpha$-mixing stochastic process with mixing coefficient $\alpha(m)$ and $Z_t$ a real-valued random vector. For $s\geq t$ denote by $\mathcal{F}_{t}^s = \sigma(Z_t,\dots, Z_s)$ the smallest $\sigma$-field such that $Z_t, \dots, Z_{s}$ are measurable. If $\norm{\sup_{\Gamma\in\Xi} g(Z_t, \Gamma)}_r < \infty$ for some $r\geq p \geq 1$, then for all $t\in\mathcal{T}$
$$\sup_{\Gamma\in\Xi}\E\left[ g(Z_t, \Gamma) | \mathcal{F}^{t-k}_{-\infty} \right] = \sup_{\Gamma\in\Xi}\E\left[ g(Z_t, \Gamma) \right] + O_p(\alpha(k)^{1/p-1/r}).$$
\end{lemma}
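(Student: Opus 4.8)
The plan is to reduce the claim to a uniform-in-$\Gamma$ bound on the single-function remainder
$W_\Gamma := \E\!\left[g(Z_t,\Gamma)\mid\mathcal{F}^{t-k}_{-\infty}\right] - \E\!\left[g(Z_t,\Gamma)\right]$
and then pass to the suprema via the elementary inequality $|\sup_\Gamma a_\Gamma-\sup_\Gamma b_\Gamma|\le\sup_\Gamma|a_\Gamma-b_\Gamma|$. Writing $G:=\sup_{\Gamma\in\Xi}|g(Z_t,\Gamma)|$, the hypothesis gives $C_G:=\norm{G}_r<\infty$, and since $r\ge1$ we have $\norm{g(Z_t,\Gamma)-\E g(Z_t,\Gamma)}_r\le 2C_G$ for every $\Gamma$. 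So it suffices to establish $\norm{W_\Gamma}_p\le C\,\alpha(k)^{1/p-1/r}$ with a constant $C$ depending only on $C_G$ and a universal mixing-inequality constant, not on $\Gamma$; this uniformity is what will make the $\sup_\Gamma$ step go through.

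For the non-trivial regime $1\le p<r$ I would argue by duality. Since $W_\Gamma$ is $\mathcal{F}^{t-k}_{-\infty}$-measurable, $\norm{W_\Gamma}_p=\sup\{\E[W_\Gamma V]:V\ \text{is}\ \mathcal{F}^{t-k}_{-\infty}\text{-measurable},\ \norm{V}_{p'}\le1\}$ with $p'=p/(p-1)$, and for such $V$, $\E[W_\Gamma V]=\E[(g(Z_t,\Gamma)-\E g(Z_t,\Gamma))V]=\Cov(g(Z_t,\Gamma),V)$. Here $g(Z_t,\Gamma)$ is $\sigma(Z_t)$-measurable and $V$ is $\mathcal{F}^{t-k}_{-\infty}$-measurable; as these $\sigma$-fields are separated by a gap of $k$ coordinates, $\alpha(\mathcal{F}^{t-k}_{-\infty},\sigma(Z_t))\le\alpha(k)$. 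Davydov's covariance inequality for $\alpha$-mixing random variables (see, e.g., \cite{Davidson2021}), which applies precisely because $1/p'+1/r<1\Leftrightarrow p<r$, then yields $|\Cov(g(Z_t,\Gamma),V)|\le C_0\,\alpha(k)^{1-1/p'-1/r}\norm{g(Z_t,\Gamma)}_r\norm{V}_{p'}\le 2C_0C_G\,\alpha(k)^{1/p-1/r}$, using $1-1/p'-1/r=1/p-1/r$. Taking the supremum over $V$ gives $\norm{W_\Gamma}_p\le C\,\alpha(k)^{1/p-1/r}$ with $C:=2C_0C_G$; when $p=r$ the exponent is $0$ and the bound reduces to the trivial $L_r$-contraction estimate $\norm{W_\Gamma}_r\le2C_G$. (Equivalently, one can combine the classical $L_1$ mixing bound $\norm{W_\Gamma}_1\lesssim\alpha(k)^{1-1/r}C_G$, from the quantile form of the inequality together with Hölder, with $\norm{W_\Gamma}_r\le2C_G$ through log-convexity of $L_p$-norms, which produces the same exponent $1/p-1/r$.)

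To finish, Markov's inequality gives, for each $\Gamma$ and any $M>0$, $\Pr\!\left(|W_\Gamma|>M\alpha(k)^{1/p-1/r}\right)\le\norm{W_\Gamma}_p^p\big/\big(M^p\alpha(k)^{1-p/r}\big)\le(C/M)^p$, which is below any prescribed level for $M$ large, uniformly in $k$ and $\Gamma$. Since $\big|\sup_\Gamma\E[g(Z_t,\Gamma)\mid\mathcal{F}^{t-k}_{-\infty}]-\sup_\Gamma\E[g(Z_t,\Gamma)]\big|\le\sup_\Gamma|W_\Gamma|$ and the bound on $\norm{W_\Gamma}_p$ does not depend on $\Gamma$, the asserted $O_p(\alpha(k)^{1/p-1/r})$ behaviour follows.

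The step I expect to be most delicate is the passage to the supremum over the (possibly infinite) index set $\Xi$: the mixing inequality is intrinsically about a fixed measurable function, so one must rely on the fact that the resulting $L_p$-bound is uniform over $\Xi$ through the single envelope $G=\sup_{\Gamma\in\Xi}|g(Z_t,\Gamma)|$ — no entropy or continuity control of the class $\{g(\cdot,\Gamma)\}_{\Gamma\in\Xi}$ is available or needed for this argument, and I would also make explicit the measurability of $\sup_{\Gamma\in\Xi}g(Z_t,\Gamma)$ that is implicitly assumed by $\norm{\sup_\Gamma g(Z_t,\Gamma)}_r<\infty$. A secondary point is selecting the precise version of the $\alpha$-mixing covariance inequality so that the exponent comes out exactly as $1/p-1/r$; this is what forces the constraint $p<r$ in the non-trivial case and matches the hypothesis $r\ge p\ge1$.
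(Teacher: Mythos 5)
Your per-$\Gamma$ computation is sound: the duality-plus-Davydov step does give $\norm{W_\Gamma}_p \leq C\,\alpha(k)^{1/p-1/r}$ with a constant depending only on the envelope norm, and it is in effect a rederivation of the conditional-expectation form of the mixing inequality (Theorem 15.2 of Davidson) that the paper cites. The genuine gap is in the final step, precisely the one you flag as delicate and then dismiss: from $\sup_{\Gamma\in\Xi}\norm{W_\Gamma}_p \leq C\,\alpha(k)^{1/p-1/r}$ you cannot conclude that $\sup_{\Gamma\in\Xi}\vert W_\Gamma\vert = O_p(\alpha(k)^{1/p-1/r})$ when $\Xi$ is infinite, and here $\Xi$ is a nonparametric realization set of nuisance functions. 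Markov's inequality controls the tail of each $\vert W_\Gamma\vert$ separately, but index-uniform tail bounds do not control the tail of the supremum over an infinite family: variables that take a moderate value on small, essentially disjoint events indexed by $\Gamma$ each have tiny $L_p$ norm while $\sup_\Gamma\vert W_\Gamma\vert$ stays bounded away from zero with probability close to one. Without chaining/entropy control of the class, or an envelope argument applied \emph{inside} the probabilistic inequality, the per-$\Gamma$ bounds simply do not transfer to the supremum, so your assertion that no such device is needed is exactly where the proof breaks.

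The paper avoids this by reversing the order of operations. It first bounds the difference of suprema by the $L_p$-norm of the single random variable $\E\left[\sup_{\Gamma\in\Xi}\left(g(Z_t,\Gamma)-\E[g(Z_t,\Gamma)]\right)\mid\mathcal{F}^{t-k}_{-\infty}\right]$, i.e. the supremum over $\Xi$ is pulled inside the conditional expectation before any mixing inequality is used; it then applies Davidson's Theorem 15.2 once, to this single $\sigma(Z_t)$-measurable envelope variable, whose $L_r$-norm is finite by the hypothesis $\norm{\sup_{\Gamma\in\Xi} g(Z_t,\Gamma)}_r<\infty$ together with Minkowski. Your covariance/duality argument would complete the proof if applied to that one envelope variable (and to the negated centered envelope for the reverse direction), since it is exactly the inequality the paper invokes; applied per $\Gamma$ with the supremum taken afterwards, it leaves the conclusion unproved.
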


\begin{proof}[Proof of Lemma \ref{lemma:asym_independence}]
    We will prove the statement by bounding the $L_p$-norm. First notice that
    \begin{align*}
        \norm{\sup_{\Gamma\in\Xi}\E\left[ g(Z_t, \Gamma) | \mathcal{F}^{t-k}_{-\infty} \right] - \sup_{\Gamma\in\Xi}\E\left[ g(Z_t, \Gamma) \right]}_p \leq  \norm{\E\left[ \sup_{\Gamma\in\Xi} \left(g(Z_t, \Gamma) - \E\left[g(Z_t, \Gamma)\right]\right) | \mathcal{F}^{t-k}_{-\infty} \right] }_p
    \end{align*}
    By Theorem 15.2 in \cite{Davidson2021} we have
    \begin{equation*}
    \begin{aligned}
        &\norm{\E\left[ \sup_{\Gamma\in\Xi} \left(g(Z_t, \Gamma) - \E\left[g(Z_t, \Gamma)\right]\right) | \mathcal{F}^{t-k}_{-\infty} \right] }_p \\
        &\quad\leq 2(2^{1/2} + 1)\alpha(k)^{1/p-1/r} \norm{\sup_{\Gamma\in\Xi} \left(g(Z_t, \Gamma) - \E\left[g(Z_t, \Gamma)\right]\right)}_r\\
        &\quad\leq 2(2^{1/2} + 1)\alpha(k)^{1/p-1/r}\left( \norm{\sup_{\Gamma\in\Xi} g(Z_t, \Gamma) }_r + \left\vert\sup_{\Gamma\in\Xi} \E\left[g(Z_t, \Gamma)\right]\right\vert \right)=O(\alpha(k)^{1/p-1/r})
    \end{aligned}
    \end{equation*}
    where the second inequality follows from the Minkowski inequality.
\end{proof}

\begin{proof}[Proof of Theorem \ref{thm:ate_single}]
The proof builds on the proof of Theorem \ref{thm:ate} and we continue to omit the horizon $h$ for the sake of legibility. Following \cite{Davidson2021}, let the smallest $\sigma$-field on which the stochastic process $S_{-i} = \{Z_t : t \in \mathcal{T}_{-i} \}$ is measurable, be denoted as $\mathcal{F}_{-i} = \sigma\left(Z_t : t\in\mathcal{T} \wedge ( t < \inf(\mathcal{T}_i) - k_T \vee t > \sup(\mathcal{T}_i) + k_T) \right)$. Consider again the three summations $P_1$, $P_2$ and $P_3$ from the proof of Theorem \ref{thm:ate}.

Term $P_1$: From the squared $L_2$-norm of $P_1$ we to obtain
\begin{align}
    \begin{split}
    \E\Bigg[ &\Bigg(\frac{1}{|\mathcal{T}_i|} \sum_{t \in \mathcal{T}_i} \left( \muError \right) \left(1-\frac{D_t}{e_0(X_t)} \right) \Bigg)^2 \Bigg]\\
    &=\frac{1}{|\mathcal{T}_i|^2} \sum_{t \in \mathcal{T}_i}  \E \left[\left( \muError \right)^2 \left(1-\frac{D_t}{e_0(X_t)} \right)^2 \right]\\
    &\phantom{=}+\frac{1}{|\mathcal{T}_i|^2} \sum_{t \in \mathcal{T}_i} \sum_{s \in \mathcal{T}_i, s\neq t} \E \Bigg[\left( \muError \right) \left(1-\frac{D_t}{e_0(X_t)} \right) \times\\
    &\phantom{=+\frac{1}{|\mathcal{T}_i|^2} \sum_{t \in \mathcal{T}_i} \sum_{s \in \mathcal{T}_i, s\neq t} \E \Bigg[} \left( \hat{\mu}_{S_{-i}}(1,X_s) - \mu_0(1,X_s) \right) \left(1-\frac{D_s}{e_0(X_s)} \right) \Bigg].
    \label{eq:p1}
    \end{split}
\end{align}
First, note that by applying Hölder's inequality twice we have that for $r>p\geq1$ and $1/r = 1/r' + 1/r''$
\begin{align*}
    \sup_{t,s \in\mathcal{T}_i} \sup_{\mu\in\Xi_{T}} & \norm{ \left( \muErrorTilde \right) \left(1-\frac{D_t}{e_0(X_t)} \right) \left( \mu(1,X_s) - \mu_0(1,X_s) \right) \left(1-\frac{D_s}{e_0(X_s)} \right)}_r\\
    &\leq \sup_{t\in\mathcal{T}_i} \sup_{\mu\in\Xi_{T}} \norm{ \left( \muErrorTilde \right)^2 \left(1-\frac{D_t}{e_0(X_t)} \right)^2 }_r\\
    &\leq \sup_{t\in\mathcal{T}_i} \sup_{\mu\in\Xi_{T}} \norm{ \left( \muErrorTilde \right)^2 }_{r'}  \norm{\left(1-\frac{D_t}{e_0(X_t)} \right)^2 }_{r''}
\end{align*}
which is bounded by Assumption \ref{ass:sample_dependence}.\ref{ass:sample_dependence_bounded_y_error}. Next, for the first summand in \eqref{eq:p1}, note that conditional on $\mathcal{F}_{-i}$ the estimator is non-stochastic, and thus conditional on the event $\mathcal{E}_T$ we have that
\begin{align*}
    \sup_{t\in\mathcal{T}_i}  \E &\left[\left( \muError \right)^2 \left(1-\frac{D_t}{e_0(X_t)} \right)^2 \Bigg\vert \mathcal{F}_{-i} \right] \\
    &\leq \sup_{t\in\mathcal{T}_i} \sup_{\mu\in\Xi_{T}}  \E \left[\left( \mu(1,X_t) - \mu_0(1,X_t) \right)^2 \left(1-\frac{D_t}{e_0(X_t)} \right)^2 \Bigg\vert \mathcal{F}_{-i} \right]\\
    &\leq \sup_{t\in\mathcal{T}_i} \sup_{\mu\in\Xi_{T}}  \E \left[\left( \mu(1,X_t) - \mu_0(1,X_t) \right)^2 \left(1-\frac{D_t}{e_0(X_t)} \right)^2 \right] + O_p\left(\alpha(k_T)^\psi\right)\\
    &= o_p(1) + O_p\left(\alpha(k_T)^\psi\right)
\end{align*}
by Lemma \ref{lemma:asym_independence} in combination with Assumption \ref{ass:sample_dependence}, and the definition of the rate $r_{\mu,T}$ in Assumption \ref{ass:risk_decay}. By Lemma 6.1 in \cite{Chernozhukov2018} and Assumptions \ref{ass:risk_decay}.\ref{ass:risk_decay_rate} and \ref{ass:sample_dependence}.\ref{ass:sample_dependence_mixing} it follows that
\begin{align*}
    \frac{1}{|\mathcal{T}_i|^2} \sum_{t \in \mathcal{T}_i}  \E \left[\left( \muError \right)^2 \left(1-\frac{D_t}{e_0(X_t)} \right)^2 \right] = \frac{o_p(1)}{|\mathcal{T}_i|} = o_p(T^{-1}).
\end{align*}
Similarly, for the second summand of the $L_2$-norm of $P_1$, conditional on the event $\mathcal{E}_T$ we have that
\begin{align*}
    \sup_{t,s \in\mathcal{T}_i}  \E \Bigg[&\left( \muError \right) \left(1-\frac{D_t}{e_0(X_t)} \right) \times \\
    &\qquad\qquad \left( \hat{\mu}_{S_{-i}}(1,X_s) - \mu_0(1,X_s) \right) \left(1-\frac{D_s}{e_0(X_s)} \right) \Bigg\vert \mathcal{F}_{-i} \Bigg] \\
    &\leq \sup_{t,s \in\mathcal{T}_i} \sup_{e\in\Xi_{T}}  \E \bigg[\left( \mu(1,X_t) - \mu_0(1,X_t) \right) \left(1-\frac{D_t}{e_0(X_t)} \right) \times \\
    &\qquad\qquad\qquad\qquad \left( \mu(1,X_s) - \mu_0(1,X_s) \right) \left(1-\frac{D_s}{e_0(X_s)} \right) \Bigg\vert \mathcal{F}_{-i} \bigg]\\
    &\leq \sup_{t,s \in\mathcal{T}_i} \sup_{e\in\Xi_{T}}  \E \bigg[\left( \mu(1,X_t) - \mu_0(1,X_t) \right) \left(1-\frac{D_t}{e_0(X_t)} \right)  \times \\
    &\qquad\qquad\qquad\qquad\left( \mu(1,X_s) - \mu_0(1,X_s) \right) \left(1-\frac{D_s}{e_0(X_s)} \right) \bigg] + O_p\left(\alpha(k_T)^\psi\right)
\end{align*}
by Lemma \ref{lemma:asym_independence} in combination with Assumption \ref{ass:sample_dependence}, and the expectation in the last inequality is zero. By Lemma 6.1 in \cite{Chernozhukov2018} and Assumptions \ref{ass:risk_decay}.\ref{ass:risk_decay_rate} and \ref{ass:sample_dependence}.\ref{ass:sample_dependence_mixing} it follows that
\begin{align*}
    &\frac{1}{|\mathcal{T}_i|^2} \sum_{t \in \mathcal{T}_i} \sum_{s \in \mathcal{T}_i, s\neq t} \E \Bigg[\left( \muError \right) \left(1-\frac{D_t}{e_0(X_t)} \right) \times \\
    &\qquad\qquad\qquad\qquad\qquad \left( \hat{\mu}_{S_{-i}}(1,X_s) - \mu_0(1,X_s) \right) \left(1-\frac{D_s}{e_0(X_s)} \right) \Bigg] = O_p\left(\alpha(k_T)^\psi\right).
\end{align*}
By Assumption \ref{ass:sample_dependence}.\ref{ass:sample_dependence_mixing} it follows that $P_1$ is $o_p(T^{-1/2})$. Following a similar argument, it can be shown that $P_2$ is $o_p(T^{-1/2})$.

Convergence for $P_3$ can again be shown by bounding its $L_1$-norm as
    \begin{align*}
        \E\Bigg[\Bigg| & \frac{1}{|\mathcal{T}_i|} \sum_{t \in \mathcal{T}_i} D_t \left(\muError\right) \left(\invpError\right) \Bigg|\Bigg]\\
        &= \frac{1}{|\mathcal{T}_i|}\E\Bigg[\Bigg| \sum_{t \in \mathcal{T}_i} D_t \left(\hat\mu_{S_{-i}}(D_t,X_t) - \mu_0(D_t,X_t)\right) \left(\invpError\right) \Bigg|\Bigg]\\
        &\leq \frac{1}{\eta}\frac{1}{|\mathcal{T}_i|} \sum_{t \in \mathcal{T}_i} \E\Bigg[ \left|\hat\mu_{S_{-i}}(D_t,X_t) - \mu_0(D_t,X_t)\right| \left|\hat{e}_{S_{-i}}(X_t) - e_0(X_t)\right| \Bigg].
    \end{align*}
Noting that conditional on $\mathcal{F}_{-i}$, the nuisance function estimators are non-stochastic, and thus conditional on the event $\mathcal{E}_T$ we have that
    \begin{align*}
        \sup_{t\in\mathcal{T}_i} \E\Bigg[ &\left|\hat\mu_{S_{-i}}(D_t,X_t) - \mu_0(D_t,X_t)\right| \left|\hat{e}_{S_{-i}}(X_t) - e_0(X_t)\right| \Bigg\vert \mathcal{F}_{-i} \Bigg]\\
        &\leq \sup_{t\in\mathcal{T}_i} \sup_{\mu, e \in \Xi_T} \E\Bigg[ \left|\mu(D_t,X_t) - \mu_0(D_t,X_t)\right| \left|e(X_t) - e_0(X_t)\right| \Bigg\vert \mathcal{F}_{-i} \Bigg]\\
        &\leq \sup_{t\in\mathcal{T}_i} \sup_{\mu, e \in \Xi_T} \E\Bigg[ \left|\mu(D_t,X_t) - \mu_0(D_t,X_t)\right| \left|e(X_t) - e_0(X_t)\right| \Bigg] + O_p(\alpha(k_T)^\psi)\\
        & \leq \sup_{t\in\mathcal{T}_i} \sup_{\mu \in \Xi_{T}} \norm{\mu(D_t,X_t)-\mu_0(D_t,X_t)}_2 \sup_{t\in\mathcal{T}_i} \sup_{e \in \Xi_{T}} \norm{e(X_t) - e_0(X_t)}_2 + O_p(\alpha(k_T)^\psi)\\&
        = r_{\mu,T} \cdot r_{e,T} + O_p(\alpha(k_T)^\psi)
    \end{align*}
    by Lemma \ref{lemma:asym_independence} in combination with Assumption \ref{ass:sample_dependence}, Cauchy-Schwarz and the definition of the rates $r_{\mu,T}$ and $r_{e,T}$ in Assumption \ref{ass:risk_decay}. By Lemma 6.1 in \cite{Chernozhukov2018} and Assumptions \ref{ass:risk_decay}.\ref{ass:risk_decay_rate} and \ref{ass:sample_dependence}.\ref{ass:sample_dependence_mixing} it follows that $ \sup_{t\in\mathcal{T}_i} \E\left[ \left|\muError\right| \left|\hat{e}_{S_{-i}}(X_t) - e(X_t)\right| \right] = o_p(T^{-1/2})$. This concludes the proof as we can now apply the same arguments as in the proof of Theorem \ref{thm:ate}.
\end{proof}

\begin{proof}[Proof of Theorem \ref{thm:variance}]
Define $v_t(\theta, \Gamma) = g(Z_t; \Gamma) - \theta$, for some nuisance functions $\Gamma = (\mu, e)$, $g(Z_t; \Gamma)$ is the influence function \eqref{eq:dr_score} evaluated using the nuisance functions in $\Gamma$ and we will drop the forecast horizon $h$ everywhere for legibility. Let
\begin{equation*}
    V_T = \Var\left(\frac{1}{\sqrt{T}}\sum_{t\in\mathcal{T}}g(Z_t;\Gamma_0)\right) = \frac{1}{T}\left(\sum_{t=1}^T\E[v_t^2] + 2\sum_{s=1}^{T-1}\sum_{t=s+1}^{T}\E[v_tv_{t-s}]\right)
\end{equation*}
with $v_t = g(Z_t;\Gamma_0) - \theta_0$ and $\Gamma_0 = (\mu_0, e_0)$, and thus $\lim_{T\to\infty} V_T = V_0$. Next, define
\begin{equation*}
V_{S_i} = \frac{1}{|\mathcal{T}_i|}\left(\sum_{t\in\mathcal{T}_i}\E[v^2_t] + 2\sum_{s\in\mathcal{T}_i}\sum_{t\in\mathcal{T}_{i,s}}\E[v_tv_{t-s}]\right)
\end{equation*}
with $\mathcal{T}_{i,s} = \{t \in \mathcal{T}_i \, \vert \, t-s \geq \inf(\mathcal{T}_i) \}$ for $i=1, ..., K$. Then we have that $\lim_{T\to\infty}V_{S_i} = \lim_{T\to\infty} V_T = V_0$, so given $K$ being a finite integer, for $|\widehat{V} - V_T| \overset{p}{\to} 0$ it will be sufficient to show that $|\widehat{V}_{S_i} - V_{S_i}| = o_p(1)$, where
\begin{equation*}
    \widehat{V}_{S_i} = \frac{1}{|\mathcal{T}_i|}\left(\sum_{t\in\mathcal{T}_i}\hat{v}^2_t + 2\sum_{s=1}^{m_T}w(s,m_T)\sum_{t\in\mathcal{T}_{i,s}}\hat{v}
    _t\hat{v}_{t-s}\right)
\end{equation*}
with
\begin{equation*}
    \hat{v}_t = g(Z_t;\hat\Gamma_t) - \hat\theta \qquad \text{and} \qquad \hat\Gamma_t = \{\hat\Gamma_{S_{-i}} : i \in \{1, ..., K\}, t\in \mathcal{T}_i\}.
  \end{equation*}
Moreover, let
\begin{equation*}
    V^m_{S_i} = \frac{1}{|\mathcal{T}_i|}\left(\sum_{t\in\mathcal{T}_i}v^2_t + 2\sum_{s=1}^{m_T}w(s,m_T)\sum_{t\in\mathcal{T}_{i,s}}v_tv_{t-s}\right).
\end{equation*}

Applying the triangle inequality, we will follow \cite{Newey1987} and prove that the following three terms converge to zero in probability
\begin{equation}\label{eq:start}
        |\widehat{V}_{S_i} - V_{S_i}| \leq \underbrace{|\widehat{V}_{S_i} - V^m_{S_i}|}_{=P_1} + \underbrace{|V^m_{S_i} -\E[V^m_{S_i}]|}_{=P_2} + \underbrace{|\E[V^m_{S_i}] - V_{S_i}|}_{=P_3}.
\end{equation}
Since terms $P_2$ and $P_3$ do not contain any estimated quantities, they are $o_p(1)$ following the same arguments as in the proof of Theorem 2 in \cite{Newey1987} and \cite{Kool1988}, provided that Assumptions \ref{ass:variance_estimator}.\ref{ass:variance_estimator-i}-\ref{ass:variance_estimator}.\ref{ass:variance_estimator-iii} hold.

For the term $P_1$, define a function
\begin{align*}
    f(\theta, r) = &\frac{1}{|\mathcal{T}_i|}\sum_{t\in\mathcal{T}_i}v_t(\theta, \Gamma_0 + r(\hat\Gamma_t - \Gamma_0))^2 \\
    &+ \frac{2}{|\mathcal{T}_i|}\sum_{s=1}^{m_T}w(s,m_T)\sum_{t\in\mathcal{T}_{i,s}}v_{t}(\theta, \Gamma_0 + r(\hat\Gamma_t - \Gamma_0)) v_{t-s}(\theta, \Gamma_0 + r(\hat\Gamma_{t-s} - \Gamma_0)),
\end{align*}
so that $\widehat{V}_{S_i} = f(\hat\theta, 1)$ and $V^m_{S_i} = f(\theta_0, 0)$. By the multivariate mean-value theorem, for some $(\tilde\theta, \tilde{r})$ on the line segment from $(\theta_0, 0)$ to $(\hat\theta, 1)$, on the event $\mathcal{E}_T$ we have
\begin{equation*}
    \widehat{V}_{S_i} - V^m_{S_i} = f(\hat\theta, 1) - f(\theta_0, 0) = \frac{\partial f}{\partial\theta}(\tilde\theta, \tilde{r}) (\hat\theta-\theta_0) + \frac{\partial f}{\partial r}(\tilde\theta, \tilde{r})
\end{equation*}
and by the triangle inequality
\begin{equation}\label{eq:Pterms}
    |\widehat{V}_{S_i} - V^m_{S_i}| \leq \underbrace{\left\vert\frac{\partial f}{\partial\theta}(\tilde\theta, \tilde{r}) (\hat\theta-\theta_0)\right\vert}_{=P_{11}} + \underbrace{\left\vert\frac{\partial f}{\partial r}(\tilde\theta, \tilde{r})\right\vert}_{=P_{12}}.
\end{equation}
We will show that both terms on the right hand side of \eqref{eq:Pterms} are $o_p(1)$.\\

Term $P_{11}$: The partial derivative with respect to $\theta$ in $P_{11}$ is\
\begin{equation*}
\begin{aligned}
    \frac{\partial f}{\partial\theta}(\tilde\theta, \tilde{r}) &= -\frac{2}{|\mathcal{T}_i|}\sum_{t\in\mathcal{T}_i}v_t(\tilde\theta, \Gamma_0 + \tilde{r}(\hat\Gamma_t - \Gamma_0))\\
    &\qquad -\frac{2}{|\mathcal{T}_i|}\sum_{s=1}^{m_T}w(s,m_T)\sum_{t\in\mathcal{T}_{i,s}}\left[v_t(\tilde\theta, \Gamma_0 + \tilde{r}(\hat\Gamma_t - \Gamma_0)) + v_{t-s}(\tilde\theta, \Gamma_0 + \tilde{r}(\hat\Gamma_{t-s} - \Gamma_0))\right]
\end{aligned}
\end{equation*}
and thus
\begin{equation*}
    \left\vert\frac{\partial f}{\partial\theta}(\tilde\theta, \tilde{r})\right\vert \leq 2\sup_{t\in\mathcal{T}_i}|v_t(\tilde\theta, \Gamma_0 + \tilde{r}(\hat\Gamma_t - \Gamma_0))| + 4m_T\sup_{t\in\mathcal{T}_i}|v_t(\tilde\theta, \Gamma_0 + \tilde{r}(\hat\Gamma_t - \Gamma_0))|.
\end{equation*}
Then we get for $P_{11}$ that
\begin{equation}\label{eq:P11}
\begin{aligned}
    \left\vert\frac{\partial f}{\partial\theta}(\tilde\theta, \tilde{r})\right\vert|\hat\theta-\theta_0| \leq & \ 2\sup_{t\in\mathcal{T}_i}|v_t(\tilde\theta, \Gamma_0 + \tilde{r}(\hat\Gamma_t - \Gamma_0))|\cdot|\hat\theta-\theta_0| \\
    &+ 4\frac{m_T}{\sqrt{T}}\sup_{t\in\mathcal{T}_i}|v_t(\tilde\theta, \Gamma_0 + \tilde{r}(\hat\Gamma_t - \Gamma_0))|\cdot\sqrt{T}|\hat\theta-\theta_0|.
\end{aligned}
\end{equation}
By Theorem \ref{thm:ate_single}, on the event $\mathcal{E}_T$, $|\hat\theta-\theta_0| = o_p(1)$ and $\sqrt{T}(\hat\theta-\theta_0) = O_p(1)$. Moreover, by Assumption \ref{ass:variance_estimator}.\ref{ass:variance_estimator-iii}, $\lim_{T\to\infty} m_T / \sqrt{T} = 0$. Thus it remains to show that on the event $\mathcal{E}_T$, $\sup_{t\in\mathcal{T}_i}|v_t(\tilde\theta, \Gamma_0 + \tilde{r}(\hat\Gamma_t - \Gamma_0))|$ remains bounded in probability. By the triangular inequality we have that
\begin{equation*}
    \bigg\vert v_t(\tilde\theta, \Gamma_0 + \tilde{r}(\hat\Gamma_t - \Gamma_0)) \bigg\vert = \bigg\vert g(Z_t; \Gamma_0 + \tilde{r}(\hat\Gamma_t - \Gamma_0)) - \tilde\theta \bigg\vert \leq \bigg\vert g(Z_t; \Gamma_0 + \tilde{r}(\hat\Gamma_t - \Gamma_0))\bigg\vert + \vert\tilde\theta \vert.
\end{equation*}
For the first term after the inequality we have, for some $q>2$
\begin{align}\label{eq:h_1}
        &\norm{ g(Z_t; \Gamma_0 + \tilde{r}(\hat\Gamma_t - \Gamma_0))}_{q} \nonumber \\
        &= \bigg\lVert \mu_0(1, X_t) + \tilde{r}(\hat\mu_t(1, X_t) - \mu_0(1,X_t)) \nonumber\\
        &\qquad - \mu_0(0,X_t) - \tilde{r}(\hat\mu_t(0,X_t) - \mu_0(0,X_t)) \\
        &\qquad + \frac{D_t}{e_0(X_t)+\tilde{r}(\hat{e}_t(X_t) - e_0(X_t))} (Y_t-\mu_0(1,X_t)-\tilde{r}(\hat\mu_t(1,X_t) - \mu_0(1,X_t))) \nonumber\\
        &\qquad - \frac{1-D_t}{1-e_0(X_t)-\tilde{r}(\hat{e}_t(X_t) - e_0(X_t))}(Y_t-\mu_0(0,X_t)-\tilde{r}(\hat\mu_t(0,X_t) - \mu_0(0,X_t))) \bigg\rVert_{q} \nonumber\\
        &\leq \norm{\mu_0(1, X_t)}_{q} + \tilde{r}\norm{\hat\mu_t(1,X_t) - \mu_0(1, X_t)}_{q} + \norm{\mu_0(0, X_t)}_{q} + \tilde{r}\norm{\hat\mu_t(0,X_t) - \mu_0(0, X_t)}_{q}\nonumber \\
        &\qquad + \bigg\lVert \frac{D_t}{e_0(X_t)+\tilde{r}(\hat{e}_t(X_t) - e_0(X_t))} (Y_t-\mu_0(1,X_t)-\tilde{r}(\hat\mu_t(1,X_t) - \mu_0(1,X_t))) \bigg\rVert_{q}\nonumber\\
        &\qquad + \bigg\lVert \frac{1-D_t}{1-e_0(X_t)-\tilde{r}(\hat{e}_t(X_t) - e_0(X_t))}(Y_t-\mu_0(0,X_t)-\tilde{r}(\hat\mu_t(0,X_t) - \mu_0(0,X_t))) \bigg\rVert_{q}\nonumber
\end{align}
where
\begin{equation*}
    \hat\mu_t = \{\hat\mu_{S_{-i}} : i \in \{1, ..., K\}, t\in \mathcal{T}_i\} \qquad \text{and} \qquad \hat{e}_t = \{\hat{e}_{S_{-i}} : i \in \{1, ..., K\}, t\in \mathcal{T}_i\}.
\end{equation*}

From Assumptions \ref{ass:risk_decay} and \ref{ass:variance_estimator}.\ref{ass:variance_estimator-iv} we have
\begin{equation}\label{eq:cherno_results}
    \norm{\hat\mu_t(d,X_t) - \mu_0(d,X_t)}_q  \leq \frac{C}{\eta^{1/q}} \qquad\text{and}\qquad \norm{\mu_0(d,X_t)}_q \leq \frac{C}{\eta^{1/q}},
\end{equation}
for $q>2$, any $\hat\Gamma_t \in \Xi_T$ and $d\in\{0,1\}$, as shown in the proof of Theorem 5.1 in \cite{Chernozhukov2018}. Using these results, we get the following bound for $|\tilde\theta|$
\begin{equation*}
    |\tilde\theta| \leq |\tilde\theta - \theta_0| + |\theta_0| \leq |\hat\theta - \theta_0| + |\theta_0| \leq |\hat\theta - \theta_0| + 2\frac{C}{\eta^{1/q}},
\end{equation*}
where $|\theta_0| \leq 2C / \eta^{1/q}$ follows from Assumption \ref{ass:risk_decay}, as shown in the proof of Theorem 5.1 in \cite{Chernozhukov2018}. From Theorem \ref{thm:ate_single}, it follows that $|\hat\theta - \theta_0| = o_p(1)$, so $|\tilde\theta| = O_p(1)$.

We continue with the term in the second line of the last inequality in \eqref{eq:h_1}. For $q>2$, conditional on $\mathcal{E}_T$, we get
\begin{align}\label{eq:Dterm}
  \bigg\lVert &\frac{D_t}{e_0(X_t)+\tilde{r}(\hat{e}_t(X_t) - e_0(X_t))} (Y_t-\mu_0(1,X_t)-\tilde{r}(\hat\mu_t(1,X_t) - \mu_0(1,X_t))) \bigg\rVert_{q} \nonumber\\
  &\leq \frac{1}{\eta}\bigg\lVert D_t(Y_t-\mu_0(1,X_t)-\tilde{r}(\hat\mu_t(1,X_t) - \mu_0(1,X_t))) \bigg\rVert_{q}\nonumber\\
  &\leq \frac{1}{\eta}\bigg\lVert Y_t-\mu_0(1,X_t)-\tilde{r}(\hat\mu_t(1,X_t) - \mu_0(1,X_t)) \bigg\rVert_{q} \\
  &\leq \frac{1}{\eta}\norm{Y_t}_q+ \frac{1}{\eta}\norm{\mu_0(1,X_t)}_q + \frac{\tilde{r}}{\eta}\norm{\hat\mu_t(1,X_t) - \mu_0(1,X_t)}_{q} \leq \frac{1}{\eta}C + \frac{1}{\eta}\frac{C}{\eta^{1/q}} + \frac{\tilde{r}}{\eta}\frac{C}{\eta^{1/q}},\nonumber
\end{align}
where the first inequality follows from Assumption \ref{ass:risk_decay}. The second inequality follows from $D_t\in\{0,1\}$, and the third from the Minkowski inequality. The fourth follows from \eqref{eq:cherno_results}.

For the term in the third line of the last inequality in \eqref{eq:h_1}, we get for $q>2$ and conditional on $\mathcal{E}_T$ that
\begin{equation}\label{eq:1Dterm}
  \begin{aligned}
  \bigg\lVert &\frac{1-D_t}{1-e_0(X_t)-\tilde{r}(\hat{e}_t(X_t) - e_0(X_t))}(Y_t-\mu_0(0,X_t)-\tilde{r}(\hat\mu_t(0,X_t) - \mu_0(0,X_t))) \bigg\rVert_{q} \\
  &\leq \frac{1}{\eta}\bigg\lVert Y_t-\mu_0(0,X_t)-\tilde{r}(\hat\mu_t(0,X_t) - \mu_0(0,X_t)) \bigg\rVert_{q} \leq \frac{1}{\eta}C + \frac{1}{\eta}\frac{C}{\eta^{1/q}} + \frac{\tilde{r}}{\eta}\frac{C}{\eta^{1/q}},\\
  \end{aligned}
\end{equation}
using the same arguments as for the term in \eqref{eq:Dterm}. Since all of the terms in the first line of the last inequality in \eqref{eq:h_1} are $L_q$-bounded, they are also $O_p(1)$ by Markov's inequality. The same holds for \eqref{eq:Dterm} and \eqref{eq:1Dterm}, so we can conclude that all of the terms on the right-hand-side of the last inequality in \eqref{eq:h_1} are $O_p(1)$. As a consequence, $\sup_{t\in\mathcal{T}_i}|v_t(\tilde\theta, \Gamma_0 + \tilde{r}(\hat\Gamma_t - \Gamma_0))| = O_p(1)$ in \eqref{eq:P11} and thus $P_{11} = o_p(1)$.

Term $P_{12}$: From the partial derivative with respect to $r$ in $P_{12}$ we find
\begin{equation*}
\begin{aligned}
    \left|\frac{\partial f}{\partial r}(\tilde\theta, \tilde{r})\right| &= \bigg\vert \frac{2}{|\mathcal{T}_i|}\sum_{t\in\mathcal{T}_i}v_t(\tilde\theta, \Gamma_0 + \tilde{r}(\hat\Gamma_t - \Gamma_0)) \frac{\partial v_t}{\partial r}(\tilde\theta, \Gamma_0 + \tilde{r}(\hat\Gamma_t - \Gamma_0))\\
    &\qquad + \frac{2}{|\mathcal{T}_i|} \sum_{s=1}^{m_T}w(s,m_T) \sum_{t\in\mathcal{T}_{i,s}} v_t(\tilde\theta, \Gamma_0 + \tilde{r}(\hat\Gamma_t - \Gamma_0)) \frac{\partial v_{t-s}}{\partial r}(\tilde\theta, \Gamma_0 + \tilde{r}(\hat\Gamma_{t-s} - \Gamma_0))\\
    &\qquad + \frac{2}{|\mathcal{T}_i|} \sum_{s=1}^{m_T}w(s,m_T) \sum_{t\in\mathcal{T}_{i,s}} v_{t-s}(\tilde\theta, \Gamma_0 + \tilde{r}(\hat\Gamma_{t-s} - \Gamma_0)) \frac{\partial v_{t}}{\partial r}(\tilde\theta, \Gamma_0 + \tilde{r}(\hat\Gamma_t - \Gamma_0)) \bigg\vert \\
    & \leq 2\sup_{t\in\mathcal{T}_i}|v_t(\tilde\theta, \Gamma_0 + \tilde{r}(\hat\Gamma_t - \Gamma_0))| \sup_{t\in\mathcal{T}_i}\left\vert \frac{\partial v_t}{\partial r}(\tilde\theta, \Gamma_0 + \tilde{r}(\hat\Gamma_t - \Gamma_0)) \right\vert \\
    &\qquad + 4\frac{m_T}{T^{b_m}}\sup_{t\in\mathcal{T}_i}|v_t(\tilde\theta, \Gamma_0 + \tilde{r}(\hat\Gamma_t - \Gamma_0))| \sup_{t\in\mathcal{T}_i}\left\vert \frac{\partial v_t}{\partial r}(\tilde\theta, \Gamma_0 + \tilde{r}(\hat\Gamma_t - \Gamma_0))\right\vert T^{b_m}.
\end{aligned}
\end{equation*}

From proving $P_{11} = o_p(1)$ we know that $\sup_{t\in\mathcal{T}_i}|v_t(\tilde\theta, \Gamma_0 + \tilde{r}(\hat\Gamma_t - \Gamma_0))| = O_p(1)$. Moreover, $m_T / T^{b_m} = o(1)$ by Assumption \ref{ass:variance_estimator}.\ref{ass:variance_estimator-iii}. Thus it remains to show that $\sup_{t\in\mathcal{T}_i}|\frac{\partial v_t}{\partial r}(\tilde\theta, \Gamma_0 + \tilde{r}(\hat\Gamma_t - \Gamma_0))| = o_p(T^{-b_m})$. Since by Assumption \ref{ass:variance_estimator} $b_m < b_r$, we will show that the quantity is actually $o_p(T^{-b_r})$. By the triangle inequality, we have
\begin{equation}\label{eq:P12,i}
    \sup_{t\in\mathcal{T}_i}\left\vert\frac{\partial v_t}{\partial r}(\tilde\theta, \Gamma_0 + \tilde{r}(\hat\Gamma_t - \Gamma_0))\right\vert \leq P_{12,A} + P_{12,B} + P_{12,C} + P_{12,D},
\end{equation}
where
\begin{equation*}
\begin{aligned}
    P_{12,A} &= \sup_{t\in\mathcal{T}_i}\left\vert \hat\mu_t(1,X_t)  - \mu_0(1,X_t) - \left(\hat\mu_t(0,X_t)  - \mu_0(0,X_t)\right)\right\vert \\
    P_{12,B} &= \sup_{t\in\mathcal{T}_i}\bigg\vert \frac{D_t}{\left(e_0(X_t) + \tilde{r}(\hat{e}_t(X_t) - e_o(X_t))\right)^2} \times\\
    &\qquad\quad \left(Y_t - \mu_0(1,X_t) - \tilde{r}(\hat\mu_t(1,X_t) - \mu_0(1,X_t))\right) \left(\hat{e}_t(X_t) - e_0(X_t)\right)\bigg\vert \\
    P_{12,C} &= \sup_{t\in\mathcal{T}_i}\left\vert \frac{1-D_t}{1-e_0(X_t) - \tilde{r}(\hat{e}_t(X_t)-e_0(X_t))} \left(\hat\mu_t(0,X_t) - \mu_0(0,X_t)\right)\right\vert \\
    P_{12,D} &= \sup_{t\in\mathcal{T}_i}\bigg\vert \frac{1-D_t}{\left( (1-\tilde{r})(1-e_0(X_t)) + \tilde{r}(1-\hat{e}_t(X_t))\right)^2} \times\\
    &\qquad\quad \left(Y_t - \mu_0(0,X_t) + \tilde{r}(\hat\mu_t(0,X_t) - \mu_0(0,X_t))\right)\left(\hat{e}_t(X_t) - e_0(X_t)\right)\bigg\vert.
\end{aligned}
\end{equation*}

For $P_{12,A}$, we first establish that on the event $\mathcal{E}_T$ for $d\in\{0,1\}$ we have
\begin{equation}\label{eq:P12A}
\begin{aligned}
\sup_{t\in\mathcal{T}_i}&\E\left[ \left\vert\hat\mu_t(d,X_t) - \mu_0(d,X_t)\right\vert^2 \bigg\vert S_{-i}\right] \\
    &\leq \sup_{t\in\mathcal{T}_i}\sup_{\mu\in \Xi_T} \E\left[ \left(\mu(d,X_t) - \mu_0(d,X_t)\right)^2 \bigg\vert S_{-i}\right]\\
    &\leq \sup_{t\in\mathcal{T}_i}\sup_{\mu\in \Xi_T} \E\left[ \left(\mu(d,X_t) - \mu_0(d,X_t)\right)^2\right] + O_p(\alpha(k_T)^\psi)\\
    &\leq r_{\mu,T}^2 + O_p(\alpha(k_T)^\psi) = o_p(T^{-2 b_r}).
\end{aligned}
\end{equation}
The second inequality follows from Lemma \ref{lemma:asym_independence}, the third by the definition of the rate $r_{\mu,T}$ in Assumption \ref{ass:risk_decay}. The final equality follows by Lemma 6.1 in \cite{Chernozhukov2018} and Assumptions \ref{ass:risk_decay}.\ref{ass:risk_decay_rate},  \ref{ass:sample_dependence}.\ref{ass:sample_dependence_mixing} and \ref{ass:variance_estimator}.\ref{ass:variance_estimator-v}. We thus conclude that on the event $\mathcal{E}_T$,
\begin{equation}\label{eq:mu-convergence}
    \sup_{t\in\mathcal{T}_i} \left\vert\hat\mu_t(d,X_t) - \mu_0(d,X_t)\right\vert = o_p(T^{-b_r})
\end{equation}
and as a consequence
\begin{equation*}
    P_{12,A} \leq \sup_{t\in\mathcal{T}_i} \left\vert\hat\mu_t(1,X_t) - \mu_0(1,X_t)\right\vert + \sup_{t\in\mathcal{T}_i} \left\vert\hat\mu_t(0,X_t) - \mu_0(0,X_t)\right\vert = o_p(T^{-b_r}).
\end{equation*}

For $P_{12,B}$, we first establish that on the event $\mathcal{E}_T$ we have
\begin{equation*}
    \begin{aligned}
        \sup_{t\in\mathcal{T}_i}\E\left[ \left\vert\hat{e}_t(X_t) - e_0(X_t)\right\vert^2 \big\vert S_{-i}\right] &\leq \sup_{t\in\mathcal{T}_i}\sup_{e \in \Xi_T} \E\left[ (\hat{e}_t(X_t) - e_0(X_t))^2\big\vert S_{-i}\right]\\
        &\leq \sup_{t\in\mathcal{T}_i}\sup_{e \in \Xi_T} \E\left[ (\hat{e}_t(X_t) - e_0(X_t))^2\right] + O_p(\alpha(k_T)^\psi)\\
        &\leq r_{e,T}^2 + O_p(\alpha(k_T)^\psi) = o_p(T^{-2 b_r})\\
    \end{aligned}
\end{equation*}
by the same arguments as for \eqref{eq:P12A} and thus on the event $\mathcal{E}_T$,
\begin{equation}\label{eq:e-convergence}
    \sup_{t\in\mathcal{T}_i} \left\vert\hat{e}_t(X_t) - e_0(X_t)\right\vert = o_p(T^{-b_r}).
\end{equation}

This allows us to write
\begin{equation}\label{eq:P12B}
\begin{aligned}
    P_{12,B} &\leq \sup_{t\in\mathcal{T}_i}\frac{1}{\eta^2}\left\vert Y_t - (1-\tilde{r})\mu_0(1,X_t)-\tilde{r}\hat\mu_t(1,X_t)\right\vert \left\vert\hat{e}_t(X_t) - e_0(X_t)\right\vert \\
    %&\leq \sup_{t\in\mathcal{T}_i}\frac{1}{\eta}\left\vert (1-\tilde{r})(Y_t - \mu_0(1,X_t))-\tilde{r}(Y_t - \hat\mu_t(1,X_t))\right\vert \left\vert\hat{e}_t(X_t) - e_0(X_t)\right\vert \\
    &\leq \sup_{t\in\mathcal{T}_i}\frac{1-\tilde{r}}{\eta^2}\left\vert Y_t - \mu_0(1,X_t)\right\vert \left\vert\hat{e}_t(X_t) - e_0(X_t)\right\vert \\
    &\quad + \sup_{t\in\mathcal{T}_i}\frac{\tilde{r}}{\eta^2}\left\vert Y_t - \hat\mu_t(1,X_t)\right\vert \left\vert\hat{e}_t(X_t) - e_0(X_t)\right\vert \\
    &\leq \sup_{t\in\mathcal{T}_i}\frac{2}{\eta^2}\left\vert Y_t - \mu_0(1,X_t)\right\vert \left\vert\hat{e}_t(X_t) - e_0(X_t)\right\vert \\
    &\quad + \sup_{t\in\mathcal{T}_i}\frac{1}{\eta^2}\left\vert \hat\mu_t(1,X_t) - \mu_0(1,X_t)\right\vert \left\vert\hat{e}_t(X_t) - e_0(X_t)\right\vert, \\
\end{aligned}
\end{equation}
where the first inequality follows from Assumption \ref{ass:risk_decay} and $D_t\in\{0,1\}$. From Assumption \ref{ass:variance_estimator}.\ref{ass:variance_estimator-iv} and \eqref{eq:cherno_results}, we have for $q>2$, $d\in\{0,1\}$ and any $\mu \in \Xi_T$ that
\begin{equation}\label{eq:Y-convergence}
    \sup_{t\in\mathcal{T}_i}\norm{Y_t - \mu(d,X_t)}_q \leq \sup_{t\in\mathcal{T}_i} \norm{Y_t}_q + \sup_{t\in\mathcal{T}_i}\norm{\mu(d,X_t)}_q \leq C\left(1+\frac{1}{\eta^{1/q}}\right).
\end{equation}
Combined with \eqref{eq:e-convergence} it follows that for $d\in\{0,1\}$ and on the event $\mathcal{E}_T$ we have
\begin{align*}
    &\sup_{t\in\mathcal{T}_i}\norm{|Y_t - \mu(d,X_t)||\hat{e}_t(X_t) - e_0(X_t)|}_1 \\
    &\quad \leq \sup_{t\in\mathcal{T}_i}\norm{Y_t - \mu(d,X_t)}_2 \norm{\hat{e}_t(X_t) - e_0(X_t)}_2 \\
    &\quad \leq \sup_{t\in\mathcal{T}_i}C\left(1+\frac{1}{\eta^{1/q}}\right) \norm{\hat{e}_t(X_t) - e_0(X_t)}_2 = o_p(T^{-b_r})
\end{align*}
for any $\mu \in \Xi_T$, and as a consequence, combined with \eqref{eq:mu-convergence}, we have $P_{12,B} = o_p(T^{-b_r})$.

For $P_{12,C}$, we have by Assumption \ref{ass:risk_decay} and $D_t\in\{0,1\}$ that $P_{12,C} \leq \sup_{t\in\mathcal{T}_i}\frac{1}{\eta}| \hat\mu_t(0,X_t) - \mu_0(0,X_t) |$
and thus $P_{12,C} = o_p(T^{-b_r})$ by \eqref{eq:mu-convergence}.

For $P_{12,D}$ finally, we have
\begin{equation*}
\begin{aligned}
    P_{12,D} &\leq \sup_{t\in\mathcal{T}_i}\frac{1}{\eta^2}\left\vert (1-\tilde{r})(Y_t-\mu_0(0,X_t)) + \tilde{r}(Y_t - \hat\mu_t(0,X_t)) \right\vert \left\vert\hat{e}_t(X_t) - e_0(X_t)\right\vert \\
    &\leq \sup_{t\in\mathcal{T}_i}\frac{2}{\eta^2} \left\vert Y_t-\mu_0(0,X_t)\right\vert  \left\vert\hat{e}_t(X_t) - e_0(X_t)\right\vert \\
    &\quad + \sup_{t\in\mathcal{T}_i}\frac{1}{\eta^2} \left\vert \hat\mu_t(0,X_t)-\mu_0(0,X_t)\right\vert \left\vert\hat{e}_t(X_t) - e_0(X_t)\right\vert,
\end{aligned}
\end{equation*}
where the first inequality again follows from Assumption \ref{ass:risk_decay} and $D_t\in\{0,1\}$, and the second from the same arguments as in \eqref{eq:P12B}. Using \eqref{eq:mu-convergence}, \eqref{eq:e-convergence} and \eqref{eq:Y-convergence} lets us conclude that $P_{12,D} = o_p(T^{-b_r})$. This shows that all terms on the right hand side of \eqref{eq:P12,i} are $o_p(T^{-b_r})$ and thus $\sup_{t\in\mathcal{T}_i}|\frac{\partial v_t}{\partial r}(\tilde\theta, \Gamma_0 + \tilde{r}(\hat\Gamma_t - \Gamma_0))| = o_p(T^{-b_r})$. As a consequence, $P_{11} = o_p(1)$ and $P_{12} = o_p(1)$ in \eqref{eq:Pterms}. We conclude that $|\widehat{V}_{S_i} - V^m_{S_i}| = o_p(1)$ and thus $|\widehat{V}_{S_i} - V_{S_i}| = o_p(1)$ so that $|\widehat{V} - V_T| \overset{p}{\to} 0$.
\end{proof}

\end{appendix}

\clearpage
\bibliographystyle{apacite}
\bibliography{double_ml_ts}

\renewcommand{\thetable}{\thesection\arabic{table}}
\renewcommand{\thefigure}{\thesection\arabic{figure}}
\renewcommand{\theequation}{\thesection\arabic{equation}}
\clearpage
\begin{appendix}
    \begin{center}
        \Large
        Supplementary material for ``Semiparametric inference for impulse response functions using double/debiased machine learning''
    \end{center}
    \section{Hyperparameter tuning}
    \subsection{Tuning in the simulation study}\label{app:tuning}
    Random forests: We fix the number of trees to 500. For each simulation replication, the maximal depth of each tree ($d$), the minimal number of observations in the leafs of the trees ($\ell$) and the maximal fraction of features considered for each node split ($m_{try}$) are determined by cross-validation using the $K$ sub-processes as folds. We perform a simple grid search over $\{\{d,\ell, m_{try}\}~\lvert~d\in\{5, 10,20,50\} \wedge \ell \in \{1,5,10\}  \wedge m_{try} \in \{0.3, 1.0\}\}$ and select the hyperparameter-combination yielding the best predictive cross-validation performance.
    
    Gradient boosted trees: We perform a two-stage tuning procedure. In the first stage, we fix the learning rate of the tree booster to 0.1, set the number of boosting rounds to some very high number (10'000) and abort the estimation process if the predictive validation error has not decreased since 50 rounds of boosting. The maximum tree depth for the base learners ($d$), the minimum sum of instance weight needed in a child ($w$), the subsampling ratio for observations used to construct each tree ($s^o$), and the subsampling ratio for features when constructing each tree ($s^f$) are determined by cross-validation using the $K$ sub-processes as folds. We perform a simple grid search over $\{\{d,w,s^o, s^f\}~\lvert~d\in\{1, ...,10\} \wedge w \in \{1, ..., 10\}  \wedge s^o\in\{0.25,0.5,0.75, 1\} \wedge s^f\in\{0.25,0.5,0.75, 1\}\}$ and select the hyperparameter-combination yielding the best predictive cross-validation performance. In the second stage, using the optimal tree-hyperparameters from stage 1, we select the learning rate most frequently yielding the best predictive cross-validation performance. We perform a search over the candidate set $\{0.001, 0.002, 0.005, 0.01, 0.02, 0.05, 0.1, 0.25, 0.5\}$. Finally, the optimal number of boosting rounds is determined as the average early-stopped boosting round for the selected learning rate. For computational reasons, we repeat this procedure on $R=50$ simulated examples and fix the hyperparameters for all simulation replications to the hyperparameter combination most frequently yielding the best cross-validation performance.\\\\
    
    \subsection{Tuning in the empirical application}\label{app:tuning_empirical}
    Random forests: We fix the number of trees to 500. The maximal depth of each tree ($d$), the minimal number of observations in the leafs of the trees ($\ell$) and the size of the random subsets of features to consider when splitting a node ($\bar{m}$) are determined by cross-validation using the $K$ sub-processes as folds. We perform a simple grid search over the same candidate sets as for the simulation study (cf. Appendix \ref{app:tuning}) and select the hyperparameter-combination yielding the best predictive cross-validation performance.
    
    Gradient boosted trees: We perform a two-stage tuning procedure. In the first stage, we fix the learning rate of the tree booster to 0.1, the number of boosting rounds to 500 and find optimal tree parameters by cross-validation. For this, we perform a simple grid search over $\{\{d,w,s^o,s^f\}~\lvert~d\in\{1, ...,10\} \wedge w \in \{1, ..., 10\} \wedge s^o\in\{0.25,0.5,0.75, 1\} \wedge s^f\in\{0.25,0.5,0.75, 1\}\}$, for the same parameters as in Appendix \ref{app:tuning}, and select the hyperparameter-combination yielding the best predictive cross-validation performance. In the second stage, using the optimal tree-hyperparameters from stage 1, we finally select the learning rate and number of boosting rounds yielding the best predictive cross-validation performance. We perform a search over all combinations of $\{0.001, 0.002, 0.005, 0.01, 0.02, 0.05, 0.1, 0.25, 0.5\}$ for the learning rate and $\{10, 110, 210, 310, 410, 510\}$ for the number of boosting rounds.
    
    All estimators are tuned using 10-fold blocked cross-validation on the full sample, where at the boundary of the folds 24 observations are dropped to eliminate dependence between the folds.
    
    \section{Simulation with an empirically calibrated data generating process}\label{app:empirical_dgp}
    We calibrate the data generating process defined in Section \ref{sec:simulation} of the main text using monthly U.S. data from 1982 to 2012 obtained from the empirical study in \cite{Angrist2018}.
    In more detail, the outcome process follows
    \begin{equation*}
            Y_t = c + b(X_t) + \left(D_t -0.5\right)\tau(X_t) + \gamma Y_{t-1} + \epsilon_t.
    \end{equation*}
    We calibrate the parameters $c$ and $\gamma$, the innovation process $\epsilon_t$ and the process governing a set of confounder variables $X_t$. The vector $X_t$ contains the changes in the federal funds rate and in the ten-year Treasury yield, percentage changes in the S\&P 1000 index, the M1 money stock, civilian employment, and industrial production, as well as percentage-point changes in the unemployment rate. The outcome variable $Y_t$ is the monthly percentage changes in core personal consumption expenditures.
    The specification of $b(X_t)$, $\tau(X_t)$ and $e(X_t)$ are kept identical to those used in the simulation study in Section \ref{sec:simulation} of the main text.
    
    The innovation process is assumed to follow a GARCH($p,q$) specification. Following \cite{Lazarus2018,Adamek2024}, the confounder process $X_t$ is modelled using a dynamic factor model as
    \begin{equation*}
            \begin{aligned}
                    X_t &= \Lambda F_t + U_t\\
                    F_t &= \sum_{j=1}^{p_F} \Phi_j F_{t-j} + V_t, \qquad V_t \sim \mathcal{N}(0,I)\\
                    U_{i,t} &= \sum_{j=1}^{p_U} \phi_j U_{i,t-j} + \eta_t, \qquad \eta_{i,t} \sim \mathcal{N}(0,\sigma^2_{i,\eta}) \quad \text{for } i=1,\dots, \dim(X_t).
            \end{aligned}
    \end{equation*}
    Model orders are selected using the Bayesian Information Criterion (BIC). Specifically, we determine the GARCH($p,q$) orders for $\epsilon_t$, the number of latent factors $F_t$, the lag length $p_F$ of the factor VAR, and the AR order $p_U$ of the idiosyncratic components by minimizing the BIC. For the sample at hand, this procedure selects an ARCH(1) process for $\epsilon_t$, one latent factor following a VAR(1) process, and univariate AR(1) processes for the components of $U_t$.
    Simulation samples are generated by drawing independently from the distributions of $\epsilon_t$, $V_t$ and $\eta_t$.
    
    \clearpage
    \section{Additional tables and figures}\label{app:additional_tabs}
    
    \begin{table}[htbp]
    \caption{Simulation results for $K$ independent nonlinear baseline DGPs ($n=12$ and $\sigma_\epsilon=1.0$) and random forest nuisance function estimates}
    \label{tab:result_baseline_two_proc}
    \resizebox{\textwidth}{!}{\begin{threeparttable}\begin{tabular}{rrrrrrrrrrrrrrrrrr}
            \toprule
            \multicolumn{18}{c}{$h=0$, $\theta_0^{(h)}=0.3321$} \\
             & \multicolumn{5}{c}{DML} & \multicolumn{4}{c}{RA} & \multicolumn{4}{c}{DR} & \multicolumn{4}{c}{LP} \\
            T & Bias & std$(\hat{\theta}_h)$ & RMSE & $C_{b}$(95\%) & $C_{a}$(95\%) & Bias & std$(\hat{\theta}_h)$ & RMSE & $C_{b}$(95\%) & Bias & std$(\hat{\theta}_h)$ & RMSE & $C_{b}$(95\%) & Bias & std$(\hat{\theta}_h)$ & RMSE & $C_{b}$(95\%) \\
            \cmidrule(lr){1-1}\cmidrule(lr){2-6}\cmidrule(lr){7-10}\cmidrule(lr){11-14}\cmidrule(lr){15-18}
            125 & -0.020 & 0.749 & 0.749 & 0.969 & 0.959 & 0.379 & 0.438 & 0.579 & 0.530 & 0.275 & 0.400 & 0.486 & 0.732 & 0.099 & 0.361 & 0.375 & 0.868 \\
            250 & 0.025 & 0.406 & 0.407 & 0.954 & 0.953 & 0.276 & 0.275 & 0.390 & 0.512 & 0.198 & 0.256 & 0.323 & 0.752 & 0.110 & 0.245 & 0.269 & 0.872 \\
            500 & 0.027 & 0.212 & 0.214 & 0.944 & 0.944 & 0.215 & 0.199 & 0.293 & 0.469 & 0.139 & 0.185 & 0.231 & 0.776 & 0.139 & 0.181 & 0.228 & 0.821 \\
            1'000 & 0.015 & 0.131 & 0.132 & 0.962 & 0.959 & 0.174 & 0.130 & 0.217 & 0.431 & 0.095 & 0.121 & 0.154 & 0.809 & 0.135 & 0.121 & 0.181 & 0.747 \\
            8'000 & 0.009 & 0.044 & 0.045 & 0.957 & 0.955 & 0.177 & 0.052 & 0.184 & 0.015 & 0.042 & 0.044 & 0.061 & 0.818 & 0.138 & 0.043 & 0.145 & 0.094 \\
            \midrule
            \multicolumn{18}{c}{$h=1$, $\theta_0^{(h)}=0.1992$} \\
             & \multicolumn{5}{c}{DML} & \multicolumn{4}{c}{RA} & \multicolumn{4}{c}{DR} & \multicolumn{4}{c}{LP} \\
            T & Bias & std$(\hat{\theta}_h)$ & RMSE & $C_{b}$(95\%) & $C_{a}$(95\%) & Bias & std$(\hat{\theta}_h)$ & RMSE & $C_{b}$(95\%) & Bias & std$(\hat{\theta}_h)$ & RMSE & $C_{b}$(95\%) & Bias & std$(\hat{\theta}_h)$ & RMSE & $C_{b}$(95\%) \\
            \cmidrule(lr){1-1}\cmidrule(lr){2-6}\cmidrule(lr){7-10}\cmidrule(lr){11-14}\cmidrule(lr){15-18}
            125 & 0.017 & 0.762 & 0.762 & 0.958 & 0.953 & 0.412 & 0.439 & 0.602 & 0.467 & 0.296 & 0.402 & 0.499 & 0.729 & 0.072 & 0.369 & 0.376 & 0.904 \\
            250 & 0.057 & 0.398 & 0.402 & 0.971 & 0.951 & 0.306 & 0.284 & 0.418 & 0.441 & 0.215 & 0.263 & 0.340 & 0.741 & 0.099 & 0.269 & 0.287 & 0.912 \\
            500 & 0.053 & 0.220 & 0.226 & 0.949 & 0.941 & 0.251 & 0.197 & 0.319 & 0.330 & 0.163 & 0.182 & 0.244 & 0.708 & 0.135 & 0.190 & 0.233 & 0.872 \\
            1'000 & 0.037 & 0.142 & 0.147 & 0.951 & 0.951 & 0.211 & 0.130 & 0.248 & 0.236 & 0.119 & 0.123 & 0.171 & 0.731 & 0.128 & 0.134 & 0.186 & 0.829 \\
            8'000 & 0.014 & 0.045 & 0.047 & 0.934 & 0.934 & 0.219 & 0.052 & 0.225 & 0.001 & 0.052 & 0.045 & 0.069 & 0.748 & 0.142 & 0.046 & 0.149 & 0.134 \\
            \midrule
            \multicolumn{18}{c}{$h=2$, $\theta_0^{(h)}=0.1195$} \\
             & \multicolumn{5}{c}{DML} & \multicolumn{4}{c}{RA} & \multicolumn{4}{c}{DR} & \multicolumn{4}{c}{LP} \\
            T & Bias & std$(\hat{\theta}_h)$ & RMSE & $C_{b}$(95\%) & $C_{a}$(95\%) & Bias & std$(\hat{\theta}_h)$ & RMSE & $C_{b}$(95\%) & Bias & std$(\hat{\theta}_h)$ & RMSE & $C_{b}$(95\%) & Bias & std$(\hat{\theta}_h)$ & RMSE & $C_{b}$(95\%) \\
            \cmidrule(lr){1-1}\cmidrule(lr){2-6}\cmidrule(lr){7-10}\cmidrule(lr){11-14}\cmidrule(lr){15-18}
            125 & 0.058 & 0.821 & 0.823 & 0.951 & 0.942 & 0.394 & 0.466 & 0.610 & 0.458 & 0.288 & 0.433 & 0.520 & 0.713 & 0.043 & 0.412 & 0.414 & 0.930 \\
            250 & 0.067 & 0.460 & 0.464 & 0.967 & 0.938 & 0.289 & 0.299 & 0.416 & 0.434 & 0.205 & 0.280 & 0.347 & 0.797 & 0.081 & 0.313 & 0.324 & 0.926 \\
            500 & 0.067 & 0.248 & 0.257 & 0.942 & 0.939 & 0.243 & 0.206 & 0.319 & 0.307 & 0.162 & 0.197 & 0.255 & 0.725 & 0.109 & 0.219 & 0.245 & 0.911 \\
            1'000 & 0.040 & 0.161 & 0.166 & 0.945 & 0.945 & 0.196 & 0.145 & 0.244 & 0.238 & 0.111 & 0.141 & 0.179 & 0.764 & 0.112 & 0.154 & 0.191 & 0.886 \\
            8'000 & 0.014 & 0.052 & 0.054 & 0.933 & 0.932 & 0.213 & 0.057 & 0.221 & 0.002 & 0.051 & 0.050 & 0.071 & 0.787 & 0.126 & 0.054 & 0.137 & 0.367 \\
            \midrule
            \multicolumn{18}{c}{$h=3$, $\theta_0^{(h)}=0.0717$} \\
             & \multicolumn{5}{c}{DML} & \multicolumn{4}{c}{RA} & \multicolumn{4}{c}{DR} & \multicolumn{4}{c}{LP} \\
            T & Bias & std$(\hat{\theta}_h)$ & RMSE & $C_{b}$(95\%) & $C_{a}$(95\%) & Bias & std$(\hat{\theta}_h)$ & RMSE & $C_{b}$(95\%) & Bias & std$(\hat{\theta}_h)$ & RMSE & $C_{b}$(95\%) & Bias & std$(\hat{\theta}_h)$ & RMSE & $C_{b}$(95\%) \\
            \cmidrule(lr){1-1}\cmidrule(lr){2-6}\cmidrule(lr){7-10}\cmidrule(lr){11-14}\cmidrule(lr){15-18}
            125 & 0.057 & 0.849 & 0.851 & 0.943 & 0.939 & 0.347 & 0.487 & 0.598 & 0.475 & 0.257 & 0.459 & 0.526 & 0.756 & 0.026 & 0.461 & 0.461 & 0.930 \\
            250 & 0.072 & 0.483 & 0.489 & 0.954 & 0.936 & 0.257 & 0.326 & 0.416 & 0.416 & 0.188 & 0.310 & 0.363 & 0.816 & 0.060 & 0.350 & 0.355 & 0.929 \\
            500 & 0.066 & 0.273 & 0.281 & 0.943 & 0.941 & 0.216 & 0.224 & 0.311 & 0.337 & 0.147 & 0.216 & 0.261 & 0.759 & 0.090 & 0.243 & 0.259 & 0.937 \\
            1'000 & 0.045 & 0.181 & 0.187 & 0.935 & 0.932 & 0.181 & 0.159 & 0.241 & 0.252 & 0.107 & 0.156 & 0.189 & 0.786 & 0.094 & 0.177 & 0.201 & 0.908 \\
            8'000 & 0.013 & 0.059 & 0.061 & 0.931 & 0.931 & 0.195 & 0.063 & 0.205 & 0.005 & 0.047 & 0.057 & 0.074 & 0.818 & 0.107 & 0.061 & 0.123 & 0.579 \\
            \midrule
            \multicolumn{18}{c}{$h=4$, $\theta_0^{(h)}=0.0430$} \\
             & \multicolumn{5}{c}{DML} & \multicolumn{4}{c}{RA} & \multicolumn{4}{c}{DR} & \multicolumn{4}{c}{LP} \\
            T & Bias & std$(\hat{\theta}_h)$ & RMSE & $C_{b}$(95\%) & $C_{a}$(95\%) & Bias & std$(\hat{\theta}_h)$ & RMSE & $C_{b}$(95\%) & Bias & std$(\hat{\theta}_h)$ & RMSE & $C_{b}$(95\%) & Bias & std$(\hat{\theta}_h)$ & RMSE & $C_{b}$(95\%) \\
            \cmidrule(lr){1-1}\cmidrule(lr){2-6}\cmidrule(lr){7-10}\cmidrule(lr){11-14}\cmidrule(lr){15-18}
            125 & 0.071 & 0.896 & 0.899 & 0.941 & 0.934 & 0.299 & 0.513 & 0.594 & 0.487 & 0.222 & 0.486 & 0.534 & 0.771 & 0.027 & 0.481 & 0.482 & 0.935 \\
            250 & 0.084 & 0.532 & 0.539 & 0.936 & 0.931 & 0.233 & 0.352 & 0.422 & 0.455 & 0.173 & 0.336 & 0.378 & 0.758 & 0.061 & 0.391 & 0.396 & 0.918 \\
            500 & 0.057 & 0.308 & 0.313 & 0.937 & 0.930 & 0.188 & 0.241 & 0.305 & 0.404 & 0.129 & 0.235 & 0.269 & 0.790 & 0.073 & 0.263 & 0.273 & 0.932 \\
            1'000 & 0.040 & 0.200 & 0.203 & 0.936 & 0.932 & 0.157 & 0.173 & 0.234 & 0.278 & 0.093 & 0.170 & 0.193 & 0.807 & 0.071 & 0.193 & 0.206 & 0.937 \\
            8'000 & 0.014 & 0.066 & 0.068 & 0.931 & 0.930 & 0.176 & 0.067 & 0.189 & 0.009 & 0.044 & 0.064 & 0.078 & 0.846 & 0.088 & 0.067 & 0.111 & 0.740 \\
            \midrule
            \multicolumn{18}{c}{$h=5$, $\theta_0^{(h)}=0.0258$} \\
             & \multicolumn{5}{c}{DML} & \multicolumn{4}{c}{RA} & \multicolumn{4}{c}{DR} & \multicolumn{4}{c}{LP} \\
            T & Bias & std$(\hat{\theta}_h)$ & RMSE & $C_{b}$(95\%) & $C_{a}$(95\%) & Bias & std$(\hat{\theta}_h)$ & RMSE & $C_{b}$(95\%) & Bias & std$(\hat{\theta}_h)$ & RMSE & $C_{b}$(95\%) & Bias & std$(\hat{\theta}_h)$ & RMSE & $C_{b}$(95\%) \\
            \cmidrule(lr){1-1}\cmidrule(lr){2-6}\cmidrule(lr){7-10}\cmidrule(lr){11-14}\cmidrule(lr){15-18}
            125 & 0.069 & 0.951 & 0.954 & 0.933 & 0.929 & 0.254 & 0.542 & 0.598 & 0.519 & 0.192 & 0.516 & 0.551 & 0.775 & 0.026 & 0.523 & 0.524 & 0.922 \\
            250 & 0.081 & 0.634 & 0.639 & 0.931 & 0.929 & 0.199 & 0.364 & 0.415 & 0.493 & 0.151 & 0.352 & 0.383 & 0.791 & 0.064 & 0.420 & 0.425 & 0.921 \\
            500 & 0.056 & 0.337 & 0.342 & 0.936 & 0.931 & 0.165 & 0.264 & 0.312 & 0.414 & 0.117 & 0.257 & 0.283 & 0.805 & 0.060 & 0.285 & 0.292 & 0.941 \\
            1'000 & 0.039 & 0.217 & 0.221 & 0.941 & 0.939 & 0.141 & 0.186 & 0.233 & 0.340 & 0.086 & 0.183 & 0.202 & 0.834 & 0.051 & 0.208 & 0.214 & 0.938 \\
            8'000 & 0.013 & 0.070 & 0.072 & 0.931 & 0.931 & 0.155 & 0.072 & 0.171 & 0.027 & 0.039 & 0.069 & 0.079 & 0.875 & 0.069 & 0.073 & 0.100 & 0.844 \\
            \bottomrule\end{tabular}
    \begin{tablenotes}[flushleft]
    \item \textsc{Note}: The table depicts simulation results across $N=1'000$ draws obtained for the scenario with
            $K$ independent stochastic processes, with $K=10$. Except for the LP estimator, nuisance functions are estimated with random forest.
            For the DML estimator we set $k_T=T/10$ to obtain estimation samples of the same size as in the setting with one stochastic process. For sample size $T=125$, probabilities are winsorized at 1\%.
            The parameters of the data generating process are $n=12$,
            $\sigma_\epsilon=1.0$, $\gamma=0.6$, $p=2$,
            $q=1$, $\sigma_u=1.0$, $\alpha_A=0.3$,
            $\alpha_M=0.3$, $\rho_A=0.35$, $\rho_M=0.7$,
            $\beta_1=0.3$, $\beta_2=0.5$.
            $C_a$($\cdot$) and $C_b$($\cdot$) in the tables denote the coverage at the given confidence level using asymptotic and fixed-bandwidth critical values, respectively.
    \end{tablenotes}
    \end{threeparttable}}
    \end{table}

    \begin{table}[htbp]
    \caption{Simulation results for a nonlinear DGP ($n=20$, $\sigma_\epsilon=1.0$) and random forest nuisance function estimates}
    \label{tab:result_variations_confounders}
    \resizebox{\textwidth}{!}{\begin{threeparttable}\begin{tabular}{rrrrrrrrrrrrrrrrrr}
            \toprule
            \multicolumn{18}{c}{$h=0$, $\theta_0^{(h)}=0.3333$} \\
                    & \multicolumn{5}{c}{DML} & \multicolumn{4}{c}{RA} & \multicolumn{4}{c}{DR} & \multicolumn{4}{c}{LP} \\
            T & Bias & std$(\hat{\theta}_h)$ & RMSE & $C_{b}$(95\%) & $C_{a}$(95\%) & Bias & std$(\hat{\theta}_h)$ & RMSE & $C_{b}$(95\%) & Bias & std$(\hat{\theta}_h)$ & RMSE & $C_{b}$(95\%) & Bias & std$(\hat{\theta}_h)$ & RMSE & $C_{b}$(95\%) \\
            \cmidrule(lr){1-1}\cmidrule(lr){2-6}\cmidrule(lr){7-10}\cmidrule(lr){11-14}\cmidrule(lr){15-18}
            125 & 0.180 & 0.791 & 0.811 & 0.937 & 0.920 & 0.382 & 0.454 & 0.593 & 0.524 & 0.293 & 0.418 & 0.511 & 0.712 & 0.101 & 0.370 & 0.383 & 0.823 \\
            250 & 0.082 & 0.385 & 0.393 & 0.959 & 0.945 & 0.279 & 0.289 & 0.402 & 0.519 & 0.210 & 0.269 & 0.342 & 0.719 & 0.117 & 0.248 & 0.274 & 0.857 \\
            500 & 0.037 & 0.213 & 0.216 & 0.953 & 0.940 & 0.199 & 0.194 & 0.278 & 0.529 & 0.136 & 0.183 & 0.228 & 0.780 & 0.122 & 0.175 & 0.213 & 0.850 \\
            1'000 & 0.031 & 0.133 & 0.136 & 0.952 & 0.949 & 0.165 & 0.127 & 0.208 & 0.465 & 0.098 & 0.120 & 0.155 & 0.807 & 0.126 & 0.118 & 0.173 & 0.769 \\
            8'000 & 0.010 & 0.044 & 0.045 & 0.934 & 0.934 & 0.156 & 0.048 & 0.163 & 0.018 & 0.038 & 0.044 & 0.058 & 0.835 & 0.133 & 0.043 & 0.140 & 0.123 \\
            \midrule
            \multicolumn{18}{c}{$h=1$, $\theta_0^{(h)}=0.2000$} \\
                    & \multicolumn{5}{c}{DML} & \multicolumn{4}{c}{RA} & \multicolumn{4}{c}{DR} & \multicolumn{4}{c}{LP} \\
            T & Bias & std$(\hat{\theta}_h)$ & RMSE & $C_{b}$(95\%) & $C_{a}$(95\%) & Bias & std$(\hat{\theta}_h)$ & RMSE & $C_{b}$(95\%) & Bias & std$(\hat{\theta}_h)$ & RMSE & $C_{b}$(95\%) & Bias & std$(\hat{\theta}_h)$ & RMSE & $C_{b}$(95\%) \\
            \cmidrule(lr){1-1}\cmidrule(lr){2-6}\cmidrule(lr){7-10}\cmidrule(lr){11-14}\cmidrule(lr){15-18}
            125 & 0.207 & 0.827 & 0.852 & 0.935 & 0.920 & 0.418 & 0.449 & 0.613 & 0.488 & 0.317 & 0.411 & 0.519 & 0.698 & 0.058 & 0.381 & 0.386 & 0.884 \\
            250 & 0.103 & 0.393 & 0.406 & 0.965 & 0.952 & 0.301 & 0.283 & 0.413 & 0.447 & 0.220 & 0.261 & 0.341 & 0.724 & 0.088 & 0.268 & 0.282 & 0.900 \\
            500 & 0.061 & 0.224 & 0.232 & 0.948 & 0.945 & 0.235 & 0.194 & 0.304 & 0.381 & 0.158 & 0.183 & 0.242 & 0.712 & 0.114 & 0.186 & 0.218 & 0.898 \\
            1'000 & 0.053 & 0.144 & 0.154 & 0.931 & 0.929 & 0.203 & 0.134 & 0.243 & 0.283 & 0.121 & 0.127 & 0.175 & 0.734 & 0.126 & 0.132 & 0.182 & 0.827 \\
            8'000 & 0.020 & 0.045 & 0.049 & 0.925 & 0.924 & 0.202 & 0.047 & 0.207 & 0.001 & 0.052 & 0.043 & 0.068 & 0.738 & 0.139 & 0.047 & 0.147 & 0.153 \\
            \midrule
            \multicolumn{18}{c}{$h=2$, $\theta_0^{(h)}=0.1200$} \\
                    & \multicolumn{5}{c}{DML} & \multicolumn{4}{c}{RA} & \multicolumn{4}{c}{DR} & \multicolumn{4}{c}{LP} \\
            T & Bias & std$(\hat{\theta}_h)$ & RMSE & $C_{b}$(95\%) & $C_{a}$(95\%) & Bias & std$(\hat{\theta}_h)$ & RMSE & $C_{b}$(95\%) & Bias & std$(\hat{\theta}_h)$ & RMSE & $C_{b}$(95\%) & Bias & std$(\hat{\theta}_h)$ & RMSE & $C_{b}$(95\%) \\
            \cmidrule(lr){1-1}\cmidrule(lr){2-6}\cmidrule(lr){7-10}\cmidrule(lr){11-14}\cmidrule(lr){15-18}
            125 & 0.222 & 0.878 & 0.905 & 0.906 & 0.898 & 0.378 & 0.477 & 0.609 & 0.488 & 0.283 & 0.439 & 0.522 & 0.728 & 0.019 & 0.424 & 0.425 & 0.903 \\
            250 & 0.127 & 0.433 & 0.451 & 0.937 & 0.926 & 0.294 & 0.293 & 0.415 & 0.434 & 0.218 & 0.272 & 0.349 & 0.747 & 0.075 & 0.296 & 0.305 & 0.936 \\
            500 & 0.075 & 0.249 & 0.260 & 0.941 & 0.937 & 0.224 & 0.209 & 0.307 & 0.357 & 0.153 & 0.199 & 0.251 & 0.716 & 0.095 & 0.214 & 0.234 & 0.927 \\
            1'000 & 0.049 & 0.164 & 0.171 & 0.928 & 0.927 & 0.185 & 0.145 & 0.235 & 0.260 & 0.106 & 0.140 & 0.175 & 0.764 & 0.098 & 0.149 & 0.179 & 0.901 \\
            8'000 & 0.022 & 0.051 & 0.056 & 0.919 & 0.919 & 0.198 & 0.052 & 0.205 & 0.001 & 0.052 & 0.049 & 0.072 & 0.758 & 0.127 & 0.054 & 0.138 & 0.362 \\
            \midrule
            \multicolumn{18}{c}{$h=3$, $\theta_0^{(h)}=0.0720$} \\
                    & \multicolumn{5}{c}{DML} & \multicolumn{4}{c}{RA} & \multicolumn{4}{c}{DR} & \multicolumn{4}{c}{LP} \\
            T & Bias & std$(\hat{\theta}_h)$ & RMSE & $C_{b}$(95\%) & $C_{a}$(95\%) & Bias & std$(\hat{\theta}_h)$ & RMSE & $C_{b}$(95\%) & Bias & std$(\hat{\theta}_h)$ & RMSE & $C_{b}$(95\%) & Bias & std$(\hat{\theta}_h)$ & RMSE & $C_{b}$(95\%) \\
            \cmidrule(lr){1-1}\cmidrule(lr){2-6}\cmidrule(lr){7-10}\cmidrule(lr){11-14}\cmidrule(lr){15-18}
            125 & 0.228 & 0.987 & 1.013 & 0.924 & 0.901 & 0.353 & 0.486 & 0.601 & 0.534 & 0.271 & 0.450 & 0.526 & 0.747 & 0.022 & 0.454 & 0.454 & 0.914 \\
            250 & 0.137 & 0.487 & 0.506 & 0.932 & 0.924 & 0.266 & 0.328 & 0.422 & 0.467 & 0.201 & 0.311 & 0.370 & 0.746 & 0.053 & 0.340 & 0.344 & 0.925 \\
            500 & 0.083 & 0.289 & 0.301 & 0.941 & 0.936 & 0.216 & 0.228 & 0.314 & 0.376 & 0.152 & 0.218 & 0.266 & 0.735 & 0.085 & 0.244 & 0.259 & 0.925 \\
            1'000 & 0.058 & 0.176 & 0.186 & 0.940 & 0.939 & 0.180 & 0.152 & 0.236 & 0.287 & 0.109 & 0.149 & 0.185 & 0.784 & 0.089 & 0.165 & 0.187 & 0.926 \\
            8'000 & 0.024 & 0.058 & 0.062 & 0.917 & 0.917 & 0.185 & 0.056 & 0.194 & 0.001 & 0.051 & 0.055 & 0.075 & 0.807 & 0.111 & 0.061 & 0.127 & 0.563 \\
            \midrule
            \multicolumn{18}{c}{$h=4$, $\theta_0^{(h)}=0.0432$} \\
                    & \multicolumn{5}{c}{DML} & \multicolumn{4}{c}{RA} & \multicolumn{4}{c}{DR} & \multicolumn{4}{c}{LP} \\
            T & Bias & std$(\hat{\theta}_h)$ & RMSE & $C_{b}$(95\%) & $C_{a}$(95\%) & Bias & std$(\hat{\theta}_h)$ & RMSE & $C_{b}$(95\%) & Bias & std$(\hat{\theta}_h)$ & RMSE & $C_{b}$(95\%) & Bias & std$(\hat{\theta}_h)$ & RMSE & $C_{b}$(95\%) \\
            \cmidrule(lr){1-1}\cmidrule(lr){2-6}\cmidrule(lr){7-10}\cmidrule(lr){11-14}\cmidrule(lr){15-18}
            125 & 0.225 & 1.014 & 1.038 & 0.912 & 0.902 & 0.315 & 0.520 & 0.608 & 0.555 & 0.248 & 0.491 & 0.550 & 0.752 & 0.022 & 0.500 & 0.500 & 0.894 \\
            250 & 0.130 & 0.543 & 0.558 & 0.927 & 0.906 & 0.246 & 0.350 & 0.427 & 0.499 & 0.186 & 0.334 & 0.383 & 0.753 & 0.038 & 0.365 & 0.367 & 0.922 \\
            500 & 0.079 & 0.321 & 0.331 & 0.933 & 0.933 & 0.194 & 0.241 & 0.310 & 0.431 & 0.139 & 0.233 & 0.271 & 0.771 & 0.062 & 0.261 & 0.268 & 0.935 \\
            1'000 & 0.058 & 0.194 & 0.202 & 0.948 & 0.944 & 0.166 & 0.166 & 0.234 & 0.309 & 0.101 & 0.162 & 0.191 & 0.811 & 0.072 & 0.181 & 0.195 & 0.937 \\
            8'000 & 0.022 & 0.065 & 0.068 & 0.930 & 0.929 & 0.168 & 0.062 & 0.179 & 0.014 & 0.047 & 0.061 & 0.077 & 0.842 & 0.092 & 0.067 & 0.113 & 0.736 \\
            \midrule
            \multicolumn{18}{c}{$h=5$, $\theta_0^{(h)}=0.0259$} \\
                    & \multicolumn{5}{c}{DML} & \multicolumn{4}{c}{RA} & \multicolumn{4}{c}{DR} & \multicolumn{4}{c}{LP} \\
            T & Bias & std$(\hat{\theta}_h)$ & RMSE & $C_{b}$(95\%) & $C_{a}$(95\%) & Bias & std$(\hat{\theta}_h)$ & RMSE & $C_{b}$(95\%) & Bias & std$(\hat{\theta}_h)$ & RMSE & $C_{b}$(95\%) & Bias & std$(\hat{\theta}_h)$ & RMSE & $C_{b}$(95\%) \\
            \cmidrule(lr){1-1}\cmidrule(lr){2-6}\cmidrule(lr){7-10}\cmidrule(lr){11-14}\cmidrule(lr){15-18}
            125 & 0.208 & 1.055 & 1.075 & 0.914 & 0.903 & 0.250 & 0.526 & 0.582 & 0.583 & 0.196 & 0.496 & 0.533 & 0.801 & -0.002 & 0.515 & 0.515 & 0.911 \\
            250 & 0.136 & 0.593 & 0.608 & 0.919 & 0.902 & 0.217 & 0.361 & 0.421 & 0.548 & 0.170 & 0.343 & 0.383 & 0.778 & 0.027 & 0.381 & 0.382 & 0.942 \\
            500 & 0.070 & 0.339 & 0.346 & 0.925 & 0.925 & 0.167 & 0.253 & 0.303 & 0.482 & 0.119 & 0.245 & 0.272 & 0.802 & 0.042 & 0.277 & 0.280 & 0.937 \\
            1'000 & 0.055 & 0.213 & 0.220 & 0.934 & 0.932 & 0.151 & 0.178 & 0.234 & 0.356 & 0.094 & 0.175 & 0.198 & 0.838 & 0.055 & 0.199 & 0.206 & 0.935 \\
            8'000 & 0.021 & 0.070 & 0.073 & 0.934 & 0.933 & 0.152 & 0.067 & 0.166 & 0.025 & 0.043 & 0.067 & 0.079 & 0.859 & 0.075 & 0.071 & 0.103 & 0.822 \\
            \bottomrule\end{tabular}
            \begin{tablenotes}[flushleft]
            \item \textsc{Note}: The table depicts simulation results across $N=1'000$ draws obtained for the scenario with
                    one stochastic process. Except for the LP estimator, nuisance functions are estimated with random forest.
                    For the DML estimator we use 10-fold cross-fitting and set $k_T=T/10$. For sample size $T=125$, probabilities are winsorized at 1\%.
                    The parameters of the data generating process are $n=20$,
                    $\sigma_\epsilon=1.0$, $\gamma=0.6$, $p=2$,
                    $q=1$, $\sigma_u=1.0$, $\alpha_A=0.3$,
                    $\alpha_M=0.3$, $\rho_A=0.35$, $\rho_M=0.7$,
                    $\beta_1=0.3$, $\beta_2=0.5$.
                    $C_a$($\cdot$) and $C_b$($\cdot$) in the tables denote the coverage at the given confidence level using asymptotic and fixed-bandwidth critical values, respectively.
            \end{tablenotes}
            \end{threeparttable}}
    \end{table}

    \begin{table}[htbp]
      \caption{Simulation results for a nonlinear DGP ($n=12$, $\sigma_\epsilon=3.0$) and random forest nuisance function estimates}
      \label{tab:result_variations_noise}
      \resizebox{\textwidth}{!}{\begin{threeparttable}\begin{tabular}{rrrrrrrrrrrrrrrrrr}
            \toprule
            \multicolumn{18}{c}{$h=0$, $\theta_0^{(h)}=0.3321$} \\
             & \multicolumn{5}{c}{DML} & \multicolumn{4}{c}{RA} & \multicolumn{4}{c}{DR} & \multicolumn{4}{c}{LP} \\
            T & Bias & std$(\hat{\theta}_h)$ & RMSE & $C_{b}$(95\%) & $C_{a}$(95\%) & Bias & std$(\hat{\theta}_h)$ & RMSE & $C_{b}$(95\%) & Bias & std$(\hat{\theta}_h)$ & RMSE & $C_{b}$(95\%) & Bias & std$(\hat{\theta}_h)$ & RMSE & $C_{b}$(95\%) \\
            \cmidrule(lr){1-1}\cmidrule(lr){2-6}\cmidrule(lr){7-10}\cmidrule(lr){11-14}\cmidrule(lr){15-18}
            125 & 0.055 & 1.287 & 1.288 & 0.941 & 0.930 & 0.364 & 0.789 & 0.869 & 0.497 & 0.263 & 0.750 & 0.794 & 0.758 & 0.082 & 0.713 & 0.717 & 0.907 \\
            250 & 0.060 & 0.779 & 0.781 & 0.953 & 0.943 & 0.273 & 0.534 & 0.600 & 0.528 & 0.203 & 0.516 & 0.555 & 0.800 & 0.142 & 0.488 & 0.508 & 0.917 \\
            500 & 0.015 & 0.429 & 0.429 & 0.951 & 0.945 & 0.188 & 0.368 & 0.413 & 0.471 & 0.120 & 0.357 & 0.376 & 0.824 & 0.142 & 0.342 & 0.370 & 0.917 \\
            1'000 & 0.019 & 0.285 & 0.285 & 0.946 & 0.945 & 0.161 & 0.264 & 0.309 & 0.429 & 0.088 & 0.260 & 0.274 & 0.838 & 0.146 & 0.245 & 0.285 & 0.871 \\
            8'000 & 0.003 & 0.091 & 0.091 & 0.951 & 0.950 & 0.161 & 0.089 & 0.184 & 0.106 & 0.033 & 0.088 & 0.094 & 0.918 & 0.143 & 0.083 & 0.165 & 0.599 \\
            \midrule
            \multicolumn{18}{c}{$h=1$, $\theta_0^{(h)}=0.1992$} \\
             & \multicolumn{5}{c}{DML} & \multicolumn{4}{c}{RA} & \multicolumn{4}{c}{DR} & \multicolumn{4}{c}{LP} \\
            T & Bias & std$(\hat{\theta}_h)$ & RMSE & $C_{b}$(95\%) & $C_{a}$(95\%) & Bias & std$(\hat{\theta}_h)$ & RMSE & $C_{b}$(95\%) & Bias & std$(\hat{\theta}_h)$ & RMSE & $C_{b}$(95\%) & Bias & std$(\hat{\theta}_h)$ & RMSE & $C_{b}$(95\%) \\
            \cmidrule(lr){1-1}\cmidrule(lr){2-6}\cmidrule(lr){7-10}\cmidrule(lr){11-14}\cmidrule(lr){15-18}
            125 & 0.101 & 1.430 & 1.433 & 0.918 & 0.918 & 0.370 & 0.846 & 0.924 & 0.523 & 0.271 & 0.815 & 0.859 & 0.759 & 0.060 & 0.782 & 0.784 & 0.922 \\
            250 & 0.047 & 1.051 & 1.052 & 0.942 & 0.936 & 0.293 & 0.580 & 0.650 & 0.459 & 0.210 & 0.565 & 0.603 & 0.769 & 0.122 & 0.555 & 0.568 & 0.931 \\
            500 & 0.043 & 0.474 & 0.476 & 0.946 & 0.945 & 0.214 & 0.388 & 0.443 & 0.429 & 0.134 & 0.381 & 0.404 & 0.832 & 0.128 & 0.392 & 0.412 & 0.920 \\
            1'000 & 0.049 & 0.310 & 0.314 & 0.949 & 0.947 & 0.198 & 0.279 & 0.342 & 0.345 & 0.111 & 0.275 & 0.297 & 0.842 & 0.153 & 0.271 & 0.311 & 0.914 \\
            8'000 & 0.010 & 0.102 & 0.102 & 0.947 & 0.946 & 0.207 & 0.097 & 0.228 & 0.047 & 0.045 & 0.098 & 0.108 & 0.913 & 0.153 & 0.096 & 0.181 & 0.642 \\
            \midrule
            \multicolumn{18}{c}{$h=2$, $\theta_0^{(h)}=0.1195$} \\
             & \multicolumn{5}{c}{DML} & \multicolumn{4}{c}{RA} & \multicolumn{4}{c}{DR} & \multicolumn{4}{c}{LP} \\
            T & Bias & std$(\hat{\theta}_h)$ & RMSE & $C_{b}$(95\%) & $C_{a}$(95\%) & Bias & std$(\hat{\theta}_h)$ & RMSE & $C_{b}$(95\%) & Bias & std$(\hat{\theta}_h)$ & RMSE & $C_{b}$(95\%) & Bias & std$(\hat{\theta}_h)$ & RMSE & $C_{b}$(95\%) \\
            \cmidrule(lr){1-1}\cmidrule(lr){2-6}\cmidrule(lr){7-10}\cmidrule(lr){11-14}\cmidrule(lr){15-18}
            125 & 0.159 & 1.382 & 1.391 & 0.932 & 0.921 & 0.352 & 0.847 & 0.917 & 0.512 & 0.258 & 0.816 & 0.856 & 0.784 & 0.033 & 0.814 & 0.815 & 0.935 \\
            250 & 0.044 & 1.106 & 1.107 & 0.932 & 0.925 & 0.258 & 0.600 & 0.653 & 0.497 & 0.182 & 0.592 & 0.619 & 0.777 & 0.080 & 0.599 & 0.604 & 0.925 \\
            500 & 0.059 & 0.503 & 0.507 & 0.941 & 0.938 & 0.202 & 0.414 & 0.460 & 0.417 & 0.130 & 0.408 & 0.428 & 0.814 & 0.111 & 0.420 & 0.434 & 0.933 \\
            1'000 & 0.058 & 0.340 & 0.345 & 0.931 & 0.931 & 0.194 & 0.302 & 0.360 & 0.336 & 0.114 & 0.300 & 0.321 & 0.814 & 0.143 & 0.302 & 0.334 & 0.913 \\
            8'000 & 0.010 & 0.108 & 0.109 & 0.962 & 0.962 & 0.199 & 0.102 & 0.224 & 0.040 & 0.043 & 0.104 & 0.112 & 0.908 & 0.138 & 0.104 & 0.173 & 0.721 \\
            \midrule
            \multicolumn{18}{c}{$h=3$, $\theta_0^{(h)}=0.0717$} \\
             & \multicolumn{5}{c}{DML} & \multicolumn{4}{c}{RA} & \multicolumn{4}{c}{DR} & \multicolumn{4}{c}{LP} \\
            T & Bias & std$(\hat{\theta}_h)$ & RMSE & $C_{b}$(95\%) & $C_{a}$(95\%) & Bias & std$(\hat{\theta}_h)$ & RMSE & $C_{b}$(95\%) & Bias & std$(\hat{\theta}_h)$ & RMSE & $C_{b}$(95\%) & Bias & std$(\hat{\theta}_h)$ & RMSE & $C_{b}$(95\%) \\
            \cmidrule(lr){1-1}\cmidrule(lr){2-6}\cmidrule(lr){7-10}\cmidrule(lr){11-14}\cmidrule(lr){15-18}
            125 & 0.116 & 1.482 & 1.486 & 0.936 & 0.923 & 0.284 & 0.852 & 0.898 & 0.556 & 0.201 & 0.821 & 0.845 & 0.795 & -0.010 & 0.825 & 0.825 & 0.936 \\
            250 & 0.074 & 0.860 & 0.864 & 0.930 & 0.928 & 0.233 & 0.602 & 0.646 & 0.481 & 0.160 & 0.586 & 0.608 & 0.795 & 0.058 & 0.600 & 0.603 & 0.934 \\
            500 & 0.062 & 0.536 & 0.540 & 0.931 & 0.923 & 0.191 & 0.432 & 0.472 & 0.409 & 0.127 & 0.429 & 0.447 & 0.800 & 0.102 & 0.448 & 0.459 & 0.929 \\
            1'000 & 0.060 & 0.353 & 0.358 & 0.924 & 0.922 & 0.183 & 0.310 & 0.360 & 0.331 & 0.112 & 0.311 & 0.331 & 0.815 & 0.128 & 0.319 & 0.344 & 0.911 \\
            8'000 & 0.010 & 0.114 & 0.114 & 0.950 & 0.950 & 0.180 & 0.107 & 0.210 & 0.046 & 0.040 & 0.110 & 0.117 & 0.905 & 0.119 & 0.110 & 0.162 & 0.808 \\
            \midrule
            \multicolumn{18}{c}{$h=4$, $\theta_0^{(h)}=0.0430$} \\
             & \multicolumn{5}{c}{DML} & \multicolumn{4}{c}{RA} & \multicolumn{4}{c}{DR} & \multicolumn{4}{c}{LP} \\
            T & Bias & std$(\hat{\theta}_h)$ & RMSE & $C_{b}$(95\%) & $C_{a}$(95\%) & Bias & std$(\hat{\theta}_h)$ & RMSE & $C_{b}$(95\%) & Bias & std$(\hat{\theta}_h)$ & RMSE & $C_{b}$(95\%) & Bias & std$(\hat{\theta}_h)$ & RMSE & $C_{b}$(95\%) \\
            \cmidrule(lr){1-1}\cmidrule(lr){2-6}\cmidrule(lr){7-10}\cmidrule(lr){11-14}\cmidrule(lr){15-18}
            125 & 0.145 & 1.489 & 1.496 & 0.935 & 0.915 & 0.252 & 0.832 & 0.869 & 0.549 & 0.187 & 0.811 & 0.832 & 0.831 & 0.012 & 0.833 & 0.833 & 0.937 \\
            250 & 0.041 & 1.358 & 1.358 & 0.916 & 0.916 & 0.208 & 0.639 & 0.672 & 0.485 & 0.150 & 0.623 & 0.641 & 0.787 & 0.058 & 0.631 & 0.634 & 0.935 \\
            500 & 0.039 & 0.576 & 0.578 & 0.938 & 0.926 & 0.151 & 0.448 & 0.473 & 0.421 & 0.097 & 0.443 & 0.453 & 0.821 & 0.063 & 0.456 & 0.461 & 0.929 \\
            1'000 & 0.055 & 0.363 & 0.367 & 0.938 & 0.936 & 0.160 & 0.316 & 0.354 & 0.345 & 0.098 & 0.315 & 0.330 & 0.835 & 0.104 & 0.324 & 0.341 & 0.920 \\
            8'000 & 0.010 & 0.117 & 0.118 & 0.949 & 0.949 & 0.159 & 0.110 & 0.194 & 0.056 & 0.037 & 0.113 & 0.119 & 0.911 & 0.100 & 0.110 & 0.148 & 0.859 \\
            \midrule
            \multicolumn{18}{c}{$h=5$, $\theta_0^{(h)}=0.0258$} \\
             & \multicolumn{5}{c}{DML} & \multicolumn{4}{c}{RA} & \multicolumn{4}{c}{DR} & \multicolumn{4}{c}{LP} \\
            T & Bias & std$(\hat{\theta}_h)$ & RMSE & $C_{b}$(95\%) & $C_{a}$(95\%) & Bias & std$(\hat{\theta}_h)$ & RMSE & $C_{b}$(95\%) & Bias & std$(\hat{\theta}_h)$ & RMSE & $C_{b}$(95\%) & Bias & std$(\hat{\theta}_h)$ & RMSE & $C_{b}$(95\%) \\
            \cmidrule(lr){1-1}\cmidrule(lr){2-6}\cmidrule(lr){7-10}\cmidrule(lr){11-14}\cmidrule(lr){15-18}
            125 & 0.167 & 1.619 & 1.627 & 0.915 & 0.898 & 0.253 & 0.897 & 0.932 & 0.558 & 0.199 & 0.874 & 0.897 & 0.813 & 0.053 & 0.913 & 0.914 & 0.922 \\
            250 & 0.050 & 1.062 & 1.063 & 0.920 & 0.912 & 0.176 & 0.650 & 0.674 & 0.469 & 0.125 & 0.638 & 0.650 & 0.775 & 0.046 & 0.643 & 0.645 & 0.931 \\
            500 & 0.025 & 0.587 & 0.588 & 0.950 & 0.941 & 0.122 & 0.446 & 0.463 & 0.450 & 0.078 & 0.444 & 0.451 & 0.829 & 0.045 & 0.458 & 0.461 & 0.938 \\
            1'000 & 0.042 & 0.363 & 0.366 & 0.947 & 0.941 & 0.129 & 0.313 & 0.339 & 0.357 & 0.076 & 0.313 & 0.322 & 0.855 & 0.074 & 0.323 & 0.331 & 0.950 \\
            8'000 & 0.003 & 0.121 & 0.121 & 0.942 & 0.941 & 0.134 & 0.112 & 0.175 & 0.082 & 0.026 & 0.115 & 0.118 & 0.919 & 0.077 & 0.113 & 0.136 & 0.912 \\
            \bottomrule\end{tabular}
      \begin{tablenotes}[flushleft]
      \item \textsc{Note}: The table depicts simulation results across $N=1'000$ draws obtained for the scenario with
              one stochastic process. Except for the LP estimator, nuisance functions are estimated with random forest.
              For the DML estimator we use 10-fold cross-fitting and set $k_T=T/10$. For sample size $T=125$, probabilities are winsorized at 1\%.
              The parameters of the data generating process are $n=12$,
              $\sigma_\epsilon=3.0$, $\gamma=0.6$, $p=2$,
              $q=1$, $\sigma_u=1.0$, $\alpha_A=0.3$,
              $\alpha_M=0.3$, $\rho_A=0.35$, $\rho_M=0.7$,
              $\beta_1=0.3$, $\beta_2=0.5$.
              $C_a$($\cdot$) and $C_b$($\cdot$) in the tables denote the coverage at the given confidence level using asymptotic and fixed-bandwidth critical values, respectively.
      \end{tablenotes}
      \end{threeparttable}}
      \end{table}

      \begin{table}[htbp]
            \caption{Simulation results for an empirically calibrated DGP (see Section \ref{app:empirical_dgp}) and random forest nuisance function estimates}
            \label{tab:empircal_dgp}
            \resizebox{\textwidth}{!}{\begin{threeparttable}\begin{tabular}{rrrrrrrrrrrrrrrrrr}
                    \toprule
            \multicolumn{18}{c}{$h=0$, $\theta_0^{(h)}=0.2021$} \\
             & \multicolumn{5}{c}{DML} & \multicolumn{4}{c}{RA} & \multicolumn{4}{c}{DR} & \multicolumn{4}{c}{LP} \\
            T & Bias & std$(\hat{\theta}_h)$ & RMSE & $C_{b}$(95\%) & $C_{a}$(95\%) & Bias & std$(\hat{\theta}_h)$ & RMSE & $C_{b}$(95\%) & Bias & std$(\hat{\theta}_h)$ & RMSE & $C_{b}$(95\%) & Bias & std$(\hat{\theta}_h)$ & RMSE & $C_{b}$(95\%) \\
            \cmidrule(lr){1-1}\cmidrule(lr){2-6}\cmidrule(lr){7-10}\cmidrule(lr){11-14}\cmidrule(lr){15-18}
            125 & 0.058 & 0.343 & 0.348 & 0.966 & 0.948 & 0.244 & 0.273 & 0.366 & 0.279 & 0.203 & 0.263 & 0.332 & 0.681 & 0.146 & 0.260 & 0.298 & 0.883 \\
            250 & 0.042 & 0.207 & 0.211 & 0.935 & 0.931 & 0.210 & 0.190 & 0.283 & 0.238 & 0.168 & 0.184 & 0.249 & 0.629 & 0.153 & 0.189 & 0.243 & 0.845 \\
            500 & 0.038 & 0.130 & 0.136 & 0.949 & 0.945 & 0.177 & 0.129 & 0.219 & 0.242 & 0.130 & 0.125 & 0.180 & 0.641 & 0.154 & 0.127 & 0.200 & 0.752 \\
            1'000 & 0.034 & 0.096 & 0.101 & 0.919 & 0.919 & 0.159 & 0.090 & 0.182 & 0.148 & 0.105 & 0.088 & 0.137 & 0.625 & 0.152 & 0.091 & 0.177 & 0.580 \\
            8'000 & -0.002 & 0.031 & 0.032 & 0.949 & 0.949 & 0.153 & 0.033 & 0.156 & 0.000 & 0.035 & 0.031 & 0.047 & 0.760 & 0.153 & 0.033 & 0.157 & 0.001 \\
            \midrule
            \multicolumn{18}{c}{$h=1$, $\theta_0^{(h)}=0.0521$} \\
             & \multicolumn{5}{c}{DML} & \multicolumn{4}{c}{RA} & \multicolumn{4}{c}{DR} & \multicolumn{4}{c}{LP} \\
            T & Bias & std$(\hat{\theta}_h)$ & RMSE & $C_{b}$(95\%) & $C_{a}$(95\%) & Bias & std$(\hat{\theta}_h)$ & RMSE & $C_{b}$(95\%) & Bias & std$(\hat{\theta}_h)$ & RMSE & $C_{b}$(95\%) & Bias & std$(\hat{\theta}_h)$ & RMSE & $C_{b}$(95\%) \\
            \cmidrule(lr){1-1}\cmidrule(lr){2-6}\cmidrule(lr){7-10}\cmidrule(lr){11-14}\cmidrule(lr){15-18}
            125 & 0.040 & 0.347 & 0.349 & 0.965 & 0.939 & 0.108 & 0.271 & 0.291 & 0.330 & 0.088 & 0.267 & 0.282 & 0.791 & 0.029 & 0.274 & 0.276 & 0.941 \\
            250 & 0.025 & 0.245 & 0.247 & 0.962 & 0.944 & 0.100 & 0.192 & 0.216 & 0.258 & 0.079 & 0.191 & 0.206 & 0.789 & 0.041 & 0.199 & 0.203 & 0.936 \\
            500 & 0.027 & 0.151 & 0.153 & 0.933 & 0.929 & 0.084 & 0.137 & 0.161 & 0.265 & 0.062 & 0.137 & 0.151 & 0.792 & 0.043 & 0.141 & 0.147 & 0.918 \\
            1'000 & 0.019 & 0.100 & 0.102 & 0.947 & 0.946 & 0.073 & 0.095 & 0.119 & 0.194 & 0.047 & 0.095 & 0.106 & 0.808 & 0.042 & 0.097 & 0.106 & 0.927 \\
            8'000 & -0.000 & 0.035 & 0.035 & 0.954 & 0.953 & 0.072 & 0.034 & 0.079 & 0.019 & 0.015 & 0.034 & 0.037 & 0.894 & 0.040 & 0.034 & 0.053 & 0.780 \\
            \midrule
            \multicolumn{18}{c}{$h=2$, $\theta_0^{(h)}=0.0134$} \\
             & \multicolumn{5}{c}{DML} & \multicolumn{4}{c}{RA} & \multicolumn{4}{c}{DR} & \multicolumn{4}{c}{LP} \\
            T & Bias & std$(\hat{\theta}_h)$ & RMSE & $C_{b}$(95\%) & $C_{a}$(95\%) & Bias & std$(\hat{\theta}_h)$ & RMSE & $C_{b}$(95\%) & Bias & std$(\hat{\theta}_h)$ & RMSE & $C_{b}$(95\%) & Bias & std$(\hat{\theta}_h)$ & RMSE & $C_{b}$(95\%) \\
            \cmidrule(lr){1-1}\cmidrule(lr){2-6}\cmidrule(lr){7-10}\cmidrule(lr){11-14}\cmidrule(lr){15-18}
            125 & 0.037 & 0.362 & 0.364 & 0.969 & 0.947 & 0.067 & 0.263 & 0.271 & 0.309 & 0.054 & 0.261 & 0.266 & 0.860 & 0.007 & 0.272 & 0.272 & 0.946 \\
            250 & 0.008 & 0.228 & 0.228 & 0.955 & 0.944 & 0.045 & 0.191 & 0.196 & 0.288 & 0.034 & 0.190 & 0.193 & 0.814 & 0.003 & 0.196 & 0.196 & 0.941 \\
            500 & 0.010 & 0.150 & 0.150 & 0.945 & 0.943 & 0.038 & 0.136 & 0.141 & 0.245 & 0.026 & 0.137 & 0.139 & 0.830 & 0.005 & 0.138 & 0.138 & 0.951 \\
            1'000 & 0.006 & 0.108 & 0.108 & 0.934 & 0.934 & 0.034 & 0.099 & 0.105 & 0.184 & 0.020 & 0.101 & 0.103 & 0.828 & 0.007 & 0.101 & 0.101 & 0.938 \\
            8'000 & 0.000 & 0.036 & 0.036 & 0.954 & 0.953 & 0.036 & 0.034 & 0.050 & 0.050 & 0.008 & 0.035 & 0.036 & 0.916 & 0.010 & 0.035 & 0.037 & 0.936 \\
            \midrule
            \multicolumn{18}{c}{$h=3$, $\theta_0^{(h)}=0.0035$} \\
             & \multicolumn{5}{c}{DML} & \multicolumn{4}{c}{RA} & \multicolumn{4}{c}{DR} & \multicolumn{4}{c}{LP} \\
            T & Bias & std$(\hat{\theta}_h)$ & RMSE & $C_{b}$(95\%) & $C_{a}$(95\%) & Bias & std$(\hat{\theta}_h)$ & RMSE & $C_{b}$(95\%) & Bias & std$(\hat{\theta}_h)$ & RMSE & $C_{b}$(95\%) & Bias & std$(\hat{\theta}_h)$ & RMSE & $C_{b}$(95\%) \\
            \cmidrule(lr){1-1}\cmidrule(lr){2-6}\cmidrule(lr){7-10}\cmidrule(lr){11-14}\cmidrule(lr){15-18}
            125 & 0.032 & 0.362 & 0.363 & 0.950 & 0.936 & 0.042 & 0.282 & 0.285 & 0.331 & 0.034 & 0.277 & 0.279 & 0.811 & 0.001 & 0.285 & 0.285 & 0.932 \\
            250 & 0.009 & 0.238 & 0.239 & 0.935 & 0.929 & 0.027 & 0.196 & 0.198 & 0.284 & 0.020 & 0.197 & 0.198 & 0.790 & -0.006 & 0.199 & 0.199 & 0.940 \\
            500 & 0.005 & 0.151 & 0.151 & 0.947 & 0.945 & 0.023 & 0.134 & 0.136 & 0.231 & 0.015 & 0.135 & 0.136 & 0.840 & -0.003 & 0.135 & 0.135 & 0.957 \\
            1'000 & 0.003 & 0.106 & 0.106 & 0.939 & 0.937 & 0.022 & 0.096 & 0.098 & 0.233 & 0.013 & 0.098 & 0.099 & 0.852 & 0.000 & 0.098 & 0.098 & 0.947 \\
            8'000 & 0.000 & 0.037 & 0.037 & 0.945 & 0.945 & 0.022 & 0.034 & 0.040 & 0.054 & 0.005 & 0.036 & 0.036 & 0.925 & 0.002 & 0.035 & 0.035 & 0.944 \\
            \midrule
            \multicolumn{18}{c}{$h=4$, $\theta_0^{(h)}=0.0009$} \\
             & \multicolumn{5}{c}{DML} & \multicolumn{4}{c}{RA} & \multicolumn{4}{c}{DR} & \multicolumn{4}{c}{LP} \\
            T & Bias & std$(\hat{\theta}_h)$ & RMSE & $C_{b}$(95\%) & $C_{a}$(95\%) & Bias & std$(\hat{\theta}_h)$ & RMSE & $C_{b}$(95\%) & Bias & std$(\hat{\theta}_h)$ & RMSE & $C_{b}$(95\%) & Bias & std$(\hat{\theta}_h)$ & RMSE & $C_{b}$(95\%) \\
            \cmidrule(lr){1-1}\cmidrule(lr){2-6}\cmidrule(lr){7-10}\cmidrule(lr){11-14}\cmidrule(lr){15-18}
            125 & 0.023 & 0.376 & 0.376 & 0.945 & 0.933 & 0.034 & 0.290 & 0.292 & 0.315 & 0.026 & 0.286 & 0.287 & 0.777 & -0.001 & 0.296 & 0.296 & 0.925 \\
            250 & 0.006 & 0.228 & 0.228 & 0.944 & 0.938 & 0.025 & 0.190 & 0.192 & 0.316 & 0.020 & 0.191 & 0.192 & 0.810 & 0.003 & 0.195 & 0.195 & 0.940 \\
            500 & 0.006 & 0.155 & 0.155 & 0.940 & 0.936 & 0.017 & 0.141 & 0.142 & 0.252 & 0.011 & 0.142 & 0.142 & 0.815 & -0.005 & 0.145 & 0.145 & 0.938 \\
            1'000 & 0.006 & 0.103 & 0.103 & 0.956 & 0.955 & 0.020 & 0.093 & 0.095 & 0.185 & 0.012 & 0.095 & 0.096 & 0.871 & 0.001 & 0.095 & 0.095 & 0.951 \\
            8'000 & -0.000 & 0.036 & 0.036 & 0.953 & 0.953 & 0.016 & 0.033 & 0.037 & 0.065 & 0.003 & 0.035 & 0.035 & 0.936 & 0.000 & 0.035 & 0.035 & 0.956 \\
            \midrule
            \multicolumn{18}{c}{$h=5$, $\theta_0^{(h)}=0.0002$} \\
             & \multicolumn{5}{c}{DML} & \multicolumn{4}{c}{RA} & \multicolumn{4}{c}{DR} & \multicolumn{4}{c}{LP} \\
            T & Bias & std$(\hat{\theta}_h)$ & RMSE & $C_{b}$(95\%) & $C_{a}$(95\%) & Bias & std$(\hat{\theta}_h)$ & RMSE & $C_{b}$(95\%) & Bias & std$(\hat{\theta}_h)$ & RMSE & $C_{b}$(95\%) & Bias & std$(\hat{\theta}_h)$ & RMSE & $C_{b}$(95\%) \\
            \cmidrule(lr){1-1}\cmidrule(lr){2-6}\cmidrule(lr){7-10}\cmidrule(lr){11-14}\cmidrule(lr){15-18}
            125 & 0.024 & 0.370 & 0.371 & 0.944 & 0.930 & 0.022 & 0.283 & 0.284 & 0.305 & 0.017 & 0.281 & 0.282 & 0.909 & -0.001 & 0.285 & 0.285 & 0.935 \\
            250 & 0.012 & 0.237 & 0.238 & 0.936 & 0.934 & 0.016 & 0.194 & 0.194 & 0.304 & 0.013 & 0.193 & 0.193 & 0.796 & -0.004 & 0.197 & 0.197 & 0.946 \\
            500 & 0.006 & 0.152 & 0.152 & 0.960 & 0.954 & 0.017 & 0.135 & 0.136 & 0.226 & 0.012 & 0.136 & 0.136 & 0.851 & -0.003 & 0.137 & 0.137 & 0.957 \\
            1'000 & 0.006 & 0.104 & 0.105 & 0.953 & 0.950 & 0.017 & 0.096 & 0.097 & 0.171 & 0.011 & 0.097 & 0.097 & 0.869 & 0.000 & 0.096 & 0.096 & 0.953 \\
            8'000 & 0.000 & 0.036 & 0.036 & 0.950 & 0.950 & 0.013 & 0.033 & 0.036 & 0.068 & 0.003 & 0.035 & 0.035 & 0.931 & 0.000 & 0.034 & 0.034 & 0.960 \\
            \bottomrule\end{tabular}
            \begin{tablenotes}[flushleft]
            \item \textsc{Note}:
                        The table depicts simulation results across $N=1'000$ draws obtained for the scenario with
                        one stochastic process. Except for the LP estimator, nuisance functions are estimated with random forest.
                        For the DML estimator we use 10-fold cross-fitting and set $k_T=T/10$. For sample size $T=125$, probabilities are winsorized at 1\%.
                        The parameters of the data generating process are empirically calibrated using monthly U.S. data from 1982 to 2012 obtained from the empirical study in \cite{Angrist2018}.
                        $C_a$($\cdot$) and $C_b$($\cdot$) in the tables denote the coverage at the given confidence level using asymptotic and fixed-bandwidth critical values, respectively.
                        \end{tablenotes}
            \end{threeparttable}}
    \end{table}

    \begin{table}[htbp]
    \caption{Simulation results for a baseline nonlinear DGP ($n=12$, $\sigma_\epsilon=1.0$) and gradient boosting nuisance function estimates}
    \label{tab:results_boosting}
    \resizebox{\textwidth}{!}{\begin{threeparttable}\begin{tabular}{rrrrrrrrrrrrrrrrrr}
    \toprule
    \multicolumn{18}{c}{$h=0$, $\theta_0^{(h)}=0.3321$} \\
            & \multicolumn{5}{c}{DML} & \multicolumn{4}{c}{RA} & \multicolumn{4}{c}{DR} & \multicolumn{4}{c}{LP} \\
    T & Bias & std$(\hat{\theta}_h)$ & RMSE & $C_{b}$(95\%) & $C_{a}$(95\%) & Bias & std$(\hat{\theta}_h)$ & RMSE & $C_{b}$(95\%) & Bias & std$(\hat{\theta}_h)$ & RMSE & $C_{b}$(95\%) & Bias & std$(\hat{\theta}_h)$ & RMSE & $C_{b}$(95\%) \\
    \cmidrule(lr){1-1}\cmidrule(lr){2-6}\cmidrule(lr){7-10}\cmidrule(lr){11-14}\cmidrule(lr){15-18}
    125 & 0.166 & 1.285 & 1.295 & 0.937 & 0.937 & 0.071 & 0.525 & 0.530 & 0.800 & -0.019 & 0.514 & 0.515 & 0.900 & 0.090 & 0.357 & 0.368 & 0.871 \\
    250 & 0.090 & 0.631 & 0.638 & 0.922 & 0.922 & 0.111 & 0.391 & 0.406 & 0.732 & 0.081 & 0.313 & 0.324 & 0.883 & 0.128 & 0.251 & 0.282 & 0.847 \\
    500 & 0.043 & 0.269 & 0.273 & 0.961 & 0.955 & 0.280 & 0.242 & 0.370 & 0.398 & 0.158 & 0.199 & 0.254 & 0.805 & 0.135 & 0.173 & 0.220 & 0.829 \\
    1'000 & 0.028 & 0.158 & 0.160 & 0.942 & 0.939 & 0.073 & 0.189 & 0.203 & 0.662 & 0.038 & 0.141 & 0.146 & 0.919 & 0.138 & 0.126 & 0.187 & 0.730 \\
    8'000 & 0.008 & 0.040 & 0.041 & 0.960 & 0.960 & 0.115 & 0.042 & 0.122 & 0.089 & 0.014 & 0.040 & 0.043 & 0.952 & 0.131 & 0.046 & 0.139 & 0.141 \\
    \midrule
    \multicolumn{18}{c}{$h=1$, $\theta_0^{(h)}=0.1992$} \\
            & \multicolumn{5}{c}{DML} & \multicolumn{4}{c}{RA} & \multicolumn{4}{c}{DR} & \multicolumn{4}{c}{LP} \\
    T & Bias & std$(\hat{\theta}_h)$ & RMSE & $C_{b}$(95\%) & $C_{a}$(95\%) & Bias & std$(\hat{\theta}_h)$ & RMSE & $C_{b}$(95\%) & Bias & std$(\hat{\theta}_h)$ & RMSE & $C_{b}$(95\%) & Bias & std$(\hat{\theta}_h)$ & RMSE & $C_{b}$(95\%) \\
    \cmidrule(lr){1-1}\cmidrule(lr){2-6}\cmidrule(lr){7-10}\cmidrule(lr){11-14}\cmidrule(lr){15-18}
    125 & 0.166 & 1.322 & 1.332 & 0.952 & 0.938 & 0.090 & 0.547 & 0.554 & 0.751 & 0.009 & 0.514 & 0.514 & 0.898 & 0.060 & 0.362 & 0.367 & 0.923 \\
    250 & 0.105 & 0.641 & 0.650 & 0.955 & 0.954 & 0.135 & 0.398 & 0.421 & 0.701 & 0.090 & 0.315 & 0.327 & 0.877 & 0.111 & 0.265 & 0.287 & 0.914 \\
    500 & 0.067 & 0.266 & 0.275 & 0.959 & 0.957 & 0.320 & 0.236 & 0.397 & 0.286 & 0.175 & 0.196 & 0.263 & 0.777 & 0.123 & 0.193 & 0.229 & 0.887 \\
    1'000 & 0.044 & 0.154 & 0.160 & 0.953 & 0.952 & 0.097 & 0.187 & 0.211 & 0.553 & 0.052 & 0.137 & 0.146 & 0.916 & 0.139 & 0.133 & 0.193 & 0.792 \\
    8'000 & 0.016 & 0.042 & 0.045 & 0.937 & 0.936 & 0.134 & 0.044 & 0.141 & 0.013 & 0.021 & 0.042 & 0.047 & 0.928 & 0.144 & 0.053 & 0.153 & 0.183 \\
    \midrule
    \multicolumn{18}{c}{$h=2$, $\theta_0^{(h)}=0.1195$} \\
            & \multicolumn{5}{c}{DML} & \multicolumn{4}{c}{RA} & \multicolumn{4}{c}{DR} & \multicolumn{4}{c}{LP} \\
    T & Bias & std$(\hat{\theta}_h)$ & RMSE & $C_{b}$(95\%) & $C_{a}$(95\%) & Bias & std$(\hat{\theta}_h)$ & RMSE & $C_{b}$(95\%) & Bias & std$(\hat{\theta}_h)$ & RMSE & $C_{b}$(95\%) & Bias & std$(\hat{\theta}_h)$ & RMSE & $C_{b}$(95\%) \\
    \cmidrule(lr){1-1}\cmidrule(lr){2-6}\cmidrule(lr){7-10}\cmidrule(lr){11-14}\cmidrule(lr){15-18}
    125 & 0.153 & 1.439 & 1.447 & 0.959 & 0.935 & 0.104 & 0.561 & 0.570 & 0.732 & -0.006 & 0.541 & 0.541 & 0.913 & 0.031 & 0.410 & 0.411 & 0.931 \\
    250 & 0.107 & 0.723 & 0.731 & 0.944 & 0.937 & 0.149 & 0.426 & 0.451 & 0.662 & 0.082 & 0.329 & 0.339 & 0.887 & 0.078 & 0.306 & 0.316 & 0.927 \\
    500 & 0.072 & 0.300 & 0.308 & 0.952 & 0.952 & 0.310 & 0.266 & 0.408 & 0.346 & 0.171 & 0.219 & 0.278 & 0.800 & 0.104 & 0.221 & 0.244 & 0.911 \\
    1'000 & 0.051 & 0.176 & 0.183 & 0.930 & 0.926 & 0.104 & 0.210 & 0.234 & 0.478 & 0.058 & 0.156 & 0.167 & 0.898 & 0.128 & 0.160 & 0.204 & 0.854 \\
    8'000 & 0.015 & 0.049 & 0.051 & 0.943 & 0.943 & 0.131 & 0.051 & 0.141 & 0.014 & 0.019 & 0.048 & 0.052 & 0.935 & 0.128 & 0.055 & 0.140 & 0.352 \\
    \midrule
    \multicolumn{18}{c}{$h=3$, $\theta_0^{(h)}=0.0717$} \\
            & \multicolumn{5}{c}{DML} & \multicolumn{4}{c}{RA} & \multicolumn{4}{c}{DR} & \multicolumn{4}{c}{LP} \\
    T & Bias & std$(\hat{\theta}_h)$ & RMSE & $C_{b}$(95\%) & $C_{a}$(95\%) & Bias & std$(\hat{\theta}_h)$ & RMSE & $C_{b}$(95\%) & Bias & std$(\hat{\theta}_h)$ & RMSE & $C_{b}$(95\%) & Bias & std$(\hat{\theta}_h)$ & RMSE & $C_{b}$(95\%) \\
    \cmidrule(lr){1-1}\cmidrule(lr){2-6}\cmidrule(lr){7-10}\cmidrule(lr){11-14}\cmidrule(lr){15-18}
    125 & 0.132 & 1.513 & 1.519 & 0.958 & 0.944 & 0.124 & 0.622 & 0.634 & 0.694 & 0.007 & 0.591 & 0.591 & 0.894 & 0.012 & 0.448 & 0.448 & 0.937 \\
    250 & 0.097 & 0.797 & 0.803 & 0.933 & 0.930 & 0.143 & 0.476 & 0.497 & 0.611 & 0.074 & 0.364 & 0.371 & 0.901 & 0.058 & 0.340 & 0.345 & 0.934 \\
    500 & 0.078 & 0.327 & 0.337 & 0.953 & 0.946 & 0.289 & 0.287 & 0.407 & 0.391 & 0.163 & 0.238 & 0.289 & 0.840 & 0.087 & 0.249 & 0.264 & 0.915 \\
    1'000 & 0.054 & 0.196 & 0.204 & 0.949 & 0.946 & 0.113 & 0.228 & 0.255 & 0.442 & 0.058 & 0.168 & 0.178 & 0.915 & 0.112 & 0.178 & 0.210 & 0.902 \\
    8'000 & 0.014 & 0.055 & 0.057 & 0.946 & 0.946 & 0.128 & 0.058 & 0.141 & 0.021 & 0.017 & 0.055 & 0.057 & 0.937 & 0.117 & 0.061 & 0.131 & 0.479 \\
    \midrule
    \multicolumn{18}{c}{$h=4$, $\theta_0^{(h)}=0.0430$} \\
            & \multicolumn{5}{c}{DML} & \multicolumn{4}{c}{RA} & \multicolumn{4}{c}{DR} & \multicolumn{4}{c}{LP} \\
    T & Bias & std$(\hat{\theta}_h)$ & RMSE & $C_{b}$(95\%) & $C_{a}$(95\%) & Bias & std$(\hat{\theta}_h)$ & RMSE & $C_{b}$(95\%) & Bias & std$(\hat{\theta}_h)$ & RMSE & $C_{b}$(95\%) & Bias & std$(\hat{\theta}_h)$ & RMSE & $C_{b}$(95\%) \\
    \cmidrule(lr){1-1}\cmidrule(lr){2-6}\cmidrule(lr){7-10}\cmidrule(lr){11-14}\cmidrule(lr){15-18}
    125 & 0.150 & 1.681 & 1.688 & 0.960 & 0.937 & 0.125 & 0.678 & 0.689 & 0.695 & 0.041 & 0.696 & 0.697 & 0.902 & 0.021 & 0.496 & 0.497 & 0.935 \\
    250 & 0.069 & 0.867 & 0.870 & 0.927 & 0.924 & 0.142 & 0.510 & 0.529 & 0.591 & 0.077 & 0.405 & 0.412 & 0.873 & 0.049 & 0.377 & 0.380 & 0.922 \\
    500 & 0.068 & 0.363 & 0.369 & 0.952 & 0.940 & 0.245 & 0.299 & 0.387 & 0.447 & 0.148 & 0.261 & 0.300 & 0.852 & 0.062 & 0.270 & 0.277 & 0.929 \\
    1'000 & 0.051 & 0.219 & 0.225 & 0.942 & 0.942 & 0.110 & 0.247 & 0.271 & 0.408 & 0.054 & 0.181 & 0.189 & 0.934 & 0.092 & 0.192 & 0.213 & 0.921 \\
    8'000 & 0.014 & 0.061 & 0.063 & 0.948 & 0.948 & 0.126 & 0.063 & 0.141 & 0.026 & 0.017 & 0.060 & 0.063 & 0.939 & 0.100 & 0.067 & 0.121 & 0.662 \\
    \midrule
    \multicolumn{18}{c}{$h=5$, $\theta_0^{(h)}=0.0258$} \\
            & \multicolumn{5}{c}{DML} & \multicolumn{4}{c}{RA} & \multicolumn{4}{c}{DR} & \multicolumn{4}{c}{LP} \\
    T & Bias & std$(\hat{\theta}_h)$ & RMSE & $C_{b}$(95\%) & $C_{a}$(95\%) & Bias & std$(\hat{\theta}_h)$ & RMSE & $C_{b}$(95\%) & Bias & std$(\hat{\theta}_h)$ & RMSE & $C_{b}$(95\%) & Bias & std$(\hat{\theta}_h)$ & RMSE & $C_{b}$(95\%) \\
    \cmidrule(lr){1-1}\cmidrule(lr){2-6}\cmidrule(lr){7-10}\cmidrule(lr){11-14}\cmidrule(lr){15-18}
    125 & 0.099 & 1.794 & 1.796 & 0.949 & 0.927 & 0.148 & 0.705 & 0.720 & 0.669 & 0.082 & 0.729 & 0.734 & 0.898 & 0.032 & 0.551 & 0.552 & 0.921 \\
    250 & 0.077 & 0.937 & 0.940 & 0.931 & 0.926 & 0.119 & 0.532 & 0.545 & 0.584 & 0.068 & 0.421 & 0.427 & 0.884 & 0.036 & 0.394 & 0.395 & 0.918 \\
    500 & 0.063 & 0.384 & 0.389 & 0.951 & 0.943 & 0.216 & 0.308 & 0.376 & 0.491 & 0.131 & 0.271 & 0.301 & 0.862 & 0.047 & 0.283 & 0.287 & 0.943 \\
    1'000 & 0.039 & 0.233 & 0.236 & 0.946 & 0.938 & 0.097 & 0.256 & 0.274 & 0.445 & 0.047 & 0.193 & 0.198 & 0.927 & 0.071 & 0.202 & 0.214 & 0.933 \\
    8'000 & 0.010 & 0.068 & 0.068 & 0.951 & 0.951 & 0.116 & 0.071 & 0.136 & 0.040 & 0.013 & 0.067 & 0.068 & 0.941 & 0.082 & 0.078 & 0.113 & 0.789 \\
    \bottomrule\end{tabular}
    \begin{tablenotes}[flushleft]
    \item \textsc{Note}: The table depicts simulation results across $N=1'000$ draws obtained for the scenario with
            one stochastic process. Except for the LP estimator, nuisance functions are estimated with gradient boosting.
            For the DML estimator we use 10-fold cross-fitting and set $k_T=T/10$. For sample size $T=125$, probabilities are winsorized at 1\%.
            The parameters of the data generating process are $n=12$,
            $\sigma_\epsilon=1.0$, $\gamma=0.6$, $p=2$,
            $q=1$, $\sigma_u=1.0$, $\alpha_A=0.3$,
            $\alpha_M=0.3$, $\rho_A=0.35$, $\rho_M=0.7$,
            $\beta_1=0.3$, $\beta_2=0.5$.
            $C_a$($\cdot$) and $C_b$($\cdot$) in the tables denote the coverage at the given confidence level using asymptotic and fixed-bandwidth critical values, respectively.
    \end{tablenotes}
    \end{threeparttable}}
    \end{table}

    \begin{table}[htbp]
    \caption{Simulation results for a linear DGP ($n=12$, $\sigma_\epsilon=1.0$) and random forest nuisance function estimates}
    \label{tab:results_linear_constant}
    \resizebox{\textwidth}{!}{\begin{threeparttable}\begin{tabular}{rrrrrrrrrrrrrrrrrr}
    \toprule
    \multicolumn{18}{c}{$h=0$, $\theta_0^{(h)}=0.3321$} \\
            & \multicolumn{5}{c}{DML} & \multicolumn{4}{c}{RA} & \multicolumn{4}{c}{DR} & \multicolumn{4}{c}{LP} \\
    T & Bias & std$(\hat{\theta}_h)$ & RMSE & $C_{b}$(95\%) & $C_{a}$(95\%) & Bias & std$(\hat{\theta}_h)$ & RMSE & $C_{b}$(95\%) & Bias & std$(\hat{\theta}_h)$ & RMSE & $C_{b}$(95\%) & Bias & std$(\hat{\theta}_h)$ & RMSE & $C_{b}$(95\%) \\
    \cmidrule(lr){1-1}\cmidrule(lr){2-6}\cmidrule(lr){7-10}\cmidrule(lr){11-14}\cmidrule(lr){15-18}
    125 & 0.121 & 1.375 & 1.381 & 0.927 & 0.918 & 0.851 & 0.550 & 1.013 & 0.270 & 0.669 & 0.498 & 0.834 & 0.503 & -0.006 & 0.219 & 0.219 & 0.911 \\
    250 & 0.060 & 1.043 & 1.044 & 0.944 & 0.935 & 0.675 & 0.325 & 0.749 & 0.095 & 0.525 & 0.292 & 0.601 & 0.334 & 0.004 & 0.146 & 0.146 & 0.956 \\
    500 & 0.069 & 0.348 & 0.355 & 0.959 & 0.941 & 0.537 & 0.203 & 0.574 & 0.031 & 0.388 & 0.185 & 0.430 & 0.240 & -0.001 & 0.106 & 0.106 & 0.947 \\
    1'000 & 0.062 & 0.164 & 0.176 & 0.944 & 0.942 & 0.445 & 0.126 & 0.463 & 0.000 & 0.280 & 0.116 & 0.303 & 0.186 & -0.000 & 0.074 & 0.074 & 0.956 \\
    8'000 & 0.019 & 0.040 & 0.044 & 0.942 & 0.942 & 0.381 & 0.066 & 0.386 & 0.000 & 0.108 & 0.038 & 0.114 & 0.152 & -0.001 & 0.025 & 0.025 & 0.940 \\
    \midrule
    \multicolumn{18}{c}{$h=1$, $\theta_0^{(h)}=0.1992$} \\
            & \multicolumn{5}{c}{DML} & \multicolumn{4}{c}{RA} & \multicolumn{4}{c}{DR} & \multicolumn{4}{c}{LP} \\
    T & Bias & std$(\hat{\theta}_h)$ & RMSE & $C_{b}$(95\%) & $C_{a}$(95\%) & Bias & std$(\hat{\theta}_h)$ & RMSE & $C_{b}$(95\%) & Bias & std$(\hat{\theta}_h)$ & RMSE & $C_{b}$(95\%) & Bias & std$(\hat{\theta}_h)$ & RMSE & $C_{b}$(95\%) \\
    \cmidrule(lr){1-1}\cmidrule(lr){2-6}\cmidrule(lr){7-10}\cmidrule(lr){11-14}\cmidrule(lr){15-18}
    125 & 0.173 & 1.539 & 1.549 & 0.916 & 0.900 & 0.893 & 0.590 & 1.070 & 0.287 & 0.708 & 0.542 & 0.892 & 0.485 & -0.030 & 0.317 & 0.319 & 0.918 \\
    250 & 0.067 & 1.273 & 1.275 & 0.943 & 0.921 & 0.732 & 0.364 & 0.817 & 0.119 & 0.567 & 0.332 & 0.657 & 0.354 & -0.005 & 0.224 & 0.224 & 0.934 \\
    500 & 0.105 & 0.372 & 0.386 & 0.945 & 0.936 & 0.597 & 0.223 & 0.638 & 0.023 & 0.433 & 0.207 & 0.480 & 0.262 & -0.008 & 0.151 & 0.151 & 0.947 \\
    1'000 & 0.094 & 0.195 & 0.216 & 0.913 & 0.910 & 0.509 & 0.148 & 0.530 & 0.002 & 0.324 & 0.139 & 0.352 & 0.200 & 0.002 & 0.109 & 0.110 & 0.945 \\
    8'000 & 0.028 & 0.050 & 0.057 & 0.922 & 0.922 & 0.447 & 0.073 & 0.453 & 0.000 & 0.128 & 0.046 & 0.136 & 0.163 & 0.001 & 0.038 & 0.038 & 0.950 \\
    \midrule
    \multicolumn{18}{c}{$h=2$, $\theta_0^{(h)}=0.1195$} \\
            & \multicolumn{5}{c}{DML} & \multicolumn{4}{c}{RA} & \multicolumn{4}{c}{DR} & \multicolumn{4}{c}{LP} \\
    T & Bias & std$(\hat{\theta}_h)$ & RMSE & $C_{b}$(95\%) & $C_{a}$(95\%) & Bias & std$(\hat{\theta}_h)$ & RMSE & $C_{b}$(95\%) & Bias & std$(\hat{\theta}_h)$ & RMSE & $C_{b}$(95\%) & Bias & std$(\hat{\theta}_h)$ & RMSE & $C_{b}$(95\%) \\
    \cmidrule(lr){1-1}\cmidrule(lr){2-6}\cmidrule(lr){7-10}\cmidrule(lr){11-14}\cmidrule(lr){15-18}
    125 & 0.217 & 1.495 & 1.511 & 0.917 & 0.892 & 0.818 & 0.659 & 1.050 & 0.372 & 0.649 & 0.615 & 0.894 & 0.610 & -0.042 & 0.454 & 0.456 & 0.918 \\
    250 & 0.082 & 1.058 & 1.061 & 0.917 & 0.905 & 0.656 & 0.418 & 0.778 & 0.238 & 0.510 & 0.393 & 0.644 & 0.546 & -0.014 & 0.326 & 0.326 & 0.931 \\
    500 & 0.114 & 0.412 & 0.427 & 0.944 & 0.932 & 0.547 & 0.270 & 0.610 & 0.102 & 0.399 & 0.257 & 0.475 & 0.483 & -0.009 & 0.218 & 0.218 & 0.954 \\
    1'000 & 0.102 & 0.237 & 0.258 & 0.898 & 0.895 & 0.470 & 0.192 & 0.508 & 0.033 & 0.303 & 0.185 & 0.355 & 0.429 & 0.004 & 0.165 & 0.165 & 0.936 \\
    8'000 & 0.027 & 0.066 & 0.072 & 0.926 & 0.924 & 0.420 & 0.084 & 0.428 & 0.000 & 0.119 & 0.061 & 0.134 & 0.457 & 0.000 & 0.055 & 0.055 & 0.957 \\
    \midrule
    \multicolumn{18}{c}{$h=3$, $\theta_0^{(h)}=0.0717$} \\
            & \multicolumn{5}{c}{DML} & \multicolumn{4}{c}{RA} & \multicolumn{4}{c}{DR} & \multicolumn{4}{c}{LP} \\
    T & Bias & std$(\hat{\theta}_h)$ & RMSE & $C_{b}$(95\%) & $C_{a}$(95\%) & Bias & std$(\hat{\theta}_h)$ & RMSE & $C_{b}$(95\%) & Bias & std$(\hat{\theta}_h)$ & RMSE & $C_{b}$(95\%) & Bias & std$(\hat{\theta}_h)$ & RMSE & $C_{b}$(95\%) \\
    \cmidrule(lr){1-1}\cmidrule(lr){2-6}\cmidrule(lr){7-10}\cmidrule(lr){11-14}\cmidrule(lr){15-18}
    125 & 0.221 & 1.641 & 1.656 & 0.920 & 0.895 & 0.696 & 0.709 & 0.993 & 0.431 & 0.554 & 0.669 & 0.869 & 0.724 & -0.051 & 0.562 & 0.564 & 0.917 \\
    250 & 0.055 & 1.522 & 1.523 & 0.912 & 0.899 & 0.569 & 0.488 & 0.750 & 0.346 & 0.438 & 0.463 & 0.637 & 0.660 & -0.019 & 0.418 & 0.418 & 0.913 \\
    500 & 0.113 & 0.462 & 0.475 & 0.951 & 0.937 & 0.478 & 0.310 & 0.570 & 0.220 & 0.352 & 0.301 & 0.463 & 0.642 & -0.006 & 0.281 & 0.281 & 0.945 \\
    1'000 & 0.106 & 0.275 & 0.295 & 0.913 & 0.902 & 0.413 & 0.227 & 0.471 & 0.100 & 0.270 & 0.223 & 0.350 & 0.590 & 0.007 & 0.209 & 0.209 & 0.935 \\
    8'000 & 0.027 & 0.082 & 0.086 & 0.938 & 0.937 & 0.376 & 0.092 & 0.387 & 0.000 & 0.107 & 0.077 & 0.132 & 0.638 & 0.000 & 0.071 & 0.071 & 0.952 \\
    \midrule
    \multicolumn{18}{c}{$h=4$, $\theta_0^{(h)}=0.0430$} \\
            & \multicolumn{5}{c}{DML} & \multicolumn{4}{c}{RA} & \multicolumn{4}{c}{DR} & \multicolumn{4}{c}{LP} \\
    T & Bias & std$(\hat{\theta}_h)$ & RMSE & $C_{b}$(95\%) & $C_{a}$(95\%) & Bias & std$(\hat{\theta}_h)$ & RMSE & $C_{b}$(95\%) & Bias & std$(\hat{\theta}_h)$ & RMSE & $C_{b}$(95\%) & Bias & std$(\hat{\theta}_h)$ & RMSE & $C_{b}$(95\%) \\
    \cmidrule(lr){1-1}\cmidrule(lr){2-6}\cmidrule(lr){7-10}\cmidrule(lr){11-14}\cmidrule(lr){15-18}
    125 & 0.283 & 1.587 & 1.612 & 0.892 & 0.892 & 0.594 & 0.744 & 0.952 & 0.530 & 0.478 & 0.714 & 0.860 & 0.807 & -0.023 & 0.660 & 0.661 & 0.915 \\
    250 & 0.092 & 1.054 & 1.058 & 0.907 & 0.897 & 0.476 & 0.545 & 0.724 & 0.409 & 0.369 & 0.524 & 0.640 & 0.715 & -0.016 & 0.493 & 0.493 & 0.917 \\
    500 & 0.103 & 0.489 & 0.499 & 0.961 & 0.935 & 0.396 & 0.345 & 0.526 & 0.335 & 0.291 & 0.337 & 0.446 & 0.733 & -0.016 & 0.326 & 0.326 & 0.957 \\
    1'000 & 0.102 & 0.318 & 0.334 & 0.914 & 0.909 & 0.357 & 0.261 & 0.442 & 0.194 & 0.237 & 0.257 & 0.350 & 0.693 & 0.010 & 0.247 & 0.247 & 0.937 \\
    8'000 & 0.023 & 0.097 & 0.100 & 0.942 & 0.941 & 0.329 & 0.100 & 0.343 & 0.002 & 0.092 & 0.091 & 0.130 & 0.756 & -0.001 & 0.084 & 0.084 & 0.948 \\
    \midrule
    \multicolumn{18}{c}{$h=5$, $\theta_0^{(h)}=0.0258$} \\
            & \multicolumn{5}{c}{DML} & \multicolumn{4}{c}{RA} & \multicolumn{4}{c}{DR} & \multicolumn{4}{c}{LP} \\
    T & Bias & std$(\hat{\theta}_h)$ & RMSE & $C_{b}$(95\%) & $C_{a}$(95\%) & Bias & std$(\hat{\theta}_h)$ & RMSE & $C_{b}$(95\%) & Bias & std$(\hat{\theta}_h)$ & RMSE & $C_{b}$(95\%) & Bias & std$(\hat{\theta}_h)$ & RMSE & $C_{b}$(95\%) \\
    \cmidrule(lr){1-1}\cmidrule(lr){2-6}\cmidrule(lr){7-10}\cmidrule(lr){11-14}\cmidrule(lr){15-18}
    125 & 0.280 & 1.696 & 1.719 & 0.902 & 0.886 & 0.500 & 0.809 & 0.951 & 0.560 & 0.411 & 0.777 & 0.879 & 0.794 & 0.001 & 0.738 & 0.738 & 0.919 \\
    250 & 0.091 & 1.178 & 1.182 & 0.923 & 0.916 & 0.407 & 0.579 & 0.708 & 0.489 & 0.313 & 0.558 & 0.640 & 0.765 & -0.009 & 0.532 & 0.532 & 0.927 \\
    500 & 0.091 & 0.531 & 0.539 & 0.958 & 0.938 & 0.327 & 0.377 & 0.499 & 0.402 & 0.241 & 0.370 & 0.442 & 0.779 & -0.020 & 0.364 & 0.364 & 0.953 \\
    1'000 & 0.094 & 0.348 & 0.360 & 0.913 & 0.912 & 0.303 & 0.283 & 0.414 & 0.270 & 0.202 & 0.280 & 0.346 & 0.757 & 0.008 & 0.272 & 0.272 & 0.946 \\
    8'000 & 0.019 & 0.109 & 0.111 & 0.934 & 0.934 & 0.282 & 0.107 & 0.302 & 0.009 & 0.077 & 0.102 & 0.128 & 0.821 & -0.003 & 0.096 & 0.096 & 0.946 \\
    \bottomrule\end{tabular}
    \begin{tablenotes}[flushleft]
    \item \textsc{Note}: The table depicts simulation results across $N=1'000$ draws obtained for the scenario with
            one stochastic process. The outcome variable is generated from a linear DGP, i.e. $b(X_t) = 0.5 \sum_{i=1}^5X_{i,t}$ and $\tau(X_t) = \theta_0^{(0)}$. Except for the LP estimator, nuisance functions are estimated with random forest.
            For the DML estimator we use 10-fold cross-fitting and set $k_T=T/10$. For sample size $T=125$, probabilities are winsorized at 1\%.
            The parameters of the data generating process are $n=12$,
            $\sigma_\epsilon=1.0$, $\gamma=0.6$, $p=2$,
            $q=1$, $\sigma_u=1.0$, $\alpha_A=0.3$,
            $\alpha_M=0.3$, $\rho_A=0.35$, $\rho_M=0.7$,
            $\beta_1=0.3$, $\beta_2=0.5$.
            $C_a$($\cdot$) and $C_b$($\cdot$) in the tables denote the coverage at the given confidence level using asymptotic and fixed-bandwidth critical values, respectively.
    \end{tablenotes}
    \end{threeparttable}}
    \end{table}

    \begin{table}[htbp]
    \caption{Simulation results for a linear DGP with interactions ($n=12$, $\sigma_\epsilon=1.0$) and random forest nuisance function estimates}
    \label{tab:results_linear_interaction}
    \resizebox{\textwidth}{!}{\begin{threeparttable}\begin{tabular}{rrrrrrrrrrrrrrrrrr}
            \toprule
            \multicolumn{18}{c}{$h=0$, $\theta_0^{(h)}=0.3321$} \\
                    & \multicolumn{5}{c}{DML} & \multicolumn{4}{c}{RA} & \multicolumn{4}{c}{DR} & \multicolumn{4}{c}{LP} \\
            T & Bias & std$(\hat{\theta}_h)$ & RMSE & $C_{b}$(95\%) & $C_{a}$(95\%) & Bias & std$(\hat{\theta}_h)$ & RMSE & $C_{b}$(95\%) & Bias & std$(\hat{\theta}_h)$ & RMSE & $C_{b}$(95\%) & Bias & std$(\hat{\theta}_h)$ & RMSE & $C_{b}$(95\%) \\
            \cmidrule(lr){1-1}\cmidrule(lr){2-6}\cmidrule(lr){7-10}\cmidrule(lr){11-14}\cmidrule(lr){15-18}
            125 & 0.115 & 1.926 & 1.930 & 0.810 & 0.786 & 0.900 & 1.076 & 1.403 & 0.404 & 0.714 & 1.020 & 1.245 & 0.513 & 0.556 & 0.858 & 1.023 & 0.579 \\
            250 & 0.079 & 1.248 & 1.251 & 0.814 & 0.810 & 0.700 & 0.726 & 1.009 & 0.468 & 0.551 & 0.702 & 0.892 & 0.580 & 0.595 & 0.589 & 0.837 & 0.527 \\
            500 & 0.085 & 0.618 & 0.624 & 0.839 & 0.830 & 0.571 & 0.511 & 0.767 & 0.473 & 0.419 & 0.495 & 0.649 & 0.633 & 0.624 & 0.404 & 0.743 & 0.364 \\
            1'000 & 0.084 & 0.395 & 0.403 & 0.856 & 0.849 & 0.479 & 0.367 & 0.603 & 0.487 & 0.312 & 0.359 & 0.476 & 0.684 & 0.632 & 0.297 & 0.698 & 0.206 \\
            8'000 & 0.025 & 0.066 & 0.070 & 0.947 & 0.945 & 0.416 & 0.086 & 0.425 & 0.000 & 0.124 & 0.064 & 0.140 & 0.460 & 0.329 & 0.053 & 0.333 & 0.000 \\
            \midrule
            \multicolumn{18}{c}{$h=1$, $\theta_0^{(h)}=0.1992$} \\
                    & \multicolumn{5}{c}{DML} & \multicolumn{4}{c}{RA} & \multicolumn{4}{c}{DR} & \multicolumn{4}{c}{LP} \\
            T & Bias & std$(\hat{\theta}_h)$ & RMSE & $C_{b}$(95\%) & $C_{a}$(95\%) & Bias & std$(\hat{\theta}_h)$ & RMSE & $C_{b}$(95\%) & Bias & std$(\hat{\theta}_h)$ & RMSE & $C_{b}$(95\%) & Bias & std$(\hat{\theta}_h)$ & RMSE & $C_{b}$(95\%) \\
            \cmidrule(lr){1-1}\cmidrule(lr){2-6}\cmidrule(lr){7-10}\cmidrule(lr){11-14}\cmidrule(lr){15-18}
            125 & 0.173 & 1.778 & 1.787 & 0.863 & 0.843 & 0.821 & 0.825 & 1.164 & 0.350 & 0.627 & 0.759 & 0.985 & 0.543 & 0.308 & 0.592 & 0.667 & 0.771 \\
            250 & 0.113 & 1.003 & 1.009 & 0.881 & 0.878 & 0.662 & 0.562 & 0.868 & 0.330 & 0.498 & 0.525 & 0.723 & 0.519 & 0.375 & 0.425 & 0.567 & 0.705 \\
            500 & 0.112 & 0.502 & 0.514 & 0.859 & 0.859 & 0.542 & 0.390 & 0.668 & 0.300 & 0.381 & 0.368 & 0.530 & 0.539 & 0.413 & 0.303 & 0.512 & 0.567 \\
            1'000 & 0.103 & 0.300 & 0.318 & 0.868 & 0.863 & 0.464 & 0.276 & 0.540 & 0.267 & 0.289 & 0.264 & 0.392 & 0.571 & 0.433 & 0.222 & 0.487 & 0.341 \\
            8'000 & 0.029 & 0.062 & 0.069 & 0.919 & 0.919 & 0.449 & 0.082 & 0.456 & 0.000 & 0.127 & 0.058 & 0.139 & 0.355 & 0.231 & 0.051 & 0.237 & 0.006 \\
            \midrule
            \multicolumn{18}{c}{$h=2$, $\theta_0^{(h)}=0.1195$} \\
                    & \multicolumn{5}{c}{DML} & \multicolumn{4}{c}{RA} & \multicolumn{4}{c}{DR} & \multicolumn{4}{c}{LP} \\
            T & Bias & std$(\hat{\theta}_h)$ & RMSE & $C_{b}$(95\%) & $C_{a}$(95\%) & Bias & std$(\hat{\theta}_h)$ & RMSE & $C_{b}$(95\%) & Bias & std$(\hat{\theta}_h)$ & RMSE & $C_{b}$(95\%) & Bias & std$(\hat{\theta}_h)$ & RMSE & $C_{b}$(95\%) \\
            \cmidrule(lr){1-1}\cmidrule(lr){2-6}\cmidrule(lr){7-10}\cmidrule(lr){11-14}\cmidrule(lr){15-18}
            125 & 0.202 & 1.573 & 1.586 & 0.901 & 0.894 & 0.707 & 0.730 & 1.016 & 0.383 & 0.532 & 0.668 & 0.854 & 0.627 & 0.151 & 0.564 & 0.584 & 0.905 \\
            250 & 0.042 & 2.297 & 2.297 & 0.920 & 0.912 & 0.555 & 0.497 & 0.745 & 0.312 & 0.412 & 0.465 & 0.621 & 0.607 & 0.225 & 0.415 & 0.472 & 0.895 \\
            500 & 0.111 & 0.467 & 0.480 & 0.917 & 0.909 & 0.461 & 0.341 & 0.574 & 0.280 & 0.320 & 0.323 & 0.455 & 0.595 & 0.266 & 0.301 & 0.402 & 0.818 \\
            1'000 & 0.106 & 0.281 & 0.301 & 0.887 & 0.880 & 0.406 & 0.248 & 0.476 & 0.198 & 0.254 & 0.237 & 0.347 & 0.566 & 0.302 & 0.220 & 0.374 & 0.645 \\
            8'000 & 0.025 & 0.070 & 0.074 & 0.935 & 0.934 & 0.411 & 0.086 & 0.419 & 0.000 & 0.111 & 0.065 & 0.129 & 0.540 & 0.156 & 0.061 & 0.167 & 0.301 \\
            \midrule
            \multicolumn{18}{c}{$h=3$, $\theta_0^{(h)}=0.0717$} \\
                    & \multicolumn{5}{c}{DML} & \multicolumn{4}{c}{RA} & \multicolumn{4}{c}{DR} & \multicolumn{4}{c}{LP} \\
            T & Bias & std$(\hat{\theta}_h)$ & RMSE & $C_{b}$(95\%) & $C_{a}$(95\%) & Bias & std$(\hat{\theta}_h)$ & RMSE & $C_{b}$(95\%) & Bias & std$(\hat{\theta}_h)$ & RMSE & $C_{b}$(95\%) & Bias & std$(\hat{\theta}_h)$ & RMSE & $C_{b}$(95\%) \\
            \cmidrule(lr){1-1}\cmidrule(lr){2-6}\cmidrule(lr){7-10}\cmidrule(lr){11-14}\cmidrule(lr){15-18}
            125 & 0.208 & 1.625 & 1.638 & 0.915 & 0.901 & 0.594 & 0.720 & 0.933 & 0.454 & 0.453 & 0.669 & 0.808 & 0.709 & 0.072 & 0.612 & 0.617 & 0.929 \\
            250 & 0.105 & 0.913 & 0.919 & 0.913 & 0.912 & 0.462 & 0.498 & 0.679 & 0.394 & 0.339 & 0.470 & 0.579 & 0.670 & 0.128 & 0.459 & 0.476 & 0.911 \\
            500 & 0.102 & 0.470 & 0.481 & 0.940 & 0.927 & 0.388 & 0.334 & 0.512 & 0.296 & 0.269 & 0.322 & 0.420 & 0.686 & 0.171 & 0.331 & 0.372 & 0.903 \\
            1'000 & 0.102 & 0.285 & 0.303 & 0.911 & 0.908 & 0.350 & 0.243 & 0.426 & 0.191 & 0.222 & 0.236 & 0.324 & 0.658 & 0.214 & 0.237 & 0.319 & 0.827 \\
            8'000 & 0.023 & 0.082 & 0.085 & 0.934 & 0.933 & 0.363 & 0.092 & 0.375 & 0.001 & 0.097 & 0.078 & 0.124 & 0.706 & 0.104 & 0.074 & 0.128 & 0.709 \\
            \midrule
            \multicolumn{18}{c}{$h=4$, $\theta_0^{(h)}=0.0430$} \\
                    & \multicolumn{5}{c}{DML} & \multicolumn{4}{c}{RA} & \multicolumn{4}{c}{DR} & \multicolumn{4}{c}{LP} \\
            T & Bias & std$(\hat{\theta}_h)$ & RMSE & $C_{b}$(95\%) & $C_{a}$(95\%) & Bias & std$(\hat{\theta}_h)$ & RMSE & $C_{b}$(95\%) & Bias & std$(\hat{\theta}_h)$ & RMSE & $C_{b}$(95\%) & Bias & std$(\hat{\theta}_h)$ & RMSE & $C_{b}$(95\%) \\
            \cmidrule(lr){1-1}\cmidrule(lr){2-6}\cmidrule(lr){7-10}\cmidrule(lr){11-14}\cmidrule(lr){15-18}
            125 & 0.242 & 1.594 & 1.612 & 0.919 & 0.899 & 0.504 & 0.742 & 0.897 & 0.486 & 0.389 & 0.701 & 0.802 & 0.722 & 0.045 & 0.696 & 0.697 & 0.924 \\
            250 & 0.108 & 0.967 & 0.973 & 0.905 & 0.900 & 0.399 & 0.529 & 0.662 & 0.418 & 0.300 & 0.504 & 0.587 & 0.712 & 0.090 & 0.512 & 0.520 & 0.920 \\
            500 & 0.086 & 0.552 & 0.559 & 0.938 & 0.921 & 0.319 & 0.353 & 0.476 & 0.365 & 0.222 & 0.343 & 0.409 & 0.755 & 0.104 & 0.363 & 0.378 & 0.934 \\
            1'000 & 0.090 & 0.299 & 0.313 & 0.938 & 0.935 & 0.294 & 0.251 & 0.386 & 0.251 & 0.187 & 0.244 & 0.308 & 0.751 & 0.146 & 0.255 & 0.294 & 0.911 \\
            8'000 & 0.021 & 0.094 & 0.096 & 0.938 & 0.938 & 0.318 & 0.097 & 0.333 & 0.004 & 0.085 & 0.089 & 0.123 & 0.786 & 0.071 & 0.084 & 0.110 & 0.866 \\
            \midrule
            \multicolumn{18}{c}{$h=5$, $\theta_0^{(h)}=0.0258$} \\
                    & \multicolumn{5}{c}{DML} & \multicolumn{4}{c}{RA} & \multicolumn{4}{c}{DR} & \multicolumn{4}{c}{LP} \\
            T & Bias & std$(\hat{\theta}_h)$ & RMSE & $C_{b}$(95\%) & $C_{a}$(95\%) & Bias & std$(\hat{\theta}_h)$ & RMSE & $C_{b}$(95\%) & Bias & std$(\hat{\theta}_h)$ & RMSE & $C_{b}$(95\%) & Bias & std$(\hat{\theta}_h)$ & RMSE & $C_{b}$(95\%) \\
            \cmidrule(lr){1-1}\cmidrule(lr){2-6}\cmidrule(lr){7-10}\cmidrule(lr){11-14}\cmidrule(lr){15-18}
            125 & 0.242 & 1.669 & 1.686 & 0.899 & 0.899 & 0.428 & 0.786 & 0.894 & 0.516 & 0.342 & 0.745 & 0.820 & 0.757 & 0.043 & 0.770 & 0.771 & 0.912 \\
            250 & 0.106 & 0.977 & 0.983 & 0.930 & 0.920 & 0.328 & 0.552 & 0.642 & 0.482 & 0.244 & 0.528 & 0.582 & 0.749 & 0.047 & 0.539 & 0.541 & 0.934 \\
            500 & 0.084 & 0.533 & 0.540 & 0.943 & 0.923 & 0.269 & 0.370 & 0.457 & 0.403 & 0.190 & 0.362 & 0.409 & 0.775 & 0.067 & 0.386 & 0.392 & 0.941 \\
            1'000 & 0.084 & 0.317 & 0.328 & 0.942 & 0.937 & 0.251 & 0.265 & 0.365 & 0.303 & 0.162 & 0.260 & 0.306 & 0.797 & 0.102 & 0.274 & 0.292 & 0.939 \\
            8'000 & 0.016 & 0.104 & 0.105 & 0.942 & 0.942 & 0.273 & 0.103 & 0.292 & 0.009 & 0.071 & 0.099 & 0.121 & 0.829 & 0.045 & 0.095 & 0.106 & 0.924 \\
            \bottomrule\end{tabular}
    \begin{tablenotes}[flushleft]
    \item \textsc{Note}: The table depicts simulation results across $N=1'000$ draws obtained for the scenario with
            one stochastic process. The outcome variable is generated from a linear DGP, i.e. $b(X_t) = 0.5 \sum_{i=1}^5X_{i,t}$ and $\tau(X_t) = \theta_0^{(0)} + \sum_{i=1}^3X_{i,t} - \sum_{i=4}^5X_{i,t}$. Except for the LP estimator, nuisance functions are estimated with random forest.
            For the DML estimator we use 10-fold cross-fitting and set $k_T=T/10$. For sample size $T=125$, probabilities are winsorized at 1\%.
            The parameters of the data generating process are $n=12$,
            $\sigma_\epsilon=1.0$, $\gamma=0.6$, $p=2$,
            $q=1$, $\sigma_u=1.0$, $\alpha_A=0.3$,
            $\alpha_M=0.3$, $\rho_A=0.35$, $\rho_M=0.7$,
            $\beta_1=0.3$, $\beta_2=0.5$.
            $C_a$($\cdot$) and $C_b$($\cdot$) in the tables denote the coverage at the given confidence level using asymptotic and fixed-bandwidth critical values, respectively.
    \end{tablenotes}
    \end{threeparttable}}
    \end{table}
    
    \begin{figure}[htbp]
        \caption{Estimated cumulative effects of target rate changes on the bond yield curve}
        \centering
        \includegraphics[scale=0.30]{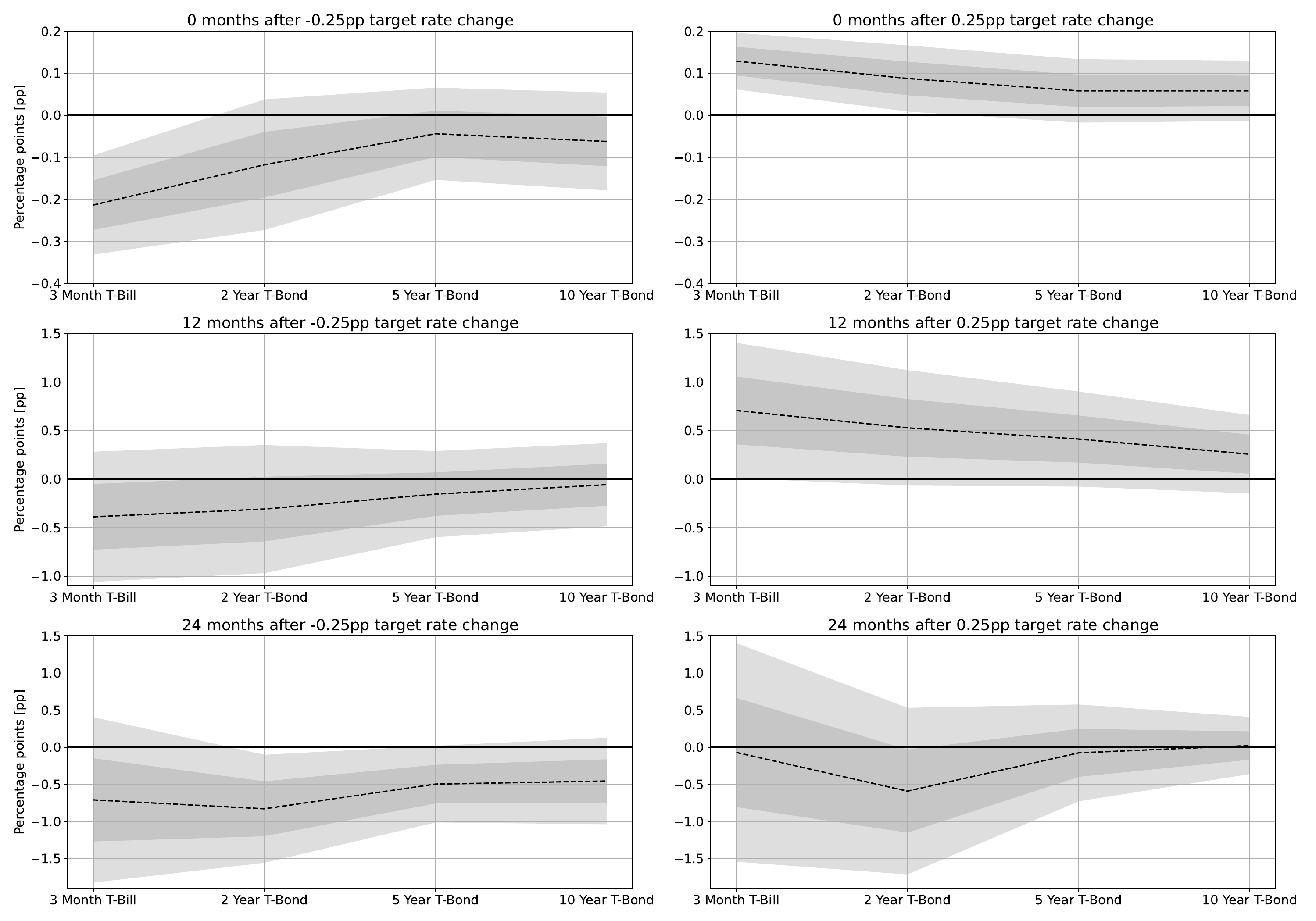}
        \label{fig:IRF_YieldCurve}
        \begin{minipage}{\textwidth}
            \footnotesize\textsc{Note}: The figure shows the estimated cumulative effects of target rate changes on the bond yield curve for the time period July 1989 to December 2008. The left (right) column shows the effect of decreasing (increasing) the target rate by 25 basis points. The estimated effects on the yield curve are depicted for 0 (top row), 12 (middle row), and 24 (bottom row) months after the target rate change. The nuisance functions are estimated by random forest using 10-fold cross-fitting removing $k_T=24$ observations from the estimation sample at the boundary to the inference sample. The shaded areas represent 68\% and 95\% confidence intervals  with fixed-bandwidth critical values \citep{Kiefer2005}. The variances are estimated using bandwidth determined by the procedure of \cite{NeweyWest1994}.
          \end{minipage}
    \end{figure}
    
    % \begin{figure}[htbp]
    %         \caption{Estimated cumulative effects of target rate changes on inflation and industrial production}
    %         \centering
    %         \begin{subfigure}[b]{\textwidth}
    %             \caption{Inflation}
    %             \includegraphics[scale=0.30]{chart_irf_PCE.pdf}
    %             \label{fig:IRF_PCE}
    %         \end{subfigure}
    %         \begin{subfigure}[b]{\textwidth}
    %             \caption{Industrial Production}
    %             \includegraphics[scale=0.30]{chart_irf_IP.pdf}
    %             \label{fig:IRF_UNRATE}
    %         \end{subfigure}
    %         \label{fig:IRF_PCE_IP}
    %         \begin{minipage}{\textwidth}
    %             \footnotesize\textsc{Note}: The figure shows the estimated cumulative effects of target rate changes on (a) the month-over-month seasonally adjusted personal consumption price index and (b) the industrial production for the time period July 1989 to December 2008. The left (right) column shows the effect of decreasing (increasing) the target rate by 25 basis points. The nuisance functions are estimated by random forest using 10-fold cross-fitting removing $k_T=24$ observations from the training sample at the boundary to the evaluation sample. The shaded areas represent 68\% and 95\% confidence intervals  with fixed-bandwidth critical values \citep{Kiefer2005}. The variances are estimated using bandwidth determined by the procedure of \cite{NeweyWest1994}.
    %           \end{minipage}
    % \end{figure}
    
\end{appendix}

\end{document}